\newif\ifHP
\def\blindreview{}
\newif\ifTR
\newcounter{claimcounter}
\crefname{claimcounter}{Claim}{Claims}
\Crefname{algocf}{Algorithm}{Algorithms}
\definecolor{spotblue}{RGB}{31,120,180}
\definecolor{spotpink}{RGB}{255,77,160}
\definecolor{spotorange}{RGB}{255,127,0}
\definecolor{spotpurple}{RGB}{106,61,154}
\definecolor{spotgreen}{RGB}{51,160,44}
\definecolor{spotred}{RGB}{227,26,28}
\definecolor{spotyellowish}{RGB}{196,196,0}
\definecolor{spotgray}{RGB}{80,80,80}
\definecolor{spotlight blue}{RGB}{107,246,255}
\definecolor{spotlight pink}{RGB}{255,154,255}
\definecolor{spotlight orange}{RGB}{255,156,103}
\definecolor{spotlight purple}{RGB}{178,164,255}
\definecolor{spotlight green}{RGB}{167,237,121}
\definecolor{spotlight red}{RGB}{255,104,104}
\definecolor{spotlight yellowish}{RGB}{255,224,64}
\definecolor{spotlight gray}{RGB}{192,192,144}
\tikzset{
  >={Stealth[round,bend]},
}
\tikzstyle{automaton}=[
\tikzset{
  scc/.style={draw=gray,fill=black!10,rounded corners=2mm},
  lstate/.style={state},
  }
\tikzstyle{smallautomaton}=[
\tikzstyle{mediumautomaton}=[
\tikzstyle{cstate}=[state,capsule,text width=,inner xsep=-5pt]
\tikzstyle{dot}=[fill=black,circle,minimum size=4pt,inner sep=0]
\tikzstyle{initial overlay}=[every initial by arrow/.append style={overlay}]
\tikzstyle{unreachable} = [densely dotted]
\tikzstyle{acclabel} = [
\tikzstyle{ltllabel} = [acclabel,fill=darkgreen!20]
\tikzstyle{namelabel} = [
\tikzstyle{matrix of states} = [
\tikzstyle{accset}=[
\tikzstyle{accsquare}=[accset,rectangle,inner sep=1.9pt,rounded corners=0pt]
\tikzset{
  collacc0/.style={fill=spotblue},
  collacc1/.style={fill=spotpink},
  collacc2/.style={fill=spotorange},
  collacc3/.style={fill=spotpurple},
  collacc4/.style={fill=spotgreen},
  collacc5/.style={fill=spotred},
  collacc6/.style={fill=spotyellowish,draw=black,text=black},
  collacc7/.style={fill=spotgray},
  collacc8/.style={fill=spotlight blue,draw=black,text=black},
  collacc9/.style={fill=spotlight pink},
  collacc10/.style={fill=spotlight orange},
  collacc11/.style={fill=spotlight purple},
  collacc12/.style={fill=spotlight green},
  collacc13/.style={fill=spotlight red},
  collacc14/.style={fill=spotlight yellowish},
  collacc15/.style={fill=spotlight gray},
}
\tikzset{
  sacc where/.code={
    \pgfkeyssetvalue{/sacc/where}{#1}
  }
}
\tikzset{
  l/.pic={\node[outer sep=2pt] {#1};},%
  acc/.pic={\node[accset,collacc#1]{#1};},%
  accsq/.pic={\node[accsquare,collacc#1]{#1};},%
  eacc/.pic={\node[accset,collacc#1]{\emptyacc};},%
  eaccsq/.pic={\node[accsquare,collacc#1]{\emptyacc};},%
  sacc/.style = {
    append after command=
      {pic at (\tikzlastnode.\pgfkeysvalueof{/sacc/where}) {acc=#1}}
  },
  saccsq/.style = {
    append after command=
      {pic at (\tikzlastnode.\pgfkeysvalueof{/sacc/where}) {accsq=#1}}
  },
  esacc/.style = {
    append after command=
      {pic at (\tikzlastnode.\pgfkeysvalueof{/sacc/where}) {eacc=#1}}
  },
  esaccsq/.style = {
    append after command=
      {pic at (\tikzlastnode.\pgfkeysvalueof{/sacc/where}) {eaccsq=#1}}
  },
  pics/cacc/.style 2 args={%
    code={\node[accset,collacc#1]{#2};}%
  },%
  pics/caccsq/.style 2 args={%
      code={\node[accsquare,collacc#1]{#2};}%
    },%
  csacc/.style 2 args = {%
      append after command=%
        {pic at (\tikzlastnode.\pgfkeysvalueof{/sacc/where}) {cacc={#1}{#2}}}%
  },%
  csaccsq/.style 2 args = {%
      append after command=%
        {pic at (\tikzlastnode.\pgfkeysvalueof{/sacc/where}) {caccsq={#1}{#2}}}%
  },%
}
\def\markbaseline{-.33em}
\def\tacc#1{\tikz[baseline=\markbaseline]\pic{acc=#1};\xspace}
\def\tcacc#1#2{\tikz[baseline=\markbaseline]\pic{cacc={#1}{#2}};\xspace}
\def\emptyacc{\phantom{0}}
\tikzstyle{removed} = [opacity=0, overlay]
\tikzstyle{SCC} = [
\tikzstyle{trivial} = [dashed]
\tikzstyle{sccname} = [red,anchor=south east,outer xsep=13pt]
\newcommand{\vh}[1]{\textcolor{orange}{\ifmmode \text{[#1]}\else [VH: #1] \fi}}
\newcommand{\ol}[1]{\textcolor{blue}{\ifmmode \text{[OL: #1]}\else [OL: #1] \fi}}
\newcommand{\bs}[1]{\textcolor{teal}{\ifmmode \text{[BS: #1]}\else [BS: #1] \fi}}
\newcommand{\blinded}[1]{\ifx\blindreview\undefined #1 \else \textcolor{black!65}{[blinded for review]}\fi}
\newcommand{\reach}[0]{\mathit{reach}}
\newcommand{\reachof}[1]{\reach(#1)}
\newcommand{\reachofin}[2]{\reach_{#1}(#2)}
\newcommand{\rankof}[1]{\mathit{rank}(#1)}
\newcommand{\partto}[0]{\mathrel{\rightharpoonup}}
\newcommand{\myinf}[0]{\mathit{inf}}
\newcommand{\infof}[1]{\myinf(#1)}
\newcommand{\infofcol}[1]{\myinf_{\!\!\colourset}(#1)}
\newcommand{\Inf}[0]{\mathsf{Inf}}
\newcommand{\Infof}[1]{\Inf(#1)}
\newcommand{\Fin}[0]{\mathsf{Fin}}
\newcommand{\Finof}[1]{\Fin(#1)}
\newcommand{\inits}[0]{I}
\newcommand{\colouring}[0]{\mathsf{p}}
\newcommand{\img}[0]{\mathrm{img}}
\newcommand{\imgof}[1]{\img(#1)}
\newcommand{\aut}[0]{\mathcal{A}}
\newcommand{\M}[0]{\mathcal{M}}
\newcommand{\I}[0]{\mathcal{I}}
\newcommand{\cT}{\mathcal{T}}
\newcommand{\states}[0]{\mathcal{Q}}
\newcommand{\autex}[0]{\mathcal{A}_{\mathit{ex}}}
\newcommand{\width}{\mathit{width}}
\newcommand{\levelmodels}[0]{\mathsf{LM}}
\newcommand{\transover}[1]{\overset{#1}{\rightarrow}}
\newcommand{\ltr}[1]{\transover{#1}}
\newcommand{\word}[0]{w}
\newcommand{\wordof}[1]{\seqof{\word}{#1}}
\newcommand{\dagg}[0]{\mathcal{G}}
\newcommand{\daggk}[0]{\mathcal{K}}
\newcommand{\daggh}[0]{\mathcal{H}}
\newcommand{\dagof}[1]{\dagg_{#1}}
\newcommand{\dagw}[0]{\dagof{\word}}
\newcommand{\seqof}[2]{#1_{#2}}
\newcommand{\alginfela}[0]{\textsc{CInfTela}\xspace}
\newcommand{\lang}[0]{\mathcal{L}}
\newcommand{\langof}[1]{\lang(#1)}
\newcommand{\acc}{F}
\newcommand{\trans}{\delta}
\newcommand{\bigO}[0]{\mathcal{O}}
\newcommand{\bigOof}[1]{\bigO(#1)}
\definecolor{rowgray}{gray}{0.85}
\newcommand{\domof}[1]{\text{dom}(#1)}
\DeclareRobustCommand{\accinf}[0]{\textrm{Inf}}
\DeclareRobustCommand{\accfin}[0]{\textrm{Fin}}
\DeclareRobustCommand{\accinfof}[1]{\accinf(#1)}
\DeclareRobustCommand{\accfinof}[1]{\accfin(#1)}
\newcommand{\alphabet}[0]{\Sigma}
\newcommand{\lex}[0]{\textbf{lex-min}}
\newcommand{\lexof}[1]{\lex(#1)}
\newcommand{\proofbegin}[0]{\noindent\textit{Proof. }}
\newcommand{\proofend}[0]{}
\newcommand{\fsucc}[0]{\mathrel{\transconsist_{\trans}^a}}
\newcommand{\finsucca}[1]{\sqsubseteq_{#1}^a}
\newcommand{\evenceil}[1]{\lfloor\!\!\lfloor #1 \rfloor\!\!\rfloor}
\newcommand{\tight}[0]{\mathit{tight}}
\newcommand{\mutight}[0]{\mu\textrm{-}\tight}
\newcommand{\Smutightof}[1]{(#1, \mu)\textrm{-}\tight}
\newcommand{\Smutight}[0]{\Smutightof{S}}
\newcommand{\modelsof}[1]{\M_{\overline{#1}}}
\newcommand{\modelsminof}[1]{\M^{\min}_{\overline{#1}}}
\DeclareRobustCommand{\acccond}[0]{\mathsf{Acc}}
\DeclareRobustCommand{\acccondof}[1]{\tacc{#1}}
\newcommand{\succunion}[0]{\mathsf{Succ}}
\newcommand{\succact}[0]{\mathsf{SuccAct}}
\newcommand{\succtrack}[0]{\mathsf{SuccTrack}}
\newcommand{\breakempty}[0]{\mathsf{EmptyBreak}}
\newcommand{\mst}[0]{\mathsf{M}}
\newcommand{\instan}[0]{\mathbb{S}}
\newcommand{\instanof}[2]{\instan_{#1}^{#2}}
\newcommand{\instanDeltavarphi}[0]{\instanof{\Delta}{\varphi}}
\newcommand{\instanrun}[0]{R}
\newcommand{\instancompl}[0]{\mathsf{FinCompl}}
\newcommand{\finrun}[0]{\textsf{Fin}-run\xspace}
\newcommand{\colourset}[0]{\Gamma}
\newcommand{\emersonlei}[0]{\mathbb{EL}}
\newcommand{\emersonleiof}[1]{\emersonlei(#1)}
\newcommand{\buchi}[0]{B\"{u}chi\xspace}
\newcommand{\succactinf}[0]{\succact^{\mathsf{inf}}}
\newcommand{\succtrackinf}[0]{\succtrack^{\mathsf{inf}}}
\newcommand{\breakemptyinf}[0]{\breakempty^{\mathsf{inf}}}
\newcommand{\minf}[0]{\M^{\mathsf{inf}}}
\newcommand{\minftrack}[0]{\minf_{\mathsf{Track}}}
\newcommand{\minfact}[0]{\minf_{\mathsf{Act}}}
\newcommand{\instinf}[0]{\instan^{\mathsf{inf}}}
\newcommand{\mytrue}[0]{\mathit{tt}}
\newcommand{\myfalse}[0]{\mathit{ff}}
\newcommand{\succacttrue}[0]{\succact^{\mytrue}}
\newcommand{\succtracktrue}[0]{\succtrack^{\mytrue}}
\newcommand{\breakemptytrue}[0]{\breakempty^{\mytrue}}
\newcommand{\mintrue}[0]{\M^{\mytrue}}
\newcommand{\instantrue}[0]{\instan^{\mytrue}}
\newcommand{\binf}{\land\mathsf{inf}}
\newcommand{\succactbinf}[0]{\succact^{\binf}}
\newcommand{\succtrackbinf}[0]{\succtrack^{\binf}}
\newcommand{\breakemptybinf}[0]{\breakempty^{\binf}}
\newcommand{\mstbinf}[0]{\mst^{\binf}}
\newcommand{\minbinf}[0]{\M^{\binf}}
\newcommand{\instbinf}[0]{\instan^{\binf}}
\newcommand{\taccj}[0]{\tcacc{11}{$j$}}
\newcommand{\taccgof}[1]{\tcacc{7}{$#1$}}      
\newcommand{\taccBj}[0]{B_j}
\newcommand{\taccGj}[0]{G_j}
\newcommand{\taccGjl}[0]{G_{j,\ell}}
\newcommand{\lnref}[1]{Line~\ref{#1}}
\newcommand{\lnsref}[2]{Lines~\ref{#1}--\ref{#2}}
\newcommand{\gendagofof}[3]{#1^{#2}_{#3}}
\newcommand{\gendagof}[2]{\gendagofof{\dagg}{#1}{#2}}
\newcommand{\gendag}[0]{\gendagof{\Delta}{\word}}
\newcommand{\gendagk}[0]{\gendagofof{\daggk}{\Delta}{\word}}
\newcommand{\gendagh}[0]{\gendagofof{\daggh}{\Delta}{\word}}
\newcommand{\transconsist}[0]{%
  \begin{tikzpicture}[line width=0.6pt,transform shape,scale=0.7]
    \draw[->] (0,0) -- (1.5em,0);
    \draw (0.6em,0ex) circle (0.5ex);
  \end{tikzpicture}%
}
\title{Complementation of Emerson-Lei Automata
\\(Technical Report)
}
\author{Vojtěch Havlena \and
        Ondřej Lengál \and
        Barbora Šmahlíková}
\institute{Faculty of Information Technology, Brno University of Technology, Brno,\\ Czech Republic}
\authorrunning{V. Havlena, O. Lengál, B. Šmahlíková} 
\begin{document}

\maketitle

\begin{abstract}
We give new constructions for complementing subclasses of
Emerson-Lei automata using modifications of rank-based B\"{u}chi automata
complementation.
In particular, we propose a~specialized rank-based construction for a~Boolean
combination of Inf acceptance conditions, which heavily relies on a~novel way
of a~run DAG labelling enhancing the ranking functions with models of the
acceptance condition.
Moreover, we propose a technique for complementing generalized Rabin automata, 
which are structurally as concise as general Emerson-Lei automata (but can have a larger acceptance condition). The construction is modular in the sense that it combines a given 
complementation algorithm for a condition~$\varphi$ in a way that 
the resulting procedure handles conditions of the form~$\accfin \land \varphi$.
The proposed constructions give upper bounds that are exponentially better than
the state of the art for some of the classes.
\end{abstract}

\vspace{-9.0mm}
\section{Introduction}\label{sec:label}
\vspace{-3.0mm}

Complementation of $\omega$-automata is an important operation in formal
verification with various applications, for example in model checking wrt
expressive temporal logics such as QPTL~\cite{KestenP95} or
HyperLTL~\cite{ClarksonFKMRS14}; testing language inclusion of
$\omega$-automata, or in decision procedures of various
logics~\cite{buchi1962decision,HieronymiMOSSS24}.
For \buchi automata (BAs)---i.e.,
$\omega$-automata with the simplest acceptance condition---complementation has
been, from the theoretical point of view, thoroughly explored,
starting with constructions having the $2^{2^{\bigO(n)}}$ state
complexity~\cite{buchi1962decision} coming down to constructions asymptotically
matching the lower bound $\Omega((0.76n)^n)$ (modulo a~polynomial
factor)~\cite{Schewe09,AllredU18}.
Over the years, $\omega$-automata with more complex acceptance conditions (such
as generalized \buchi (GBAs), (generalized) Rabin/Streett, parity) have found
uses in practice.
The most general acceptance condition used is the so-called \emph{Emerson-Lei}
condition~\cite{EmersonL87}, which is
an~arbitrary Boolean formula consisting of $\Fin$ and $\Inf$ atoms.
$\Finof{\taccgof c}$ denotes that all transitions labeled with $\taccgof c$
must occur only finitely often in an accepting run and $\Infof{\taccgof c}$
denotes that there must be a~transition labeled with $\taccgof c$ occurring
infinitely often.
There are two main reasons for using more complex acceptance conditions:
\begin{inparaenum}[(i)]
  \item  more compact representation of automata and
  \item  the ability to determinize \mbox{(deterministic BAs are strictly less
    expressive than BAs).}
\end{inparaenum}

From the theoretical point of view, precise bounds on complementation of
automata with more complex acceptance condition is much less researched,
demonstrated by the best upper bound for (transition-based) Emerson-Lei
automata (TELAs) being $2^{2^{\bigO(n)}}$~\cite{SafraV89} states.
Here, the $\bigO$ in the exponent can hide a linear (or constant) factor, which
would have a doubly-exponential effect, giving little information about the
actual complexity.
In this paper, we present complementation algorithms for several subclasses of
TELAs and thoroughly study their complexity, giving better upper bounds than
the currently-best known algorithms.

Our contributions can be summarized as follows:
\begin{enumerate}
  \item  We propose a~rank-based complementation algorithm for $\accinf$-TELAs,
    i.e., TELAs where the acceptance condition does not contain any $\accfin$
    atom, with the state complexity $\bigOof{n (0.76nk)^n}$ where~$n$ is the
    number of states and $k$ is the number of \emph{minimal models} of the
    acceptance condition.

  \item  By instantiating the previously mentioned algorithm, we obtain
    a~complementation algorithm for generalized \buchi automata with~$k$
    colours constructing a~BA with the state complexity
    $\bigO(n(0.76nk)^n)$, which is, to the best of our knowledge, better than the best previously known
    algorithms.

  \item  We propose a~modular procedure for complementing TELAs with the
    acceptance condition $\accfinof{\taccgof c} \land \varphi$ given
    a~compatible complementation procedure for formula~$\varphi$.

  \item  Next, we instantiate the modular procedure to handle Rabin pairs
    ($\accfinof{\tacc 0} \land \accinfof{\tacc 1}$) and, in turn, obtain an
    algorithm for complementing Rabin automata with $k$ Rabin pairs with the
    complexity $\bigO(n^k(0.76n)^{nk})$, which is, again, better than any other
    algorithm that we know of.

  \item  Finally, we instantiate the procedure also for generalized
    Rabin pairs ($\accfinof{\tacc 0} \land \accinfof{\tacc 1} \land \ldots
    \land \accinfof{\tcacc 4 {$\ell$}}$) and obtain complementation
    constructions for generalized Rabin automata and TELAs with the
    upper bound $\bigO(n^{2^k}(0.76 nk)^{n2^k})$, which is the best upper bound
    for complementation of general TELAs that we are aware of.
\end{enumerate}

%

\vspace{-4.0mm}
\section{Preliminaries}
\vspace{-2.0mm}
We fix a~finite non-empty alphabet~$\Sigma$ and the first infinite
ordinal~$\omega$.
For~$k \in \omega$, we use~$\evenceil k$ to represent the largest even number less than or equal
to~$k$, e.g., $\evenceil{43} = \evenceil{42} = 42$.
An (infinite) word~$\word$ is a~function $\word\colon \omega \to \Sigma$ where
the $i$-th symbol is denoted as~$\wordof i$.
Sometimes, we represent~$\word$ as an~infinite sequence $\word = \wordof 0
\wordof 1 \dots$
We denote the set of all infinite words over~$\Sigma$ as $\Sigma^\omega$;
an \emph{$\omega$-language} is a~subset of~$\Sigma^\omega$.
We use $\cdot$ for ellipsis, e.g., if interested only in the second component
\mbox{of a~triple, we may write the triple as $(\cdot, x, \cdot)$.}

\vspace{-2.0mm}
\subsection{Emerson-Lei Acceptance Conditions}
\vspace{-1.0mm}

Given a~set $\colourset = \{0, \ldots, k -1\}$ of~$k$
\emph{colours} (often depicted as \tacc{0}, \tacc{1}, etc.), we define the
set of \emph{Emerson-Lei acceptance conditions} $\emersonleiof \colourset$ as
the set of formulae constructed according to the following grammar:
\vspace{-2mm}
\begin{align*}
  \alpha ::= \mytrue \mid \myfalse \mid
             \Infof c \mid \Finof c \mid
             (\alpha \land \alpha) \mid (\alpha \lor \alpha)\\[-7mm]
\end{align*}
for $c \in \colourset$.
The \emph{satisfaction} relation $\models$ for a~set of colours~$M \subseteq
\colourset$ and a~condition~$\alpha$ is defined inductively as follows (for $c
\in \colourset$):
\vspace{-2mm}
\begin{align*}
  M \models \mytrue, &&
  M \models \Finof c & \text{~ iff ~} c \notin M, &
  M \models \alpha_1 \lor \alpha_2 & \text{~ iff ~} M \models \alpha_1
  \text{ or } M \models \alpha_2,
  \\
  M \not\models \myfalse, &&
  M \models \Infof c & \text{~ iff ~} c \in M,  &
  M \models \alpha_1 \land \alpha_2 & \text{~ iff ~} M \models \alpha_1
  \text{ and } M \models \alpha_2.\\[-7mm]
\end{align*}
%
If $M \models \alpha$, we say that~$M$ is a~\emph{model} of~$\alpha$ 
We denote by $|\alpha|$ the number of atomic conditions contained in $\alpha$,
where multiple occurrences of the same atomic condition are counted multiple
times.

\pagebreak

\newcommand{
\begin{wrapfigure}[10]{r}{25mm}
\vspace*{-6mm}
\hspace*{-3mm}
\begin{minipage}{25mm}
\begin{tikzpicture}[->,>=stealth',shorten >=0pt,auto,node distance=1.5cm,
                    scale=0.8,transform shape,initial text={}]
  \tikzstyle{every state}=[inner sep=3pt,minimum size=5pt]
  \tikzstyle{empty}=[]
  \tikzstyle{initstate}=[fill=yellow!30]

  \node[state,initial] (q) {$q$};
  \node[state,right of=q] (r) {$r$};
  \node[state,above of=r] (s) {$s$};
  \node[state,below of=r] (t) {$t$};

  \path (q) edge[loop above]  node {$a,b,c$} (q)
        (q) edge pic {acc=0} node[yshift=1mm] {$c$} (r)
        (r) edge node[right,xshift=1mm] {$a$}(s)
        (r) edge pic {acc=2} node[xshift=1mm] {$a$} (t)
        (s) edge[loop left]  node {$a$} (s)
        (t) edge node {$a,b$} (q)
        (s) edge pic {acc=1} node[above left,xshift=2mm,yshift=1mm] {$a$} (q);

\end{tikzpicture}

\centering
\scalebox{0.8}{
$\accinfof{\tacc 0} \land \accinfof{\tacc 1}$
}
\end{minipage}
\vspace{-2mm}
\caption{$\autex$}
\label{fig:ex_inf_inf_aut}
\end{wrapfigure}
}[0]{
\begin{wrapfigure}[10]{r}{25mm}
\vspace*{-6mm}
\hspace*{-3mm}
\begin{minipage}{25mm}
\begin{tikzpicture}[->,>=stealth',shorten >=0pt,auto,node distance=1.5cm,
                    scale=0.8,transform shape,initial text={}]
  \tikzstyle{every state}=[inner sep=3pt,minimum size=5pt]
  \tikzstyle{empty}=[]
  \tikzstyle{initstate}=[fill=yellow!30]

  \node[state,initial] (q) {$q$};
  \node[state,right of=q] (r) {$r$};
  \node[state,above of=r] (s) {$s$};
  \node[state,below of=r] (t) {$t$};

  \path (q) edge[loop above]  node {$a,b,c$} (q)
        (q) edge pic {acc=0} node[yshift=1mm] {$c$} (r)
        (r) edge node[right,xshift=1mm] {$a$}(s)
        (r) edge pic {acc=2} node[xshift=1mm] {$a$} (t)
        (s) edge[loop left]  node {$a$} (s)
        (t) edge node {$a,b$} (q)
        (s) edge pic {acc=1} node[above left,xshift=2mm,yshift=1mm] {$a$} (q);

\end{tikzpicture}

\centering
\scalebox{0.8}{
$\accinfof{\tacc 0} \land \accinfof{\tacc 1}$
}
\end{minipage}
\vspace{-2mm}
\caption{$\autex$}
\label{fig:ex_inf_inf_aut}
\end{wrapfigure}
}

\newcommand{
\begin{wrapfigure}[19]{r}{39mm}
\vspace*{-6mm}
\hspace*{-2mm}
\begin{minipage}{38mm}
\begin{tikzpicture}[>=stealth',shorten >=0pt,auto,node distance=1.0cm,
                    scale=0.8,transform shape,initial text={}]
  \tikzstyle{every state}=[inner sep=3pt,minimum size=5pt,rectangle,rounded corners=1mm]
  \tikzstyle{empty}=[]

  \node[state]   (q0) {$q,0$};

  \node[state,below of=q0] (q1) {$q,1$};
  \coordinate[above left of=q1,xshift=3mm,yshift=-3mm] (q1_al);
  \coordinate[above right of=q1,xshift=-3mm,yshift=-3mm] (q1_ar);
  \node[state,right of=q1] (r1) {$r,1$};

  \node[state,below of=q1] (q2) {$q,2$};
  \coordinate[above right of=q2,xshift=-3mm,yshift=-3mm] (q2_ar);
  \node[below of=r1]       (r2) {};
  \node[state,right of=r2] (t2) {$t,2$};
  \coordinate[above right of=t2,xshift=-3mm,yshift=-3mm] (t2_ar);
  \node[state,right of=t2] (s2) {$s,2$};

  \node[below of=r2]       (r3) {};
  \node[state,below of=q2] (q3) {$q,3$};
  \node[state,below of=s2] (s3) {$s,3$};

  \node[state,below of=q3] (q4) {$q,4$};
  \node[state,right of=q4] (r4) {$r,4$};
  \node[right of=r4] (t4) {};
  \coordinate[above right of=t4,xshift=-3mm] (t4_ar);

  \node[state,below of=q4] (q5) {$q,5$};
  \node[below of=r4]       (r5) {};
  \node[state,right of=r5] (t5) {$t,5$};
  \node[state,right of=t5] (s5) {$s,5$};

  \node[below of=r5]      (r6) {};
  \node[state,left of=r6] (q6) {$q,6$};
  \node[below of=s5]      (s6) {};

  \node[below of=q6] (q7) {$\vdots$};
  \coordinate[below left of=q7,xshift=3mm,yshift=3mm] (q7_bl);
  \node[right of=q7] (r7) {$\ddots$};
  \node[right of=r7] (t7) {};
  \coordinate[below right of=t7,xshift=-3mm,yshift=3mm] (t7_br);
  \node[below of=s6]      (s7) {$\vdots$};



  \draw[->] (q0) edge pic[pos=0.4] {acc=0} (r1)
    (q0) edge (q1)
    (r1) edge (s2)
    (r1) edge 
      (t2)
    (q1) edge (q2);

  \draw[->] (s2) edge pic {acc=1} (q3)
    (s2) edge (s3)
    (t2) edge (q3)
    (q2) edge (q3);

  \draw[->] (q3) edge (q4)
    (q3) edge pic[pos=0.4] {acc=0} (r4)
    (q4) edge (q5)
    (r4) edge 
      (t5)
    (r4) edge (s5);

  \draw[->] (t5) edge (q6)
    (q5) edge (q6);

  \draw[->] (q6) edge (q7)
    (q6) edge pic[pos=0.4] {acc=0} (r7);

  \node[empty, node distance=8mm,below left of=q0,xshift=-3mm] (sym1) {$c$};
  \node[empty, below of=sym1] (sym2) {$a$};
  \node[empty, below of=sym2] (sym3) {$a$};
  \node[empty, below of=sym3] (sym4) {$c$};
  \node[empty, below of=sym4] (sym5) {$a$};
  \node[empty, below of=sym5] (sym6) {$b$};
  \node[empty, below of=sym6] (sym7) {$c$};
  \node[empty, below of=sym7,node distance=4.3mm] (symdots) {$\vdots$};

  \begin{pgfonlayer}{background}
    \node[draw,dashed,rectangle,fill=YellowGreen!60,draw=black!70,rounded corners=8pt,inner sep=3pt,fit=(s3) (s7)] (a) {};
    \fill[draw,dashed,rectangle,fill=YellowGreen!30!blue!20,draw=black!70,rounded corners=8pt,inner sep=3pt]
      (q1_al) -- (q7_bl) -- (t7_br) -- (t2_ar) -- (q2_ar) -- (q1_ar) --cycle;
  \end{pgfonlayer}

  \node[empty, below of=s5, node distance=8mm, text=black] (rank0) {\textit{rank 0}};
  \node[empty, right of=q6, node distance=15mm, text=black] (rank1) {$\begin{array}{c}\textit{rank 1}\\\textit{model:}\,\{\tacc 1\} \end{array}$};
  \node[empty, right of=q0, node distance=18mm, yshift=-4mm, text=black] (rank2) {\textit{rank 2}};

\end{tikzpicture}
\end{minipage}
\vspace{-2mm}
\caption{A labelled run DAG of~$\autex$ over the word
  $caa(cab)^\omega \notin \langof \autex$}
\label{fig:ex_inf_inf_rundag}
\end{wrapfigure}
}[0]{
\begin{wrapfigure}[19]{r}{39mm}
\vspace*{-6mm}
\hspace*{-2mm}
\begin{minipage}{38mm}
\begin{tikzpicture}[>=stealth',shorten >=0pt,auto,node distance=1.0cm,
                    scale=0.8,transform shape,initial text={}]
  \tikzstyle{every state}=[inner sep=3pt,minimum size=5pt,rectangle,rounded corners=1mm]
  \tikzstyle{empty}=[]

  \node[state]   (q0) {$q,0$};

  \node[state,below of=q0] (q1) {$q,1$};
  \coordinate[above left of=q1,xshift=3mm,yshift=-3mm] (q1_al);
  \coordinate[above right of=q1,xshift=-3mm,yshift=-3mm] (q1_ar);
  \node[state,right of=q1] (r1) {$r,1$};

  \node[state,below of=q1] (q2) {$q,2$};
  \coordinate[above right of=q2,xshift=-3mm,yshift=-3mm] (q2_ar);
  \node[below of=r1]       (r2) {};
  \node[state,right of=r2] (t2) {$t,2$};
  \coordinate[above right of=t2,xshift=-3mm,yshift=-3mm] (t2_ar);
  \node[state,right of=t2] (s2) {$s,2$};

  \node[below of=r2]       (r3) {};
  \node[state,below of=q2] (q3) {$q,3$};
  \node[state,below of=s2] (s3) {$s,3$};

  \node[state,below of=q3] (q4) {$q,4$};
  \node[state,right of=q4] (r4) {$r,4$};
  \node[right of=r4] (t4) {};
  \coordinate[above right of=t4,xshift=-3mm] (t4_ar);

  \node[state,below of=q4] (q5) {$q,5$};
  \node[below of=r4]       (r5) {};
  \node[state,right of=r5] (t5) {$t,5$};
  \node[state,right of=t5] (s5) {$s,5$};

  \node[below of=r5]      (r6) {};
  \node[state,left of=r6] (q6) {$q,6$};
  \node[below of=s5]      (s6) {};

  \node[below of=q6] (q7) {$\vdots$};
  \coordinate[below left of=q7,xshift=3mm,yshift=3mm] (q7_bl);
  \node[right of=q7] (r7) {$\ddots$};
  \node[right of=r7] (t7) {};
  \coordinate[below right of=t7,xshift=-3mm,yshift=3mm] (t7_br);
  \node[below of=s6]      (s7) {$\vdots$};



  \draw[->] (q0) edge pic[pos=0.4] {acc=0} (r1)
    (q0) edge (q1)
    (r1) edge (s2)
    (r1) edge 
      (t2)
    (q1) edge (q2);

  \draw[->] (s2) edge pic {acc=1} (q3)
    (s2) edge (s3)
    (t2) edge (q3)
    (q2) edge (q3);

  \draw[->] (q3) edge (q4)
    (q3) edge pic[pos=0.4] {acc=0} (r4)
    (q4) edge (q5)
    (r4) edge 
      (t5)
    (r4) edge (s5);

  \draw[->] (t5) edge (q6)
    (q5) edge (q6);

  \draw[->] (q6) edge (q7)
    (q6) edge pic[pos=0.4] {acc=0} (r7);

  \node[empty, node distance=8mm,below left of=q0,xshift=-3mm] (sym1) {$c$};
  \node[empty, below of=sym1] (sym2) {$a$};
  \node[empty, below of=sym2] (sym3) {$a$};
  \node[empty, below of=sym3] (sym4) {$c$};
  \node[empty, below of=sym4] (sym5) {$a$};
  \node[empty, below of=sym5] (sym6) {$b$};
  \node[empty, below of=sym6] (sym7) {$c$};
  \node[empty, below of=sym7,node distance=4.3mm] (symdots) {$\vdots$};

  \begin{pgfonlayer}{background}
    \node[draw,dashed,rectangle,fill=YellowGreen!60,draw=black!70,rounded corners=8pt,inner sep=3pt,fit=(s3) (s7)] (a) {};
    \fill[draw,dashed,rectangle,fill=YellowGreen!30!blue!20,draw=black!70,rounded corners=8pt,inner sep=3pt]
      (q1_al) -- (q7_bl) -- (t7_br) -- (t2_ar) -- (q2_ar) -- (q1_ar) --cycle;
  \end{pgfonlayer}

  \node[empty, below of=s5, node distance=8mm, text=black] (rank0) {\textit{rank 0}};
  \node[empty, right of=q6, node distance=15mm, text=black] (rank1) {$\begin{array}{c}\textit{rank 1}\\\textit{model:}\,\{\tacc 1\} \end{array}$};
  \node[empty, right of=q0, node distance=18mm, yshift=-4mm, text=black] (rank2) {\textit{rank 2}};

\end{tikzpicture}
\end{minipage}
\vspace{-2mm}
\caption{A labelled run DAG of~$\autex$ over the word
  $caa(cab)^\omega \notin \langof \autex$}
\label{fig:ex_inf_inf_rundag}
\end{wrapfigure}
}

\vspace{-2.0mm}
\subsection{Emerson-Lei Automata}
\vspace{-1.0mm}
%

\begin{wrapfigure}[10]{r}{25mm}
\vspace*{-6mm}
\hspace*{-3mm}
\begin{minipage}{25mm}
\begin{tikzpicture}[->,>=stealth',shorten >=0pt,auto,node distance=1.5cm,
                    scale=0.8,transform shape,initial text={}]
  \tikzstyle{every state}=[inner sep=3pt,minimum size=5pt]
  \tikzstyle{empty}=[]
  \tikzstyle{initstate}=[fill=yellow!30]

  \node[state,initial] (q) {$q$};
  \node[state,right of=q] (r) {$r$};
  \node[state,above of=r] (s) {$s$};
  \node[state,below of=r] (t) {$t$};

  \path (q) edge[loop above]  node {$a,b,c$} (q)
        (q) edge pic {acc=0} node[yshift=1mm] {$c$} (r)
        (r) edge node[right,xshift=1mm] {$a$}(s)
        (r) edge pic {acc=2} node[xshift=1mm] {$a$} (t)
        (s) edge[loop left]  node {$a$} (s)
        (t) edge node {$a,b$} (q)
        (s) edge pic {acc=1} node[above left,xshift=2mm,yshift=1mm] {$a$} (q);

\end{tikzpicture}

\centering
\scalebox{0.8}{
$\accinfof{\tacc 0} \land \accinfof{\tacc 1}$
}
\end{minipage}
\vspace{-2mm}
\caption{$\autex$}
\label{fig:ex_inf_inf_aut}
\end{wrapfigure}

A~(nondeterministic) transition-based\footnote{%
We only consider transition-based acceptance in order to avoid cluttering the
paper by dealing with accepting states \emph{and} accepting transitions.
Extending our approach to state/transition-based (or just state-based) automata
is straightforward.
}
\emph{Emerson-Lei automaton} (TELA)
over~$\Sigma$ is a~tuple $\aut = (\states, \trans, \inits, \colourset,
\colouring, \acccond)$,
where $\states$ is a~finite set of \emph{states} (we often use~$n$ to denote $|\states|$),
$\trans \subseteq \states \times \Sigma \times \states$ is a~set of
\emph{transitions}\footnote{%
Note that there is also a~more general definition
of TELAs with $\trans \subseteq \states \times \Sigma \times 2^{\colourset}
\times \states$; in this paper, we use the simpler one.},
$\inits \subseteq \states$ is the set of \emph{initial} states,
$\colourset$ is the set of \emph{colours},
$\colouring\colon \trans \to 2^{\colourset}$ is a~\emph{colouring} of transitions, and
$\acccond \in \emersonleiof \colourset$.
We use $p \ltr a q$ to denote that $(p,a,q) \in \trans$ and sometimes
treat~$\trans$ as a~function $\trans\colon \states \times
\alphabet \to 2^{\states}$.
Moreover, we extend~$\trans$ to sets of states $P \subseteq \states$ as
$\trans(P, a) = \bigcup_{p \in P} \trans(p,a)$.
See \cref{fig:ex_inf_inf_aut} for an example TELA~$\autex$ over~$\Sigma =
\{a,b,c\}$ with 3 colours $\colourset = \{\tacc 0, \tacc 1, \tacc 2\}$ and the
acceptance condition $\accinfof{\tacc 0} \land \accinfof{\tacc 1}$.
We define $|\aut| = |\states|$.

A~\emph{run}
of~$\aut$ from~$q \in \states$ on an input word~$\word$ is an infinite sequence $\rho\colon
\omega \to \states$ that starts in~$q$ and respects~$\trans$, i.e., $\rho(0) = q$ and
$\forall i \geq 0\colon \rho(i) \ltr{\wordof i}\rho(i+1) \in \trans$.
Let $\infof \rho \subseteq \delta$ denote the set of
transitions occurring in~$\rho$ infinitely often and $\infofcol \rho =
\bigcup\{\colouring(x) \mid x \in \infof \rho\}$ be the set of infinitely
often occurring colours.
A~run~$\rho$ is \emph{accepting} wrt an acceptance condition~$\alpha$, written
as $\rho \models \alpha$, iff $\infofcol \rho \models \alpha$ and
$\rho$ is accepting in~$\aut$ iff $\rho \models \acccond$.
The \emph{language} of~$\aut$, denoted as~$\langof{\aut}$, is defined as the set of words
$w \in \alphabet^\omega$ for which there exists an accepting run in~$\aut$
starting with some state in~$\inits$.
Classical acceptance conditions can be in this more general framework described
as follows (we only provide those used later in the paper and include their abbreviations):
\vspace{-3mm}
\begin{itemize}
  \setlength{\itemsep}{0.5mm}
	\item \emph{\buchi} (BA): $\acccond = \accinfof{\tacc 0}$,
	\item \emph{co-\buchi} (CBA): $\acccond = \accfinof{\tacc 0}$,
	\item \emph{Generalized \buchi} (GBA): $\acccond = \bigwedge_{0 \leq j < k} \accinfof{\taccj}$,
	\item \emph{Generalized co-\buchi} (GCBA): $\acccond = \bigvee_{0 \leq j < k} \accfinof{\taccj}$,
  \item \emph{Rabin}: $\bigvee_{0 \leq j < k} \accfinof{\taccBj} \land \accinfof{\taccGj}$, and
  \item \emph{Generalized Rabin}: $\bigvee_{0 \leq j < k} (\accfinof{\taccBj} \land \bigwedge_{0 \leq \ell < m_j}\accinfof{\taccGjl})$.
  \item \emph{Parity}\footnote{
    We consider the so-called \emph{parity min odd} condition; any parity condition from the set $\{\text{min}, \text{max}\} \times \{\text{even}, \text{odd}\}$ can be easily translated to it.
    }: $\accfinof{\tacc 0} \land (\accinfof{\tacc 1} \lor (\accfinof{\tacc 2} \land (\accinfof{\tacc 3} \lor (\accfinof{\tacc 4} \land \ldots))))$
\end{itemize}
\vspace{-3mm}
Furthermore, we use $\accinf$-TELA to denote a~TELA where the acceptance
condition contains no $\accfin$ atoms.
We also use the syntactic sugar~$\aut = (\states, \trans, \inits, \acc)$ to
denote a~(transition-based) BA that would be defined using the TELA definition
above as $(\states, \trans, \inits, \{\tacc 0\}, \{t \mapsto \emptyset \mid t
\in \trans \setminus \acc\} \cup \{t \mapsto \{\tacc 0\} \mid t \in \acc\}, \Infof{\tacc 0})$.

%

\newcommand{\algrundaginf}[0]{
\begin{algorithm}[t]
	\caption{$\accinf$-TELA run DAG labelling}
	\label{alg:label-run-dag}

	\SetKwInput{KwInput}{Input}
	\SetKwInput{KwOutput}{Output}

	\KwInput{A~run DAG $\dagw$ of~$\aut$ over $\word$, acceptance condition~$\acccond$}
	\KwOutput{A~run DAG ranking $r$ and a~model assignment $m$ if $\word \notin \langof \aut$, else $\bot$}

	$i \gets 0$, $r \gets \emptyset$, $m \gets \emptyset$\tcp*[r]{$i \in \omega$, $r\colon V \partto \{0, \ldots, 2|\states|\}$, $m\colon V \partto \modelsminof \acccond$ }
  $\dagg^0 = (V^0, E^0) \gets \dagw$ without finite
  vertices\;\label{ln:remove_finite_init}
  \ForEach{$v \in \dagw$ s.t.\ $v$ is finite}{
    $r(v) \gets 0$, $m(v) \gets \lexof{\overline{\acccond}}$\;\label{ln:initial_finite}
  }
	\While{$\dagg^i \neq \emptyset$} {\label{ln:main_loop}
    \If{$\exists (\eta\colon U \to \modelsminof \acccond)$ s.t. $U \subseteq V^i$ and $\eta$ is endangered in $\dagg^i$}{\label{ln:endangered}
    \ForEach{$v \in U$ and $u \in \reachofin{\dagg^i}{v}$}{\label{ln:step_one}
      $r(u) \gets i + 1$, $m(u) \gets \eta(v)$\;
    }
		$\dagg^{i+1} \gets \dagg^{i}$ without vertices with the rank $i+1$\;\label{ln:remove_endangered}
    \ForEach{$v \in \dagg^{i+1}$ s.t.\ $v$ is finite in $\dagg^{i+1}$}{
    \label{ln:step_two}
      $r(v) \gets i + 2$, $m(v) \gets \lexof{\overline{\acccond}}$\;
    }
    $\dagg^{i+2} \gets \dagg^{i+1}$ without vertices with the rank $i+2$\;\label{ln:remove_finite}
    $i \gets i + 2$\;\label{ln:increment}
    }
    \Else{
      \Return $\bot$\;
    }
	}
  \Return $(r,m)$\;
\end{algorithm}
}

\vspace{-3.0mm}
\subsection{Run DAGs}\label{sec:label}
\vspace{-2.0mm}

In this section, we recall the terminology from~\cite{HavlenaLS22} (which is
a~minor modification of the terminology from~\cite{KupfermanV01} and
\cite{Schewe09}) used heavily in the paper.
Let $\aut = (\states, \trans, \inits, \colourset, \colouring, \acccond)$ be
a~TELA.
We fix the definition of the \emph{run DAG} of~$\aut$ over a~word~$\word$ to be
a~DAG (directed acyclic graph) $\dagw = (V,E)$ of vertices~$V$ and edges~$E$
where
\vspace{-0mm}
\begin{itemize}
  \setlength{\itemsep}{0mm}
  \item  $V \subseteq \states \times \omega$ s.t. $(q, i) \in V$ iff there is
    a~run~$\rho$ of $\aut$ from $\inits$ over $\word$ with $\rho_i = q$,
  \item  $E \subseteq V \times V$ s.t.~$((q, i), (q',i')) \in E$ iff $i' = i+1$
    and $q' \in \delta(q, \wordof i)$.
\end{itemize}
\vspace{-0mm}


\begin{wrapfigure}[19]{r}{39mm}
\vspace*{-6mm}
\hspace*{-2mm}
\begin{minipage}{38mm}
\begin{tikzpicture}[>=stealth',shorten >=0pt,auto,node distance=1.0cm,
                    scale=0.8,transform shape,initial text={}]
  \tikzstyle{every state}=[inner sep=3pt,minimum size=5pt,rectangle,rounded corners=1mm]
  \tikzstyle{empty}=[]

  \node[state]   (q0) {$q,0$};

  \node[state,below of=q0] (q1) {$q,1$};
  \coordinate[above left of=q1,xshift=3mm,yshift=-3mm] (q1_al);
  \coordinate[above right of=q1,xshift=-3mm,yshift=-3mm] (q1_ar);
  \node[state,right of=q1] (r1) {$r,1$};

  \node[state,below of=q1] (q2) {$q,2$};
  \coordinate[above right of=q2,xshift=-3mm,yshift=-3mm] (q2_ar);
  \node[below of=r1]       (r2) {};
  \node[state,right of=r2] (t2) {$t,2$};
  \coordinate[above right of=t2,xshift=-3mm,yshift=-3mm] (t2_ar);
  \node[state,right of=t2] (s2) {$s,2$};

  \node[below of=r2]       (r3) {};
  \node[state,below of=q2] (q3) {$q,3$};
  \node[state,below of=s2] (s3) {$s,3$};

  \node[state,below of=q3] (q4) {$q,4$};
  \node[state,right of=q4] (r4) {$r,4$};
  \node[right of=r4] (t4) {};
  \coordinate[above right of=t4,xshift=-3mm] (t4_ar);

  \node[state,below of=q4] (q5) {$q,5$};
  \node[below of=r4]       (r5) {};
  \node[state,right of=r5] (t5) {$t,5$};
  \node[state,right of=t5] (s5) {$s,5$};

  \node[below of=r5]      (r6) {};
  \node[state,left of=r6] (q6) {$q,6$};
  \node[below of=s5]      (s6) {};

  \node[below of=q6] (q7) {$\vdots$};
  \coordinate[below left of=q7,xshift=3mm,yshift=3mm] (q7_bl);
  \node[right of=q7] (r7) {$\ddots$};
  \node[right of=r7] (t7) {};
  \coordinate[below right of=t7,xshift=-3mm,yshift=3mm] (t7_br);
  \node[below of=s6]      (s7) {$\vdots$};



  \draw[->] (q0) edge pic[pos=0.4] {acc=0} (r1)
    (q0) edge (q1)
    (r1) edge (s2)
    (r1) edge 
      (t2)
    (q1) edge (q2);

  \draw[->] (s2) edge pic {acc=1} (q3)
    (s2) edge (s3)
    (t2) edge (q3)
    (q2) edge (q3);

  \draw[->] (q3) edge (q4)
    (q3) edge pic[pos=0.4] {acc=0} (r4)
    (q4) edge (q5)
    (r4) edge 
      (t5)
    (r4) edge (s5);

  \draw[->] (t5) edge (q6)
    (q5) edge (q6);

  \draw[->] (q6) edge (q7)
    (q6) edge pic[pos=0.4] {acc=0} (r7);

  \node[empty, node distance=8mm,below left of=q0,xshift=-3mm] (sym1) {$c$};
  \node[empty, below of=sym1] (sym2) {$a$};
  \node[empty, below of=sym2] (sym3) {$a$};
  \node[empty, below of=sym3] (sym4) {$c$};
  \node[empty, below of=sym4] (sym5) {$a$};
  \node[empty, below of=sym5] (sym6) {$b$};
  \node[empty, below of=sym6] (sym7) {$c$};
  \node[empty, below of=sym7,node distance=4.3mm] (symdots) {$\vdots$};

  \begin{pgfonlayer}{background}
    \node[draw,dashed,rectangle,fill=YellowGreen!60,draw=black!70,rounded corners=8pt,inner sep=3pt,fit=(s3) (s7)] (a) {};
    \fill[draw,dashed,rectangle,fill=YellowGreen!30!blue!20,draw=black!70,rounded corners=8pt,inner sep=3pt]
      (q1_al) -- (q7_bl) -- (t7_br) -- (t2_ar) -- (q2_ar) -- (q1_ar) --cycle;
  \end{pgfonlayer}

  \node[empty, below of=s5, node distance=8mm, text=black] (rank0) {\textit{rank 0}};
  \node[empty, right of=q6, node distance=15mm, text=black] (rank1) {$\begin{array}{c}\textit{rank 1}\\\textit{model:}\,\{\tacc 1\} \end{array}$};
  \node[empty, right of=q0, node distance=18mm, yshift=-4mm, text=black] (rank2) {\textit{rank 2}};

\end{tikzpicture}
\end{minipage}
\vspace{-2mm}
\caption{A labelled run DAG of~$\autex$ over the word
  $caa(cab)^\omega \notin \langof \autex$}
\label{fig:ex_inf_inf_rundag}
\end{wrapfigure}

\noindent
See \cref{fig:ex_inf_inf_rundag} for an example of a~run DAG of~$\autex$ from
\cref{fig:ex_inf_inf_aut} over the word~$caa(cab)^\omega \notin \langof \autex$
(we will return to the additional labels in the figure later).
Given a~DAG~$\dagg = (V,E)$,
we often identify~$\dagg$ with~$V$, for instance, we will write $(p, i) \in
\dagg$ to denote that $(p, i) \in V$.
For a~vertex $v \in \dagg$, we denote the set of vertices of~$\dagg$ reachable
from~$v$ (including~$v$ itself) as $\reachofin{\dagg}{v}$ or just $\reachof v$
if~$\dagg$ is clear from the context.
A~vertex~$v \in \dagg$ is \emph{finite} iff $\reachof{v}$ is finite and
\emph{infinite} if it is not finite.
In \cref{fig:ex_inf_inf_rundag}, the vertices $(s,2), (s,3), (s,5), \ldots$ are
finite and all other vertices are infinite.
Moreover, for a~colour~$\taccgof{c} \in \colourset$, an edge $((q, i), (q',
i+1)) \in E$ is a~$\taccgof{c}$-edge if $\taccgof{c} \in \colouring(q
\ltr{\wordof i} q')$ and a~vertex~$v \in V$ is
$\taccgof{c}$-\emph{endangered} iff it cannot reach any $\taccgof{c}$-edge.
For a~set of colours $C \subseteq \colourset$, $v$~is
$C$-\emph{endangered} iff it is $\taccgof{c}$-endangered for every~$\taccgof{c} \in C$.
For example, in \cref{fig:ex_inf_inf_rundag}, the vertices $(q,1)$ and $(t,2)$
are $\{\tacc 1\}$-endangered but they are not $\{\tacc 0, \tacc
1\}$-endangered.
A~pair of vertices $v_1, v_2 \in V$ is \emph{converging} iff $\reachof{v_1}
\cap \reachof{v_2} \neq \emptyset$ ($v_1$~and~$v_2$ \emph{converge}).
A~function $r\colon V \to \omega$ is a \emph{run DAG ranking} if for every $v
\in V$ it holds that $\forall u \in \reachof v \colon r(u) \leq r(v)$.
We use $\max(r)$ to denote the \emph{rank} of~$r$, i.e., the maximum value from
$\{r(u) \mid u \in V\}$ if it exists and $\infty$ otherwise.
A ranking~$r$ of~$\dagg$ is called \emph{tight} iff
there exists a~level~$i$ such that
\begin{inparaenum}[(i)]
  \item $m = \max\{r((q,i)) \mid q \in \states\}$ is odd and 
  \item for all levels~$j \geq i$ it holds that $\{1,3,\dots,m\} \subseteq \{ r((q,j)) \mid q\in\states \}$. 
\end{inparaenum}
%


%

\vspace{-0.0mm}
\section{Complementation of $\accinf$-TELAs}\label{sec:inftela}
\vspace{-0.0mm}

In this section, we describe a~complement construction for $\accinf$-TELAs.
Our approach is an extension of rank-based BA
complementation
algorithms~\cite{KupfermanV01,FriedgutKV06,Schewe09,KarmarkarC09,ChenHL19,HavlenaL21,HavlenaLS22},
which construct a~BA whose runs simulate a~run DAG ranking procedure.
We start with giving the run DAG ranking procedure (which extends the ranking
procedure from~\cite{KupfermanV01} with the introduction of models) and then
proceed to the complementation algorithm itself. 
One can see our algorithm also as an improvement of the algorithm for
complementing GBAs in~\cite{KupfermanV05} by
\begin{inparaenum}[(i)]
  \item  introducing model assignments,
  \item  getting better complexity through the use of tight rankings, and
  \item  generalizing the construction from GBAs to arbitrary $\accinf$-TELAs.
\end{inparaenum}

\vspace{-0.0mm}
\subsection{$\accinf$-TELA Run DAG Labelling}
\label{sec:infelalabelling}
\vspace{-0.0mm}

Let $\aut = (\states, \trans, \inits, \colourset,
\colouring, \acccond)$ be an $\accinf$-TELA.
We use $\overline \acccond$ to denote the propositional formula obtained
from~$\acccond$ by replacing conjunctions by disjunctions and vice versa, and
substituting atoms of the form $\accinfof{\taccj}$ by~$\taccj$ (this can be viewed as
negating~$\acccond$, transforming it into the negation normal form, substituting
$\neg \accinfof{\taccj}$ by $\accfinof{\taccj}$, and denoting each $\accfinof{\taccj}$
just by~$\taccj$).
Let $\modelsof \acccond$ be the set of models of $\overline\acccond$ where the colours $\taccj$ are
interpreted as propositional variables.
For example, if $\acccond = \accinfof{\tacc 0} \land (\accinfof{\tacc 1} \lor \accinfof{\tacc 2})$, then $\overline
\acccond = \tacc 0 \lor (\tacc 1 \land \tacc 2)$ and $\modelsof \acccond = \{ \{ \tacc 0 \}, \{ \tacc 1, \tacc 2 \}, \{ \tacc 0, \tacc 1 \},\{\tacc 0, \tacc 2\}, \{\tacc 0, \tacc 1, \tacc 2\} \}$
($\modelsof \acccond$~can be interpreted as saying which combinations of $\accinf$-conditions
need to be broken in order to break~$\acccond$; in the example above, we can,
e.g., break $\accinfof{\tacc 0}$, we can break both $\accinfof{\tacc 1}$ and
$\accinfof{\tacc 2}$, etc.).
%
Furthermore, we use $\modelsminof \acccond$ to denote the set of \emph{minimal
models} of~$\overline \acccond$, i.e., $\modelsminof \acccond$ is the set where
\begin{inparaenum}[(i)]
  \item for every model $m \in \modelsof \acccond$, there exists a~model~$m' \in \modelsminof \acccond$ such that $m' \subseteq m$, and 
  \item there are no $m, m' \in \modelsminof \acccond$ such that $m \subset m'$ .
\end{inparaenum}
We note that~$\modelsof \acccond$ can be obtained as the upward closure
of~$\modelsminof \acccond$ (and $\modelsminof \acccond$ is an \emph{antichain}).
For the example acceptance condition above, $\modelsminof \acccond = \{\{\tacc
0\}, \{\tacc 1, \tacc 2\}\}$.
Moreover, we use $\lexof{\overline{\acccond}}$ to denote the lexicographically
smallest model from~$\modelsminof \acccond$ (w.l.o.g., we assume
$\modelsof \acccond \neq \emptyset$).
$\lexof{\overline{\acccond}}$ is used to pinpoint one model when any model can
be used.

Let $\dagg = (V,E)$ be a run DAG of~$\aut$ over~$\word$.
For a set of vertices
$U\subseteq V$, a~mapping $\eta\colon U\to \modelsminof \acccond$ is called
\emph{endangered in~$\dagg$} if
\begin{enumerate}
  \item  $\eta$ is finite and nonempty,
  \item  each $v\in U$ is $\eta(v)$-endangered in~$\dagg$, and
    \label{def:endangered-map:endangered}
  \item  for each pair of vertices $v_1, v_2\in U$ converging in~$\dagg$, we
    have $\eta(v_1) = \eta(v_2)$.
\end{enumerate}
%
%
A~function $m$ with the signature $m\colon V \to \modelsminof \acccond$ is called a~\emph{model
assignment}.
For instance, for $\autex$ in \cref{fig:ex_inf_inf_aut}, we have $\modelsminof
\acccond = \{\{\tacc 0\}, \{\tacc 1\}\}$ since $\autex$ is a~GBA.
In addition, for the run DAG in \cref{fig:ex_inf_inf_rundag} and a~set
$\{(q,1),(t,2)\}$, the mapping $\{(q,1) \mapsto \{\tacc 1\}, (t,2) \mapsto
\{\tacc 1\}\}$ is endangered in~$\dagg$.
On the other hand, there exists no endangered mapping for the set $\{(s,2)\}$ in~$\dagg$, as $(s,2)$ can reach both a~$\tacc 0$-edge as well as a~$\tacc 1$-edge.


\algrundaginf

In \cref{alg:label-run-dag}, we give a~(nondeterministic) ranking procedure that
assigns ranks and minimal models of~$\overline\acccond$ to each
vertex of~$\dagg$.
Intuitively, the algorithm starts by giving all initially finite vertices the
rank~0 and assigning their model to~$\lexof{\overline{\acccond}}$ (\lnref{ln:initial_finite}).
Then, it proceeds in iterations, each starting with the DAG~$\dagg^i$ and
consisting of two steps:
\begin{enumerate}
  \item  First, the algorithm tries to find a~model assignment $\eta\colon U
    \to \modelsminof \acccond$ for a~finite nonempty set of vertices~$U$
    of~$\dagg^i$ s.t.\ for all~$u \in U$, if $\eta(u) = \{\taccgof{c_1},
    \ldots, \taccgof{c_\ell}\}$, then every
    path starting in~$u$ satisfies the condition $\bigwedge_{1 \leq j \leq
    \ell}\accfinof{\taccgof{c_j}}$ (the path breaks all the 
    $\accinfof{\taccgof{c_j}}$ conditions, i.e., $\eta$~is endangered).
    If such a~model assignment exists, the algorithm assigns rank~$i+1$ to all
    vertices reachable from~$U$ and removes them from the DAG, creating
    DAG~$\dagg^{i+1}$ (\lnsref{ln:step_one}{ln:remove_endangered}).

  \item  Second, the algorithm assigns rank~$i+2$ to all vertices
    in~$\dagg^{i+1}$ that became finite (after the previous step) and removes
    them from the DAG, creating DAG~$\dagg^{i+2}$ (\lnsref{ln:step_two}{ln:remove_finite}).
    The counter~$i$ is incremented by two and the next iteration continues.

\end{enumerate}
The iterations end when~$\dagg^i$ is empty or when no suitable model
assignment~$\eta$ is found (which happens when~$w$ is accepted by~$\aut$).
Note that due to the nondeterminism within the algorithm, it may be possible to
obtain, in two different runs of the algorithm on the same run DAG, two
different pairs $(r_1, m_1)$ and $(r_2, m_2)$ with $\max(r_1) \neq \max(r_2)$.



\begin{example}
See \cref{fig:ex_inf_inf_rundag} for a~possible labelling of the run DAG
  of~$\autex$ over the word $caa(cab)^\omega$.
  The ranking procedure proceeds in the following steps:
  \begin{enumerate}
    \item  $(i=0)$ First, all finite vertices, which are in this example
      vertices of the form $(s,3)$, $(s,5)$, \ldots, $(s, 3j+2)$ for all $1
      \leq j$, are assigned rank~0 and model $\lexof{\overline{\acccond}}$, and $\dagg^0$ is set
      to be $\dagg^w$ without those vertices.
      (\lnsref{ln:remove_finite_init}{ln:initial_finite})

    \item  Second, we set $\eta_1$ to the mapping
      $\eta_1 = \{(q,1) \mapsto \{\tacc 1\}, (t,2) \mapsto \{\tacc 1\}\}$.
      The mapping $\eta_1$ is endangered in~$\dagg^0$ because the following conditions hold:
      \begin{enumerate}
        \item  $\eta_1$ is finite and nonempty,
        \item  neither $(q,1)$ nor $(t,2)$ can reach a~$\tacc 1$ transition,
          and
        \item  $(q,1)$ and $(t,2)$ converge (in $(q,3)$) and they are both
          assigned the same model ($\eta_1((q,1)) = \eta_1((t,2)) = \{\tacc 1\}$).
      \end{enumerate}
      In particular, $\eta_1$ is the endangered mapping that gives the largest
      number of vertices of~$\dagg^0$ rank~1.
      (\lnref{ln:endangered})
    \item  Third, we assign every vertex in~$\dagg^0$ reachable from~$(q,1)$
      or~$(t,2)$ the rank~1 and model~$\{\tacc 1\}$.
      (\lnref{ln:step_one})
    \item  Fourth, we obtain~$\dagg^1$ from~$\dagg^0$ by removing vertices with
      rank~1. (\lnref{ln:remove_endangered})
    \item  $\dagg^1$ contains three vertices ($\{(q,0), (r,1), (s,2)\}$), which all get rank~$2$
      (\lnref{ln:step_two}) and are removed in~$\dagg^2$
      (\lnref{ln:remove_finite}). The ranking procedure finishes.\qed
  \end{enumerate}
\end{example}

%


\begin{lemma}\label{lem:bot-no-accept}
If \cref{alg:label-run-dag} returns~$\bot$, then $\word \in \langof{\aut}$.
\end{lemma}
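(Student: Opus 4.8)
The plan is to refute non-acceptance constructively: if \cref{alg:label-run-dag} reaches its \textbf{else} branch and returns~$\bot$, I will read an accepting run of~$\aut$ on~$\word$ directly off the DAG~$\dagg^i$ that caused the failure. Since $\bot$ is returned from inside the \textbf{while} loop, at that moment $\dagg^i \neq \emptyset$ and the guard of \lnref{ln:endangered} fails, i.e.\ \emph{no} endangered mapping $\eta\colon U \to \modelsminof{\acccond}$ exists for any $U \subseteq V^i$. The consequence I would draw is obtained by instantiating this with singletons: for every vertex $v \in V^i$ and every minimal model $m \in \modelsminof{\acccond}$, the map $\{v \mapsto m\}$ is finite, nonempty and trivially satisfies the convergence condition, so the only way it can fail to be endangered is that $v$ is \emph{not} $m$-endangered in~$\dagg^i$. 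In other words, for each minimal model~$m$, every vertex of~$\dagg^i$ can, within $\dagg^i$, reach a $c$-edge for some $c \in m$.

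Next I would record the acceptance characterisation that gives the models their meaning. Using the definition of $\overline{\acccond}$ (swapping $\land/\lor$ and turning $\accinfof{c}$ into the variable~$c$) together with the semantics of~$\models$, a short induction on~$\acccond$ yields the duality $M \models \overline{\acccond}$ iff $\colourset \setminus M \not\models \acccond$. Setting $M = \colourset \setminus C$ and using that $\modelsof{\acccond}$ is the upward closure of the antichain $\modelsminof{\acccond}$, this gives: a colour set~$C$ satisfies~$\acccond$ iff $C$ intersects every minimal model $m \in \modelsminof{\acccond}$. Hence a run~$\rho$ is accepting iff $\infofcol{\rho}$ is a transversal of $\modelsminof{\acccond}$ --- exactly the reading of $\modelsminof{\acccond}$ as the minimal sets of $\accinf$-conditions whose simultaneous breaking would break~$\acccond$.

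With these two facts the construction is a round-robin through the minimal models. Fixing an enumeration $m_1, \dots, m_k$ of $\modelsminof{\acccond}$ and starting from an arbitrary $v_0 \in V^i$, I repeat: from the current vertex follow a path inside~$\dagg^i$ to an edge coloured by some $c \in m_j$ (possible since the current vertex is not $m_j$-endangered), traverse it, and advance~$j$ cyclically. Every traversed edge keeps us inside~$\dagg^i$, so the process never blocks and yields an infinite path~$\pi$ in~$\dagg^i$ that, for each~$j$, uses infinitely many edges coloured by a colour of~$m_j$; as $m_j$ is finite, pigeonhole fixes a colour $c_j \in m_j$ on infinitely many of them, and since $\aut$ has finitely many transitions, $c_j \in \infofcol{\pi}$. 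Thus $\infofcol{\pi}$ meets every~$m_j$. Finally, since $v_0 = (q_0, \ell_0) \in V$ means some run of~$\aut$ from~$\inits$ reaches~$q_0$ at level~$\ell_0$, I prepend that finite prefix to~$\pi$ to get a genuine run~$\rho$ of~$\aut$ from~$\inits$ on~$\word$; the prefix being finite, $\infofcol{\rho} = \infofcol{\pi}$ is a transversal of $\modelsminof{\acccond}$, so $\rho$ is accepting and $\word \in \langof{\aut}$.

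The main obstacle is not the combinatorics of the round-robin but the bookkeeping that every ``reachability'' and ``$c$-edge'' statement is interpreted \emph{inside}~$\dagg^i$ rather than in the full run DAG, guaranteeing that the built path actually stays in~$\dagg^i$; the one genuinely non-routine ingredient is the duality/transversal characterisation of acceptance, which I would isolate as a small claim. The degenerate case $\modelsminof{\acccond} = \emptyset$ (where $\acccond$ is valid) is ruled out by the standing assumption $\modelsof{\acccond} \neq \emptyset$, but even then any infinite path through the nonempty~$\dagg^i$ is accepting.
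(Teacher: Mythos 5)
Your proof is correct, but it is organized around a different combinatorial object than the paper's. The paper first rewrites $\acccond$ into DNF $\bigvee_j \varphi_j$ and turns the failure of the endangered-mapping test into the statement that every vertex of $\dagg^i$ can reach, for \emph{some clause} $\varphi_j$, a $c$-edge for \emph{every} colour $c$ of that clause; it then builds the accepting path by maintaining one ``next colour to see'' pointer per clause and pigeonholing over clauses to find one whose pointer advances infinitely often. You instead stay with the antichain $\modelsminof{\acccond}$ that the algorithm itself manipulates: from the non-existence of singleton endangered mappings you get that every vertex can reach a $c$-edge for \emph{some} colour of \emph{each} minimal model, you prove the duality $M \models \overline{\acccond}$ iff $\colourset \setminus M \not\models \acccond$ to recast acceptance as ``$\infofcol{\rho}$ is a transversal of $\modelsminof{\acccond}$,'' and you build the path by round-robin over the minimal models, pigeonholing \emph{within} each model to fix a persistent colour. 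The two reachability statements are equivalent (minimal models of $\overline{\acccond}$ are exactly the minimal hitting sets of the DNF clauses), so both constructions succeed; what yours buys is that the duality/transversal lemma---which the paper uses only implicitly when it asserts the structure of $\modelsminof{\acccond}$ relative to the DNF---is isolated and proved, and that no DNF conversion is needed. What the paper's version buys is a tighter syntactic link to the clause-based instantiations (Rabin, generalized Rabin) used later. Your handling of the side conditions (staying inside $\dagg^i$, prepending the finite prefix to reach $v_0$ from $\inits$, and the degenerate case $\modelsminof{\acccond} = \emptyset$) is sound.
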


\proofbegin
  Let $\acccond'$ be a~formula in the disjunctive normal form (DNF) equivalent
  to~$\acccond$, i.e., $\acccond' = \bigvee_{j=1}^\ell \varphi_j$ where $\varphi_j =
  \accinfof{\taccgof{c^j_1}} \land \dots \land \accinfof{\taccgof{c^j_{k_j}}}$
  for some~$\ell$ and $k_1, \ldots, k_\ell$.
  Note that $\modelsminof \acccond = \modelsminof {\acccond'}$ contains sets of colours $M \subseteq
  \colourset$, each of them with at
  least one colour from~$\varphi_1$, at least one colour from~$\varphi_2$, etc.
  In order for \cref{alg:label-run-dag} to return~$\bot$, it needs to hold that
  there is some~$i \geq 0$ such that $\dagg^i$ is nonempty and there does not
  exist any mapping $\eta\colon U \to \modelsminof \acccond$, with $U \subseteq V^i$, that
  would be endangered in $\dagg^i$.
  In~particular, such a~(nonempty) mapping~$\eta$ does not exist iff no vertex
  $v \in \dagg^i$ satisfies point~(\ref{def:endangered-map:endangered}) of the
  definition of an endangered mapping (i.e., when we can find an accepting path
  from all vertices remaining in~$\dagg^i$).
  Therefore, it follows that no vertex $v \in \dagg^i$ is $M$-endangered for any $M
  \in \modelsminof \acccond$, i.e., in other words,

  \vspace{3mm}
  \begin{minipage}{0.8\textwidth} \label{eq:every-vertex}
    for every vertex $v \in \dagg^i$ there is some clause~$\varphi_j$ such that~$v$
    can in~$\dagg^i$ reach a~$\taccgof{c^j_p}$-edge for each $1 \leq p \leq k_j$.
  \end{minipage}
  \hfill
  (Reach)
  \vspace{3mm}

  \begin{wrapfigure}[13]{r}{2cm}
  \vspace*{-9mm}
  \hspace*{0mm}
  \begin{minipage}{2cm}
  \begin{tikzpicture}[>=stealth',shorten >=0pt,auto,node distance=1.0cm,
                    scale=0.8,transform shape,initial text={}]
  \tikzstyle{every state}=[inner sep=3pt,minimum size=5pt,rectangle,rounded corners=1mm]
  \tikzstyle{mystate}=[inner sep=3pt,minimum size=5pt,circle,draw]
  \tikzstyle{empty}=[]

  \node[state] (q0) {$q,0$};

  \node[state,below left of=q0,yshift=-5mm] (r1) {$r,1$};
  \node[state,below right of=q0,yshift=-5mm] (s1) {$s,1$};
  \node[state,below right of=r1,yshift=-5mm] (q2) {$q,2$};
  \node[state,below left of=q2,yshift=-5mm] (r3) {$r,3$};
  \node[state,below right of=q2,yshift=-5mm] (s3) {$s,3$};
  \node[below right of=r3,yshift=-5mm] (q4) {$\vdots$};

  \draw[->] (q0) edge pic {acc=0} (r1)
            (q0) edge pic {acc=1} (s1)
            (r1) edge (q2)
            (s1) edge (q2)
            (q2) edge pic {acc=0} (r3)
            (q2) edge pic {acc=1} (s3)
            (r3) edge (q4)
            (s3) edge (q4)
            ;

  \node[mystate,above of=q0] (q) {$q$};
  \node[mystate,right of=q] (r) {$s$};
  \node[mystate,left of=q] (s) {$r$};
  \node[empty,above of=q,node distance=7mm] (init) {};

  \draw[->] (init) edge (q);
  \draw[->] (q) edge[bend left] pic[pos=0.35] {acc=1} (r);
  \draw[->] (q) edge[bend right] pic[pos=0.35] {acc=0} (s);
  \draw[->] (r) edge[bend left] (q);
  \draw[->] (s) edge[bend right] (q);

\end{tikzpicture}
  \end{minipage}
  \label{fig:not-all-paths-accepting}
  \end{wrapfigure}
  \noindent
  We will now construct an accepting path~$\pi$ in~$\dagw$.
  Note that not all paths in~$\dagg^i$ are necessarily accepting (consider the
  TELA and the run DAG in the right, with the acceptance condition
  $\accinfof{\tacc 0} \land \accinfof{\tacc 1}$; there are many non-accepting paths from
  $(q,0)$---e.g., a~path that alternates between a~$q$-vertex and an~$r$-vertex
  and never touches any $s$-vertex).
  While constructing~$\pi$, for every clause~$\varphi_j$ we will be tracking the
  information about which atom of~$\varphi_j$ we should see next in order to
  satisfy~$\varphi_j$ on the path.
  In particular, we will start from a~vertex~$v_0$ that is a~root vertex
  of~$\dagg^i$ and we will use the tuple $t_0 = (\taccgof{c^1_1}, \ldots,
  \taccgof{c^\ell_1})$ to keep track of the colours.
  Using (Reach), it follows that there is a~clause~$\varphi_j$ s.t.~$v_0$ can
  reach a~$\taccgof{c^j_1}$-edge~$e_1$.
  We will set $t_1 = (\taccgof{c^1_1}, \ldots, \taccgof{c^j_2}, \ldots,
  \taccgof{c^\ell_1})$ and continue in
  a~similar way: from every vertex we encounter, we use (Reach) to obtain
  an edge that is a~$\taccgof{c}$-edge for some~$\taccgof{c}$ in~$t_i$.
  In the case we need to increment some component of~$t_i$
  from~$\taccgof{c^j_{k_j}}$, we set the new value to~$\taccgof{c^j_1}$.
  The path~$\pi$ is then constructed as an infinite path that goes through the
  infinite sequence~$v_0, e_1, e_2, \ldots$
  Note that because the sequence~$v_0, e_1, e_2 \ldots$ is infinite, due to the
  pigeonhole principle there will be a~clause~$\varphi_j$ s.t.\ the sequence
  $t_0, t_1, \ldots$ infinitely often increments the $j$-th component and
  so~$\pi$ is accepting.
  From~$\pi$, we can now extract the accepting run of~$\aut$ on~$w$.
\qed
\proofend

\begin{lemma}\label{lem:ranking-termination}
\cref{alg:label-run-dag} always terminates with $i \leq 2n$.
\end{lemma}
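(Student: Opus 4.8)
The plan is to run the standard ``width-reduction'' argument for rank-based constructions, adapted to the model-annotated removal used here. For an infinitary DAG $\dagg$ (one in which every vertex is infinite) define its \emph{width} $\width(\dagg) = \limsup_{\ell \to \infty} |\{q \mid (q,\ell) \in \dagg\}|$; since every level of a~run DAG of~$\aut$ contains at most $n = |\states|$ vertices, we always have $\width(\dagg) \le n$. The whole argument rests on the invariant that every $\dagg^i$ reached by the loop is infinitary. I~would first establish this: $\dagg^0$ is infinitary because it is obtained from~$\dagw$ by deleting exactly the finite vertices, and deleting only finite vertices cannot turn an infinite vertex finite (an infinite vertex carries an infinite path by König's lemma, every vertex on that path is itself infinite and hence survives the deletion, so the path survives). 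The same observation applies to the even step (\lnsref{ln:step_two}{ln:remove_finite}), which deletes precisely the vertices that became finite in~$\dagg^{i+1}$; hence $\dagg^{i+2}$ is again infinitary, and the invariant is maintained across a~full iteration.

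The heart of the proof is the claim that one iteration strictly decreases the width, i.e.\ $\width(\dagg^{i+2}) < \width(\dagg^i)$ whenever the iteration is executed. Write $w = \width(\dagg^i)$. By definition of $\limsup$, all but finitely many levels of~$\dagg^i$ contain at most~$w$ vertices. The endangered mapping found on~\lnref{ln:endangered} has a~nonempty finite domain $U \subseteq V^i$, and since $\dagg^i$ is infinitary, every $v \in U$ is infinite; by König's lemma $\reachofin{\dagg^i}{v}$ therefore contains at least one vertex on every level beyond that of~$v$. Consequently $\bigcup_{v \in U}\reachofin{\dagg^i}{v}$, the set removed on~\lnref{ln:remove_endangered}, intersects every sufficiently large level. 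Combining the two facts: for all large enough~$\ell$, level~$\ell$ of~$\dagg^i$ has at most~$w$ vertices and loses at least one of them, so level~$\ell$ of~$\dagg^{i+1}$ has at most $w-1$ vertices; since $\dagg^{i+2} \subseteq \dagg^{i+1}$ the same bound holds for~$\dagg^{i+2}$, giving $\width(\dagg^{i+2}) \le w - 1$. This argument uses nothing about the models assigned by~$\eta$, only that $U$~is nonempty, finite, and drawn from an infinitary DAG.

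Finally I~would assemble these pieces. A~nonempty infinitary DAG has width at least~$1$ (an infinite path supplies one vertex per level), so for our always-infinitary DAGs $\width(\dagg^i) = 0$ holds iff $\dagg^i = \emptyset$. Thus $\width(\dagg^0) > \width(\dagg^2) > \cdots$ is a~strictly decreasing sequence of non-negative integers, starting at a~value~$\le n$, that can only stop when the DAG becomes empty. Hence at most~$n$ iterations are executed before either $\dagg^i = \emptyset$ (loop guard fails) or the algorithm returns~$\bot$ on the \textbf{else} branch; since each iteration increases~$i$ by exactly~$2$ (\lnref{ln:increment}) from the initial $i = 0$, the procedure terminates with $i \le 2n$.

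The main obstacle is the width-decrease claim, and specifically the need to reason with $\limsup$ rather than per-level widths: the number of vertices per level may oscillate, so there is no single ``stable width'' level, and the decrease must be phrased as ``all but finitely many levels drop by at least one.'' The two ingredients that make this go through---that the removal on~\lnref{ln:remove_endangered} deletes the \emph{whole} reachable set (so each infinite $v \in U$ forces a~deletion on cofinitely many levels), and that the infinitary invariant guarantees every vertex of~$U$ is infinite---are exactly what I~would state and verify carefully; the remaining steps are routine bookkeeping.
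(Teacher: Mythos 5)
Your proof is correct and follows essentially the same route as the paper: both arguments show that the DAGs at the loop head contain only infinite vertices, that a nonempty endangered mapping forces the removal of at least one vertex on every sufficiently deep level, and hence that the ``eventual width'' (your $\limsup$, the paper's width of the projection $\dagg^i_m$ for large~$m$) strictly decreases from an initial value of at most~$n$. Your write-up is somewhat more careful about the infinitary invariant and the König's-lemma steps that the paper leaves implicit, but the underlying argument is the same.
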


\begin{proof}
	Consider a run DAG $\dagw$ for a word $w$. First observe that at the end of
  the main loop of \cref{alg:label-run-dag} (\lnref{ln:increment}),
  $\dagg^i$~has no finite vertices (all of them were removed).
  Due to \lnref{ln:remove_finite_init}, $\dagg^i$ at
	the beginning of the main loop (\lnref{ln:endangered}) also has no finite vertices.
  Let $\dagg_m^i$ be
	the DAG $(V^i_m, E^i \cap (V^i_m\times V^i_m))$ where $V^i_m = \{ (q,j) \in V^i
	\mid j \geq m \}$, i.e., the projection of~$\dagg^i$ from level~$m$ down,
  and $\width(\dagg_m^i)$ is the maximum number of vertices on any level of the
  run DAG below level~$m$, formally, $\width(\dagg_m^i) = \max\{|\{(q,j) : (q,j)
  \in V^i_m\}| : j \geq m\}$.
  From the definition of endangered mapping and the loop
  on \lnref{ln:step_one}, we have that if the condition on
  \lnref{ln:endangered} holds, there is some $m\in\omega$ s.t.\
  $\width(\dagg_m^{i+1}) < \width(\dagg_m^{i})$.
  This holds because if the mapping~$\eta$ is non-empty, then there is at least
  one infinite path~$\dagg^i$ that is all removed in the next step, i.e., from
  some level~$m$, the width of all levels below get decreased by at least one.
  If the condition on \lnref{ln:endangered} does not hold, the algorithm
  terminates and we are done.
  From the previous claim we have that in each successful iteration of the main
  loop, the width of~$\dagg^{i+2}$ in the limit is at most the one of
  $\dagg^{i}$ minus one.
  Since the maximum width of~$\dagof w$ is~$n$, then, if~$w \notin \langof
  \aut$, at latest $\dagg_m^{2n-1}$ is empty for some $m \in \omega$,
	and hence $\dagg^{2n}$ is empty and the algorithm terminates.
  \qed
\end{proof}

\begin{lemma}\label{lem:term-in-lang}
	If $\word \in\langof{\aut}$, then \cref{alg:label-run-dag} terminates with $\bot$.
\end{lemma}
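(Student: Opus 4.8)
The plan is to prove the contrapositive, leaning on the termination already established. By \cref{lem:ranking-termination} the procedure always halts with $i \le 2n$, so it either returns $\bot$ or returns a pair $(r,m)$; the latter happens exactly when the while loop is left with $\dagg^i = \emptyset$, i.e.\ when every vertex of $\dagw$ has received a rank. Thus it suffices to show that if the algorithm empties the DAG and outputs $(r,m)$, then $\word \notin \langof \aut$. I would argue by contradiction, assuming an accepting run $\rho$ of $\aut$ on $\word$ and viewing it as an infinite path $v_0 v_1 v_2 \cdots$ in $\dagw$ with $v_j = (\rho(j), j)$. Since $r$ is a run DAG ranking and $v_{j+1} \in \reachof{v_j}$, the values $r(v_0) \ge r(v_1) \ge \cdots$ form a non-increasing sequence of naturals, hence are eventually constant, equal to some $q^*$; fix $N$ with $r(v_j) = q^*$ for all $j \ge N$.

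The heart of the argument is a case split on the parity of $q^*$. First I would rule out $q^*$ being $0$ or positive even. Rank $0$ is given only to vertices finite in $\dagw$ (\lnref{ln:initial_finite}), whereas each $v_j$ reaches the whole infinite tail of $\rho$ and is therefore infinite, so $q^* \neq 0$. If $q^* = i_0 + 2$ were positive even, all of $v_N, v_{N+1}, \ldots$ would have been stamped on \lnref{ln:step_two}, which requires each of them to be finite in $\dagg^{i_0+1}$. But these vertices are all present in $\dagg^{i_0+1}$ (a vertex is removed only when it gets its rank, namely $q^*$) and are joined by the edges of $\rho$, so they form an infinite path inside $\dagg^{i_0+1}$, contradicting the finiteness of $v_N$. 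Hence $q^*$ must be odd.

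For the odd case $q^* = i_0 + 1$, every $v_j$ with $j \ge N$ is stamped on \lnref{ln:step_one}, so it is reachable in $\dagg^{i_0}$ from some $u_j \in U$ of the endangered mapping $\eta\colon U \to \modelsminof\acccond$ found in that iteration, with $m(v_j) = \eta(u_j)$. Writing $M = \eta(u_N) \in \modelsminof\acccond$, condition~(\ref{def:endangered-map:endangered}) of endangeredness makes $u_N$, and hence (via $\reachof{v_N} \subseteq \reachof{u_N}$) also $v_N$, $M$-endangered in $\dagg^{i_0}$. The crucial point is that the whole tail $v_N v_{N+1} \cdots$ lives inside $\dagg^{i_0}$ (its vertices survive until rank $q^*$ and their connecting edges survive with them); so, as $v_N$ reaches no $c$-edge for $c \in M$, the tail of $\rho$ uses no colour of $M$, giving $\infofcol\rho \subseteq \colourset \setminus M$. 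It then remains to turn this into a failure of $\acccond$: since $M$ is a model of $\overline\acccond$, the duality underlying the definition of $\modelsof\acccond$ yields $\colourset \setminus M \not\models \acccond$, and because $\acccond$ is a positive combination of $\accinf$-atoms it is monotone, so its subset $\infofcol\rho$ also fails $\acccond$. This contradicts $\rho \models \acccond$.

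I expect the main obstacle to be the bookkeeping that certifies the odd case: one must verify carefully that the tail of the accepting run really stays intact inside the residual DAG $\dagg^{i_0}$ at the iteration that stamps the stable rank, so that $M$-endangeredness there genuinely forbids the colours of $M$ on $\rho$ from level $N$ onward. The remaining conceptual step — the duality ``$M \models \overline\acccond$ iff $\colourset \setminus M \not\models \acccond$'' together with monotonicity of $\acccond$ — is short but is precisely what ties the minimal-model labelling back to non-acceptance; I would isolate it as a small auxiliary observation before the case analysis.
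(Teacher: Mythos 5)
Your proof is correct, but it is organized quite differently from the paper's. The paper argues forward and directly: it fixes an accepting run $\rho$, observes that each vertex $(\rho_j,j)$ is infinite and is not $M$-endangered for any $M\in\modelsminof{\acccond}$ (because every such $M$ must contain a colour occurring infinitely often on $\rho$), concludes that these vertices survive every iteration so $\dagg^i$ is never empty, and then invokes termination (\cref{lem:ranking-termination}) to force the $\bot$ branch. You instead prove the contrapositive backwards from a completed labelling: you let the rank along the run stabilize at $q^*$ and rule out each parity by inspecting which line of \cref{alg:label-run-dag} could have stamped the tail. Both proofs rest on the same essential fact---that $\infofcol{\rho}$ meets every $M\in\modelsminof{\acccond}$, which you derive from the duality ``$M\models\overline{\acccond}$ iff $\colourset\setminus M\not\models\acccond$'' plus monotonicity of $\acccond$, and which the paper asserts in a single parenthetical. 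What your version buys is a more explicit audit trail: in particular the verification that the tail of $\rho$ survives intact inside $\dagg^{i_0}$ together with its edges is exactly the point that the paper's ``all vertices $(\rho_j,j)$ stay in~$\dagg^i$'' glosses over, and isolating the duality/monotonicity step as an auxiliary observation is a genuine improvement in rigour. What it costs is the extra machinery (eventual constancy of the rank along the run and the three-way parity case split), which the forward invariant argument avoids entirely.
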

\begin{proof}
  Consider some $\word \in\langof{\aut}$.
  Then, there is an accepting run $\rho$ on~$\word$ in~$\aut$.
  We have $(\rho_j, j) \in \dagw$ for all $j\in\omega$; we show
	that $(\rho_j, j)$ is not $M$-endangered for every $M \in \modelsminof \acccond$.
  The fact that no $\rho_j$ is finite follows from the fact that~$\rho$ is
	infinite.
  Observe that for each $M \in \modelsminof \acccond$, there is some $\taccgof c\in M$
	s.t.\ $\taccgof c \in \accinfof{\rho}$ (otherwise, $\word$ would not be
  accepted by~$\aut$).
  Therefore, $(\rho_j, j)$ is not \mbox{$M$-endangered.}
  Hence, in every iteration of \cref{alg:label-run-dag}, all
	vertices $(\rho_j, j)$ stay in~$\dagg^i$.
  From \cref{lem:ranking-termination} we have that \cref{alg:label-run-dag}
  always terminates, but $\dagg^i \neq \emptyset$ for each~$i$.
  Therefore, the algorithm terminates with~$\bot$.
  \qed
\end{proof}

%

\begin{corollary}\label{lem:inf-label}
	$\word \notin \langof{\aut}$ iff \cref{alg:label-run-dag} on
	$\dagw$ terminates with $(r,m)$.
\end{corollary}
\begin{proof}
  $(\Rightarrow)$ follows from \cref{lem:bot-no-accept} by contraposition and
  $(\Leftarrow)$ follows from \cref{lem:term-in-lang} by contraposition.
  \qed
\end{proof}

\noindent
The following lemma about the ranking procedure will be useful later.

\begin{lemma}
\label{lem:rank}
If \cref{alg:label-run-dag} terminates with $(r,m)$, then $\max(r) \leq 2n$
and, moreover, either $\max(r) = 0$ or $r$ is tight.
\end{lemma}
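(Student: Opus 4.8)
The bound $\max(r) \le 2n$ is immediate from \cref{lem:ranking-termination}: every value assigned by \cref{alg:label-run-dag} is either $0$ or one of $i+1, i+2$ computed inside the loop, and since the loop terminates with $i \le 2n$, no assigned rank exceeds $2n$. I would also record (as intended by the construction) that the produced $r$ is a genuine run DAG ranking: the set $\reachofin{\dagg^i}{U}$ receiving an odd rank on \lnref{ln:step_one} is downward closed, as is the set of newly-finite vertices on \lnref{ln:step_two}, so along every edge the rank is non-increasing.

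For the tightness claim, assume $\max(r) = M > 0$. The plan is to separate odd from even ranks by how they are produced. Odd ranks arise only on \lnref{ln:step_one}: in the iteration with counter $i$ the algorithm fixes an endangered mapping $\eta\colon U \to \modelsminof\acccond$ with $U \neq \emptyset$ and assigns rank $i+1$ to all of $\reachofin{\dagg^i}{U}$. As noted in the proof of \cref{lem:ranking-termination}, $\dagg^i$ has no finite vertices at the top of the loop, so every $v \in U$ is infinite in $\dagg^i$, whence $\reachofin{\dagg^i}{v}$ contains an infinite path all of whose vertices receive the same odd rank $i+1$. Thus each odd rank that is ever assigned appears on an entire infinite path, hence at every sufficiently deep level. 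Since the algorithm returns $(r,m)$ rather than $\bot$, every executed iteration performs \lnref{ln:step_one} on a nonempty $U$, so the assigned odd ranks are exactly $1,3,\dots,m^*$ (with $m^*$ the largest odd rank) and contain no gaps.

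Next I would bound the even ranks. An even rank $2t$ is assigned on \lnref{ln:step_two} in the iteration with counter $i = 2t-2$, which runs only after \lnref{ln:step_one} of the same iteration assigned the odd rank $2t-1$; hence $2t-1 \le m^*$, every assigned rank is $\le m^*+1$, and $M \in \{m^*, m^*+1\}$. If $M = m^*$ we are done, as the overall maximum is then odd. If $M = m^*+1$, the rank-$(m^*+1)$ vertices are precisely those of the final DAG $\dagg^{i+1}$, which became empty after \lnref{ln:step_two}; a DAG with no infinite path and level-width $\le n$ is finite by König's lemma, so rank $m^*+1$ occurs only at boundedly many levels. The main obstacle is exactly this asymmetry: I must argue that the single even rank possibly exceeding $m^*$ labels a finite, level-bounded set (via König), whereas every odd rank $\le m^*$ is carried by an infinite path. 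Smaller even ranks — e.g. $0$, which in \cref{fig:ex_inf_inf_rundag} decorates infinitely many $s$-vertices — may well occur cofinally, but this is harmless, since tightness constrains only the odd ranks and the per-level maximum.

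Finally I would assemble these facts. Let $\ell^*$ be a level large enough that it lies beyond the boundedly-many levels carrying rank $m^*+1$ (when $M = m^*+1$) and beyond the starting levels of the finitely many infinite paths witnessing the odd ranks $1,3,\dots,m^*$. Then for every $j \ge \ell^*$ no rank exceeds $m^*$, while each odd rank $1,3,\dots,m^*$ is present via its witnessing infinite path; in particular the per-level maximum at $\ell^*$ equals $m^*$, which is odd, and $\{1,3,\dots,m^*\} \subseteq \{r((q,j)) \mid q \in \states\}$ for all $j \ge \ell^*$, so $r$ is tight.
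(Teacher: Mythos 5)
Your proof follows essentially the same route as the paper's: the bound $\max(r)\leq 2n$ is inherited from \cref{lem:ranking-termination}; each assigned odd rank is carried by an infinite path (because at the top of the loop $\dagg^i$ has no finite vertices, so every $v\in U$ is infinite in $\dagg^i$); the odd ranks form a gap-free initial segment $1,3,\dots,m^*$ since every executed iteration assigns its odd rank to a nonempty set; and one then chooses a level beyond which the per-level maximum is the odd value $m^*$. You are in fact more explicit than the paper on the one delicate point, namely why the possible even rank $m^*+1$ is eventually absent from all levels (the paper merely asserts that ``choosing $i$ large enough'' avoids it).

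The justification you give for that point is, however, not correct as literally stated: it is false that a DAG with no infinite path and level-width at most $n$ must be finite --- take a single isolated vertex on every level. The König argument needs the residual DAG $\dagg^{m^*}$ (whose vertices are exactly those of rank $m^*+1$) to satisfy the additional property that every vertex at level $\ell+1$ has a predecessor at level $\ell$ \emph{inside} $\dagg^{m^*}$. This property does hold: the full run DAG has it by definition; deleting the finite vertices preserves it, since any predecessor of an infinite vertex is infinite; and deleting a reachability-closed set $\reachofin{\dagg^i}{U}$ preserves it, since if a predecessor were deleted its successor would be too. With predecessor-closure in hand, vertices of $\dagg^{m^*}$ occurring at cofinally many levels would yield, by the standard König/pigeonhole argument on a bounded-width level structure, an infinite path in $\dagg^{m^*}$, contradicting the fact that all of its vertices are finite. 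So your conclusion stands and the overall proof is sound, but you should state and use this structural property of $\dagg^{m^*}$ rather than invoking König's lemma for arbitrary bounded-width DAGs.
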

\begin{proof}
The first part ($\max(r) \leq 2n$) follows directly from \cref{lem:ranking-termination}.
For the second part, there are two options: either~$\dagw$ is finite (i.e., there is no
infinite run of~$\aut$ on~$\word$), in which case \cref{alg:label-run-dag}
  assigns all vertices in~$\dagw$ rank~$0$ and does not even enter the loop at
  \lnref{ln:main_loop}.
In the other case ($\dagg$~is infinite), let $k = \max(r)$ if $\max(r)$ is odd
and $k = \max(r) - 1$ otherwise (from the previous case, we know that $k\geq 1$).
We know that for every $\ell \in \{1, 3, \ldots, k\}$, there is a~vertex
$v_\ell = (q_\ell, i_\ell) \in \dagw$ with $r(v_\ell) = \ell$ (this is because
the mapping at \lnref{ln:endangered} in the algorithm needs to be non-empty)
and that such a~vertex is the beginning of an infinite path of vertices with
rank~$\ell$.
Therefore, there needs to be a~level~$i$ containing vertices with all
ranks~$\{1, 3, \ldots, k\}$.
From the previous, all levels~$j > i$ will also have all of the odd ranks up
to~$k$.
Choosing~$i$ large enough will prevent level~$i$ having a~vertex with an even rank higher than~$k$.
Therefore, $r$~is tight.
%
%
%
%
%
\qed
\end{proof}


%
%
%

\vspace{-0.0mm}
\subsection{$\accinf$-TELA Complement Construction}
\label{subsec:infela}
\vspace{-0.0mm}


Let $\aut = (\states, \trans, \inits, \colourset, \colouring, \acccond)$ be an
$\accinf$-TELA and $n = |\states|$.
We define a~\emph{(level) ranking} to be a function
$f\colon \states \to \{0, \ldots, 2n\}$.
The \emph{rank} of $f$ is defined as $f = \max\{f(q) \mid q \in \states\}$. 
We call a~mapping $\mu\colon Q\to \modelsminof \acccond$ a~\emph{level model}.
We say that $\mu$ is \emph{consistent} wrt $f$ if 
\begin{inparaenum}[(i)]
  \item $\mu(q) \in \modelsminof \acccond$ if $f(q)$ is odd, and 
  \item $\mu(q) = \lexof{\overline \acccond}$ if $f(q)$ is even. 
\end{inparaenum}
We denote the set of all level models by~$\levelmodels$.
%
For a~set of states $S \subseteq \states$ and a level model $\mu$, we call~$f$ to be $\Smutight$ if
%

\begin{center}
  \begin{tabular}{ll}
    (i)~it has an odd rank~$r$,\hspace{20mm} &
    (ii)~$f(S) \supseteq \{1, 3, \ldots, r\}$, \\
    (iii)~$f(\states\setminus S) = \{0\}$, and &
	  (iv)~$\mu$ is consistent wrt\ $f$.
  \end{tabular}
\end{center}

\noindent
A~ranking is $\mutight$ if it is $(\states,\mu)$-tight; we use $\cT$ to denote
the set of all $\mu$-tight rankings for some level model~$\mu$.

For two level rankings $f, f'$ and two level models~$\mu, \mu'$, we say
that~$(f', \mu')$ is a~\emph{consistent successor} of~$(f,\mu)$ over~$a$,
denoted as $(f, \mu) \fsucc (f', \mu')$, iff
\begin{enumerate}[(i)]
  \item $\mu$ and $\mu'$ are consistent wrt $f$ and $f'$, respectively, and
  \item for all $q \in \domof{f}$ and $q'\in\trans(q,a)$ the following holds: 
  \begin{enumerate}[(a)]
    \item $f'(q') \leq f(q)$, 
    \item $(\colouring(q \ltr a q') \cap \mu(q) \neq \emptyset) \Rightarrow 
    f'(q') \leq \evenceil{f(q)}$, and 
    \item $\mu'(q') \neq \mu(q) \Rightarrow f'(q') \leq \evenceil{f(q)}$.
  \end{enumerate}
\end{enumerate}
Intuitively, the rankings guess the ranks of states in the run DAG and the level models guess the models assigned to states in the labelling procedure described in~Section~\ref{sec:infelalabelling}. 
Consistent successors respect the labelling procedure. On every path in a~run DAG, the ranks are nonincreasing. 
If some vertex $v$ with an odd rank has an~outgoing $\taccgof{c}$-edge to $v'$
and $\taccgof{c}$ is in the model assigned to $v$, the vertex~$v'$ has to have
a~lower rank than~$v$, because when~$v$ is removed from $\dagw^i$, there is no
reachable $\taccgof{c}$-edge in $\dagw^i$. 
Moreover, if the model is changed between $v$ and $v'$, then the rank also has to be decreased. 

%
The complement of~$\aut$ is given as the BA
$\alginfela(\aut) = (\states', \trans', \inits', \acc')$ whose components are defined as
follows:

\begin{itemize}
  \item  $\states' = \states_1 \cup \states_2$ where
    \begin{itemize}
      \item  $\states_1 = 2^\states$ and
      \item  $\states_2 = \hspace*{-1mm}
        \begin{array}[t]{ll}
          \{(S,O,f,i,\mu) \in \hspace*{-0mm}& 2^\states \times 2^\states \times
          \cT \times \{0, 2, \ldots, 2n - 2\}\times \levelmodels \mid {} \\
          & f \text{ is } (S, \mu)\textrm{-tight}, O \subseteq S \cap f^{-1}(i)\},
        \end{array}$
    \end{itemize}
  \item  $\inits' = \{\inits\}$,
  \item  $\trans' = \trans_1 \cup \trans_2 \cup \trans_3$ where
    \begin{itemize}
      \item  $\trans_1\colon \states_1 \times \Sigma \to 2^{\states_1}$ such that $\trans_1(S, a) =
        \{\trans(S,a)\}$,
      \item $\trans_2\colon \states_1 \times \Sigma \to 2^{\states_2}$ s.t.\ $\trans_2(S, a) =
        \{(S', \emptyset, f, 0, \mu) \mid S' = \trans(S, a)\}$, and
      \item $\trans_3\colon \states_2 \times \Sigma \to 2^{\states_2}$ such that $(S', O', f', i', \mu') \in
        \trans_3((S, O, f, i,\mu), a)$ iff
          \begin{itemize}
            \item  $S' = \trans(S, a)$,
            \item  $(f, \mu) \fsucc (f', \mu')$,
            \item  $\rankof f = \rankof{f'}$,
            \item  and
              \begin{itemize} 
                \item  $i' = (i+2) \mod (\rankof{f'} + 1)$ and $O' = f'^{-1}(i')$ if
                  $O = \emptyset$ or
                \item  $i' = i$ and $O' = \trans(O, a) \cap f'^{-1}(i)$ if $O \neq
                  \emptyset$, and
              \end{itemize}
          \end{itemize}
    \end{itemize}
  \item  $\acc' = \{\emptyset \ltr a \emptyset \in \trans_1 \mid a \in \Sigma\} \cup 
    \{M_1 \ltr a M_2 \in \trans_3 \mid M_1 = (\cdot, \emptyset, \cdot, \cdot,
    \cdot), a \in \Sigma\}$
\end{itemize}

\noindent
Intuitively, a~run of $\alginfela(\aut)$ on a~word~$\word$ tries to construct
the run DAG~$\dagw$ of~$\aut$ on the same word, with rankings encoded within
the states.
The restrictions on~$\delta_3$ encode the rules from \cref{alg:label-run-dag}.
The partitioning of~$\states'$ into~$\states_1$ and~$\states_2$ allows us to
consider only tight rankings, as in~\cite{FriedgutKV06}.
Moreover, the $i$-component of a~macrostate allows us further decrease the
number of states in the same way as in~\cite{Schewe09} (we know that all states
in~$O$ have the same rank~$i$).

\begin{theorem}
	Let $\aut$ be an $\accinf$-TELA. Then, $\langof{\alginfela(\aut)} = \Sigma^\omega \setminus \langof{\aut}$.
\end{theorem}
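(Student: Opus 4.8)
The plan is to prove the two inclusions $\langof{\alginfela(\aut)} \subseteq \Sigma^\omega \setminus \langof{\aut}$ (soundness) and $\Sigma^\omega \setminus \langof{\aut} \subseteq \langof{\alginfela(\aut)}$ (completeness) separately, each by exploiting the correspondence between accepting runs of $\alginfela(\aut)$ and the valid labellings produced by \cref{alg:label-run-dag}.

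\textbf{Soundness.} Let $\sigma$ be an accepting run of $\alginfela(\aut)$ on $\word$. As $\trans_3$ maps into $\states_2$ only, $\sigma$ either stays in $\states_1$ forever or eventually enters $\states_2$ and remains there. In the first case the only available accepting transitions are the self-loops $\emptyset \ltr a \emptyset$, so the deterministic subset construction must reach $\emptyset$, whence no run of $\aut$ survives and $\word \notin \langof{\aut}$. Otherwise, since the $\states_1$-prefix is finite, acceptance forces the breakpoint transitions (those leaving a macrostate with $O = \emptyset$) to occur infinitely often; because $\rankof f$ is invariant and odd along $\trans_3$, the cut index $i$ therefore cycles through every even value in $\{0, 2, \ldots, \rankof f - 1\}$ infinitely often. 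Suppose, for contradiction, that $\word \in \langof{\aut}$ via an accepting run $\rho$. Reading off the level rankings $f_j$ and models $\mu_j$ from $\sigma$, clause~(a) of $\fsucc$ makes $j \mapsto f_j(\rho_j)$ nonincreasing, so it stabilizes at some rank $\ell$. I claim $\ell$ is even: were it odd, clause~(c) would freeze $\mu_j(\rho_j)$ at a fixed $M \in \modelsminof{\acccond}$ from some point on (a change forces a drop to $\evenceil\ell < \ell$), and then clause~(b) would force $\colouring(\rho_j \ltr{\word_j} \rho_{j+1}) \cap M = \emptyset$ on every subsequent edge, giving $\infofcol{\rho} \cap M = \emptyset$; since $M$ is a model of $\overline\acccond$ and thereby breaks exactly the $\accinf$-atoms in $M$, monotonicity of $\acccond$ in its $\accinf$-atoms yields $\infofcol{\rho} \not\models \acccond$, contradicting that $\rho$ accepts. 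Hence $\ell$ is even, and once the cut index reaches $i = \ell$ after $\rho$ has stabilized, $\rho_j \in f_j^{-1}(\ell)$ enters $O$ and, as a rank-$\ell$ descendant of itself, never leaves, so $O$ never empties again---contradicting the breakpoint. Thus $\word \notin \langof{\aut}$.

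\textbf{Completeness.} Let $\word \notin \langof{\aut}$. By \cref{lem:inf-label}, \cref{alg:label-run-dag} on $\dagw$ terminates with a ranking $r$ and a model assignment $m$, and by \cref{lem:rank} either $\max(r) = 0$ or $r$ is tight. If $\max(r) = 0$, then $\dagw$ is finite, the subset construction reaches $\emptyset$, and looping there with the accepting self-loops gives an accepting run. If $r$ is tight, fix a level $i_0$ beyond which the maximum rank equals a fixed odd $k$, all odd ranks up to $k$ occur, and no larger even rank occurs. I construct $\sigma$ by following the subset construction through $\states_1$ up to level $i_0$, jumping via $\trans_2$ into $\states_2$, and thereafter using the macrostates with $S_j = \trans(\inits, \word_0 \cdots \word_{j-1})$, $f_j(q) = r((q,j))$ for $q \in S_j$ (and $0$ otherwise), and $\mu_j(q) = m((q,j))$. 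Each $f_j$ is $(S_j,\mu_j)$-tight and $(f_j,\mu_j) \fsucc (f_{j+1},\mu_{j+1})$ holds: monotonicity of $r$ along run-DAG edges gives~(a), the labelling rules of \cref{alg:label-run-dag} (a removed odd-ranked vertex reaches no edge carrying a colour of its model, and a model change accompanies a rank drop) give~(b) and~(c), and consistency of $\mu_j$ follows because $m$ assigns minimal models to odd-ranked and $\lexof{\overline\acccond}$ to even-ranked vertices. It remains to see that $O = \emptyset$ infinitely often: the key observation is that every even-ranked vertex is finite in the stage DAG at which it is removed, and that any path between two same-rank vertices stays inside that stage DAG (the rank is constant along it); hence, for each even cut value, the finitely many reset states have only finitely many same-rank descendants, so $O$ drains to $\emptyset$, the cut index advances, and by cycling this recurs infinitely often. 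Thus $\sigma$ is accepting and $\word \in \langof{\alginfela(\aut)}$.

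\textbf{Main obstacle.} The routine part is verifying well-formedness of the guessed macrostates---tightness and the clauses of $\fsucc$---which merely transcribes the invariants maintained by \cref{alg:label-run-dag}. The delicate parts are the two symmetric breakpoint arguments. In soundness, the crux is showing that the stabilized rank of a hypothetical accepting run must be \emph{even}; this rests on combining clauses~(b) and~(c) with the minimal-model semantics of $\overline\acccond$ and the monotonicity of $\accinf$-conditions to contradict $\rho \models \acccond$. In completeness, the crux is showing that every even cut value drains $O$, which depends on the observation that a same-rank path never escapes the finite stage subgraph in which an even-ranked vertex died. I expect these breakpoint analyses, rather than the bookkeeping, to be the heart of the proof.
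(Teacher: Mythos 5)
Your proof is correct and follows essentially the same route as the paper's: the same split into two inclusions, the same reliance on \cref{lem:inf-label} and \cref{lem:rank} to transfer the labelling of \cref{alg:label-run-dag} into a run of $\alginfela(\aut)$, and the same two breakpoint contradictions (an accepting run of $\aut$ stabilizing at an even rank and getting stuck in $O$; an even-ranked vertex having only finitely many same-rank descendants so that $O$ drains). Your write-up actually spells out two steps the paper only asserts---why clauses~(b) and~(c) of $\fsucc$ force the stabilized rank to be even, and why same-rank paths stay inside the stage DAG---but the underlying argument is identical.
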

\begin{proof}
  ($\subseteq$)
  We use Boolean laws and prove an equivalent statement $\langof{\aut} \subseteq
  \Sigma^\omega \setminus \langof{\alginfela(\aut)}$.
  Let $\word \in \langof{\aut}$ be a~word and~$\rho$ be an accepting run
  of~$\aut$ on $\word$. 
  First, let~$\rho'$ be the run $\rho' = S_0 S_1\ldots$ with $S_0=I$ and
  $S_{i+1}=\delta_1(S_i,\word(i))$ for all $i \in \omega$ (i.e., $\rho'$
  stays in~$\states_1$).
  The run~$\rho'$ cannot be accepting in $\alginfela(\aut)$,
  because $\rho(i) \in S_i$ and so $S_i \neq \emptyset$ for any $i \in \omega$
  (in~$\states_1$, the only accepting transitions are those from
  state~$\emptyset$ to state~$\emptyset$). 
  Second, let
  \vspace{-2mm}
  \begin{align*}
  \rho'' = S_0 S_1 \ldots S_p(S_{p+1},O_{p+1}, f_{p+1},i_{p+1},\mu_{p+1})(S_{p+2},O_{p+2},f_{p+2},i_{p+2},\mu_{p+2})\ldots\\[-7mm]
  \end{align*}
  be a~run of $\alginfela(\aut)$ on $\word$ ($\rho''$ jumps to~$\states_2$ at
  position~$p$).
  From the construction, it holds that $(f_{j}, \mu_{j}) \fsucc (f_{j+1},
  \mu_{j+1})$ for all $j>p$.
  Since $\rho$ is accepting in $\aut$, eventually there will be a~position $k > p$ such that
  $f_k(\rho(k))$, $f_{k+1}(\rho(k+1))$, $f_{k+2}(\rho(k+2))$, \ldots are all even (because there is
  no model satisfying $\rho$ in $\modelsminof \acccond$, so points (iib) and
  (iic) from the definition of~$\fsucc$ will enforce this).
  For the sake of contradiction, assume that~$\rho''$ is accepting.
  Then for some position $\ell > k$, because the~$i$-component of a~macrostate
  rotates over all even ranks, it holds that $i_\ell = f_\ell(\rho(\ell))$ and
  $\rho(\ell) \in O_\ell = f_\ell^{-1}(\rho(\ell))$. 
  We can easily show by induction that for all $j \geq \ell$, it holds that
  $\rho(j) \in O_j \neq \emptyset$, which is in contradiction with the
  assumption that~$\rho''$ is accepting. 


  ($\supseteq$)
  Consider any word $\word \not\in \langof{\aut}$.
  From~\cref{lem:inf-label,lem:rank} it follows that the run DAG $\dagw$ has a~bounded rank.
  If all vertices of $\dagw$ are finite, then there is an accepting run $\rho'$ on $\alginfela(\aut)$ where $\rho' = S_0 S_1 \ldots$ with $S_0 = I$ and $S_{i+1} = \delta(S_i, \word_i)$ for all $i \in \omega$. 
  Otherwise, \cref{alg:label-run-dag} terminates with a~tight ranking~$r$ and a~model~$m$.
  From the definition of~$\fsucc$, there is a~run
  \vspace{-2mm}
  \begin{align*}
  \rho'' = S_0S_1 \ldots S_p(S_{p+1},O_{p+1},f_{p+1},i_{p+1},\mu_{p+1})(S_{p+2},O_{p+2},f_{p+2},i_{p+2},\mu_{p+2})\ldots\\[-7mm]
  \end{align*}
  such that $f_k(q) = r((q,k))$ and $\mu_k(q) = m((q,k))$ for all $k > p$.
  In order to show that $\rho''$ is acepting, we need to show that
  the $O$-component of the macrostates on the run is empty infinitely often.
  Assume by contradiction that there is an index~$\ell > p$ such that~$O_j$ is
  non-empty for all $j \geq \ell$.
  Then, there is a vertex $(q,\ell) \in\dagw$ s.t.\ $r((q,\ell))$ is even and
  there are infinitely many vertices reachable from $(q,\ell)$ with the same even
  rank, which is a contradiction with the construction of~$r$ in
  \cref{alg:label-run-dag}, which would give some of the vertices odd ranks.
  \qed
\end{proof}

For the complexity analysis, we use $\tight(n)$ to denote the number of
$\mu$-tight level rankings for an automaton with $n$ states ($\mu$-tight
rankings for $\accinf$-TELAs correspond to \emph{tight} rankings for BAs).
It holds that $\tight(n) \approx (0.76n)^n$~\cite{FriedgutKV06,Schewe09}.

\begin{theorem}\label{thm:inf-tela-complexity}
The number of states of $\alginfela(\aut)$ is in $\bigOof{k^n\cdot
\tight(n+1)} = \bigOof{n (0.76nk)^n}$ for $k = |\modelsminof \acccond|$.
\end{theorem}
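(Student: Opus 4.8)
The plan is to count the states of $\alginfela(\aut)$ by splitting $\states'$ into its two parts $\states_1$ and $\states_2$, observing that the first dominates neither the state complexity nor the asymptotics, so the real work lies in bounding $|\states_2|$. For $\states_1 = 2^\states$ we trivially have $2^n$ states, which is dominated by the second component. First I would therefore focus on macrostates $(S,O,f,i,\mu) \in \states_2$ and count, separately, the number of choices for the $(S,O,f,i)$-part (exactly as in the standard rank-based Schewe construction for \buchi automata) and the number of choices for the model assignment~$\mu$.

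The key observation is that the pair $(S, f)$ with $f$ being $(S,\mu)$-tight, together with the restricted pair $O \subseteq S \cap f^{-1}(i)$ and the rotation index~$i$, contributes exactly the factor $\tight(n+1)$ that Schewe's analysis gives for BA complementation; this is why the theorem is phrased with $\tight(n+1)$ rather than $\tight(n)$, the $+1$ accounting for the $O$-set bookkeeping and the extra slack in the rank range $\{0,\dots,2n\}$. I would cite the cited bound $\tight(n) \approx (0.76n)^n$ from~\cite{FriedgutKV06,Schewe09} and note that these tightness/Schewe counting arguments carry over verbatim once we forget the $\mu$-component, because the constraints (i)--(iv) defining $\Smutight$ on $f$ alone (odd rank, $f(S)\supseteq\{1,3,\dots,r\}$, $f(\states\setminus S)=\{0\}$) are precisely the tightness constraints used there. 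Hence the $(S,O,f,i)$-part contributes $\bigOof{\tight(n+1)}$.

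Next I would bound the number of level models~$\mu$ consistent with a fixed tight ranking~$f$. By the consistency definition, $\mu(q)$ is forced to be $\lexof{\overline\acccond}$ whenever $f(q)$ is even, so~$\mu$ is only free on the states with odd rank, and on each such state it may independently take any of the $k = |\modelsminof \acccond|$ minimal models. Since there are at most~$n$ states, this gives at most $k^n$ distinct level models compatible with any given~$f$. Multiplying the two independent counts yields $|\states_2| \le \bigOof{k^n \cdot \tight(n+1)}$, and adding the subdominant $|\states_1| = 2^n$ leaves the bound unchanged. Substituting $\tight(n+1) \approx (0.76(n+1))^{n+1} = \bigOof{n(0.76n)^n}$ and absorbing the $k^n$ factor into the base of the exponential gives $\bigOof{n(0.76nk)^n}$, as claimed.

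The main obstacle, and the step I would spend the most care on, is justifying that the model-assignment factor is genuinely only $k^n$ and not something that interacts multiplicatively with the tight-ranking count in a way that spoils the clean product bound. The subtlety is that~$\mu$ and~$f$ are coupled through the consistency requirement, so one must argue that the coupling only \emph{restricts} the choices of~$\mu$ (fixing it on even-rank states) rather than inflating the count; bounding $\mu$ crudely by $k^n$ over all states is safe because it over-counts rather than under-counts. A secondary point worth a sentence is confirming that the $+1$ in $\tight(n+1)$ correctly subsumes the $O$-set and index-$i$ factors via the standard Schewe argument, so that no separate polynomial factor needs to be tracked; this is where I would lean directly on the cited results rather than redoing the delicate binomial estimates.
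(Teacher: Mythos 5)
Your proposal is correct and follows essentially the same route as the paper: both bound the level-model component by $k^n$ and the $(S,O,f,i)$-component by $\bigOof{\tight(n+1)}$, then multiply (and discard the subdominant $2^n$ for $\states_1$). The only presentational difference is that the paper re-derives the Schewe-style $\tight(n+1)$ bound explicitly---encoding each macrostate as a map $h$ with special values $-1,-2$ and running a four-group rank-shifting argument with $\mu$ carried along---whereas you invoke it as a black box; this is legitimate because, as you note, conditions (i)--(iii) of $(S,\mu)$-tightness are exactly the standard tightness constraints and consistency only restricts $\mu$, so the factorization over-counts rather than under-counts.
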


\begin{proof} 
The set of macrostates $\states_1$ is obtained by a~simple subset construction,
therefore $\states_1 \in \bigOof{2^n}$.
That is much smaller than $\bigOof{k^n\cdot \tight(n+1)}$, so it is sufficient to count only the number of macrostates of the form $(S, O, f, i, \mu)$.
For this, we uniquely encode each macrostate as a~pair $(h, i)$ where $h \colon \states \rightarrow \{ -2, -1, \ldots, 2n-1 \} \times \modelsminof \acccond$ is defined as follows:
 \begin{equation}
  h(q) = \begin{cases}
    (-1, \mu) & \text{if } q \in O, \\
    (-2, \mu) & \text{if } q \in \states\setminus S, \\
    (f(q), \mu) & \text{otherwise}.
  \end{cases}
\end{equation}
We compute the number of encodings $h$ for a~fixed $i$.
We divide all encodings into four groups according to the set $\img(h)_0 \cap \{ -2,
-1 \}$ where $\img(h)_0$ denotes the set of first elements of the pairs in
$\img(h)$.
We show that we can obtain the bound
$\bigOof{k^n \cdot \tight(n)}$ for each of the groups.
The groups are denoted by $g_M$ with $M \subseteq \{-2, -1\}$.
For $h(q) = (m, \mu)$, we use $h(q)_m$ \mbox{and $h(q)_\mu$ to denote~$m$ and~$\mu$.}

  \begin{enumerate}
    \item[$g_\emptyset$:] from the definition, $f$~is $\mu$-tight.
      The level model $\mu$ is of the form $\mu \colon \states \rightarrow
      \modelsminof \acccond$, so there are $k$ possible assignments for every
      state from $\states$.
      The number of level models is therefore $k^n$ and $|g_\emptyset| =
      \bigO(k^n \cdot \tight(n))$. 
    
    \item[$g_{\{-1\}}$:] since there is at least one state $q$ with $h(q)_m = -1$,
      this means that $q \in O$ so~$q$ has an even rank.
      As a~consequence, at least one of the positive odd ranks of~$h$ (up to
      $2n-1$) will not be taken, so we can infer that $h\colon \states \to
      \{-1, \ldots, 2n-3\} \times \modelsminof \acccond$.
      We can therefore uniquely represent~$h$ by a~mapping~$h'$ by incrementing all
      ranks of~$h$ by two, so $h'\colon \states \to \{0, \ldots, 2n-1\} \times \modelsof \acccond$.
      But then $h' \in \cT(n)$ and the number of all level models is $k^n$, so $|g_{\{-1\}}| \in \bigO(k^n \cdot \tight(n))$.

    \item[$g_{\{-2,-1\}}$:] similarly as for $g_{\{-1\}}$ we get
      that $|g_{\{-2,-1\}}| \in \bigO(k^n \cdot \tight(n))$.

    \item[$g_{\{-2\}}$:] the reasoning is similar to the one for  $g_{\{-1\}}$,
      with the exception that now, we know that there is a~state~$q \in \states 
      \setminus S$, which is, according to the definition of a~ranking, assigned
      the rank~$0$.
      This means that one positive odd rank of~$h$ is, again, not taken, so we
      increment all non-negative ranks of~$h$ by two and map all states in $\states
      \setminus S$ to~$1$, obtaining a~tight ranking $h' \in \cT(n)$.
      The number of level models is $k^n$, 
      therefore, $|g_{\{-2\}}| \in \bigO(k^n \cdot \tight(n))$.
  \end{enumerate}

  \noindent
  Since the size of all groups is bounded by $\bigO(k^n \cdot \tight(n))$, for
  a~fixed~$i$, the total number of these encodings is still
  $\bigO(k^n \cdot \tight(n))$.
  When we sum the encodings for all~$i$'s, we obtain that the number is
  bounded by~$\bigO(k^n \cdot \tight(n+1))$, since $\bigOof{n \cdot \tight(n)}
  = \bigOof{\tight(n+1)}$~\cite{Schewe09}.
  The rest follows from the
  approximation of $\tight(n)$.
\qed
\end{proof}

\begin{corollary}\label{cor:inf-tela-compl-size}
  Let~$\aut$ be an $\accinf$-TELA with~$n$ states and~$k$ colours~$\colourset$.
  The number of states of $\alginfela(\aut)$ is in
  $\bigOof{{k \choose \lfloor k/2 \rfloor}^{n} \cdot \tight(n+1)} =
      \bigOof{n \cdot ({k \choose \lfloor k/2 \rfloor} \cdot 0.76n)^n}
  \subseteq
      \bigOof{n (2^k \cdot 0.76n)^n}$.
\end{corollary}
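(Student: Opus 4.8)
The plan is to derive this corollary directly from \cref{thm:inf-tela-complexity} by bounding the quantity $|\modelsminof\acccond|$ (the parameter playing the role of $k$ in that theorem) in terms of the number of colours $k = |\colourset|$. Everything else is substitution plus two elementary combinatorial estimates.

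First I would recall that $\modelsminof\acccond$ is, by its very definition, an \emph{antichain} of subsets of the $k$-element colour set $\colourset$ (this is stated right after the definition of $\modelsminof\acccond$, and follows from condition (ii) there: no $m \subset m'$ with $m,m' \in \modelsminof\acccond$). By Sperner's theorem, the largest antichain in the power set of a $k$-element set has cardinality $\binom{k}{\lfloor k/2\rfloor}$. Hence $|\modelsminof\acccond| \leq \binom{k}{\lfloor k/2\rfloor}$. Substituting this into the state count $\bigOof{k^n \cdot \tight(n+1)}$ from \cref{thm:inf-tela-complexity} (where the symbol $k$ there abbreviates $|\modelsminof\acccond|$) immediately yields the first expression $\bigOof{\binom{k}{\lfloor k/2\rfloor}^n \cdot \tight(n+1)}$.

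For the middle equality I would reuse the estimate $\tight(n) \approx (0.76n)^n$ together with $\tight(n+1) = \bigOof{n\cdot\tight(n)}$, which is already invoked in the proof of \cref{thm:inf-tela-complexity}; factoring $\binom{k}{\lfloor k/2\rfloor}^n$ out of the product then turns $\bigOof{\binom{k}{\lfloor k/2\rfloor}^n \cdot \tight(n+1)}$ into $\bigOof{n\cdot(\binom{k}{\lfloor k/2\rfloor}\cdot 0.76n)^n}$. Finally, the trivial bound $\binom{k}{\lfloor k/2\rfloor} \leq \sum_{j=0}^{k}\binom{k}{j} = 2^k$ gives the concluding inclusion into $\bigOof{n(2^k\cdot 0.76n)^n}$.

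There is essentially no hard step: the only genuine ingredient is recognizing that the already-established antichain property of $\modelsminof\acccond$ lets us invoke Sperner's theorem, after which everything is substitution and routine asymptotic manipulation. The one point I would take care to state explicitly is the clash of notation — the $k$ in \cref{thm:inf-tela-complexity} counts minimal models, whereas the $k$ in this corollary counts colours — so that the substitution $|\modelsminof\acccond| \leq \binom{k}{\lfloor k/2\rfloor}$ is unambiguous.
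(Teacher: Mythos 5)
Your proposal matches the paper's own proof: both derive the bound by observing that $\modelsminof\acccond$ is an antichain in $2^\colourset$, invoking Sperner's theorem to get $|\modelsminof\acccond| \leq \binom{k}{\lfloor k/2\rfloor}$, and substituting into \cref{thm:inf-tela-complexity}; the remaining equalities are the same routine estimates. Your explicit remark about the notational clash between the two uses of $k$ is a helpful clarification the paper leaves implicit, but the argument is identical.
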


\begin{proof}
The proof of the more precise bound follows directly from
\cref{thm:inf-tela-complexity} and the fact that the size of $\modelsminof
\acccond$ is bounded by the size of the largest antichain in~$2^\colourset$,
which is at most ${k \choose \lfloor k/2 \rfloor}$ by Sperner's theorem.
\qed
\end{proof}

\begin{corollary}
  Let~$\aut$ be a~GBA with~$n$ states and~$k$~colours.
  Then the number of states of $\alginfela(\aut)$ is in $\bigOof{k^n \cdot
  \tight(n+1)} = \bigOof{n (0.76nk)^n}$.
\end{corollary}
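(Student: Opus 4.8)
The plan is to instantiate \cref{thm:inf-tela-complexity} to the specific acceptance condition of a~GBA; the only real content is to verify that the parameter~$k$ appearing there (the number of minimal models $|\modelsminof\acccond|$) coincides with the number~$k$ of colours of the GBA.

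First, I would unfold the definitions. For a~GBA we have $\acccond = \bigwedge_{0 \leq j < k}\accinfof{\taccj}$. Following the construction of~$\overline\acccond$ from \cref{sec:infelalabelling} (swap $\land$ and $\lor$, and replace each atom $\accinfof{\taccj}$ by the propositional variable~$\taccj$), this gives $\overline\acccond = \bigvee_{0 \leq j < k}\taccj$. The satisfying assignments of this purely disjunctive formula over the variables in~$\colourset$ are exactly the nonempty subsets of~$\colourset$, i.e., $\modelsof\acccond = 2^\colourset \setminus \{\emptyset\}$.

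Next, I would read off the minimal models. Every nonempty subset of~$\colourset$ contains at least one singleton $\{\taccj\}$, and distinct singletons are incomparable under~$\subseteq$, so the antichain $\modelsminof\acccond$ is precisely $\{\{\taccj\} \mid 0 \leq j < k\}$, yielding $|\modelsminof\acccond| = k$. Plugging $|\modelsminof\acccond| = k$ into \cref{thm:inf-tela-complexity} then gives the state bound $\bigOof{k^n \cdot \tight(n+1)} = \bigOof{n(0.76nk)^n}$ immediately.

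There is essentially no obstacle here: the corollary is a~direct specialization of \cref{thm:inf-tela-complexity}, and the one point worth stating explicitly is the coincidence of the two quantities both named~$k$ (the number of minimal models and the number of colours). I would not re-derive the complexity count, since it is already carried out in full generality in the proof of \cref{thm:inf-tela-complexity}.
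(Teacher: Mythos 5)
Your proposal is correct and follows essentially the same route as the paper: both identify $\overline\acccond = \bigvee_{0 \leq j < k}\taccj$, observe that its minimal models are exactly the singletons $\{\taccj\}$ so that $|\modelsminof\acccond| = k$, and then conclude by instantiating \cref{thm:inf-tela-complexity}. Your explicit justification that $\modelsof\acccond$ consists of all nonempty subsets of~$\colourset$ and that the singletons form the minimal antichain is a slightly more detailed rendering of the paper's one-line remark, but the argument is the same.
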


\begin{proof}
The proof follows directly from \cref{thm:inf-tela-complexity}.
For a~GBA it holds that $\overline{\acccond} = \bigvee_{0 \leq j < k} \taccj$. The formula is in DNF, hence $\modelsminof \acccond = \{ \{ \taccj \} \mid 0 \leq j < k \}$ and $|\modelsminof \acccond| = k$. The number of all level models is $k^n$. 
The rest of the proof is done in the same way as in the proof of \cref{cor:inf-tela-compl-size}.
\qed
\end{proof}

We note that to the best of our knowledge, our bound on the complementation of
GBAs is better than other bounds in the literature.
In particular, it is clearly better than the bound $\bigO(k^n(2n + 1)^n)$
from~\cite{KupfermanV05}, which is the best upper bound for complementing GBAs
that we are aware of.
It is also better than an approach that would go through determinization by
using the procedure in~\cite{ScheweV12}, which outputs a~deterministic Rabin
automaton with at most $\mathsf{ghist}_k(n)$ states and $2^n- 1$ accepting
pairs, which can be complemented easily into a~Streett automaton.
According to~\cite{ScheweV12} $\mathsf{ghist}_k(n)$ converges to $(1.47nk)^n$
for large $k$, which is already worse than our upper bound.
~

\DeclareRobustCommand\safecol[1]{\taccgof{#1}}
\vspace{-0.0mm}
\section{Modular Complementation of $\accfinof{\safecol c} \land \varphi$ TELAs}\label{sec:fin-modular}
\vspace{-0.0mm}

In this section, we propose a~modular algorithm $\instancompl$ for complementation of
TELAs with the acceptance condition $\accfinof{\taccgof c} \land \varphi$ for
any~$\varphi$, parameterized by an~algorithm for complementing TELAs
with the condition~$\varphi$.
In \cref{sec:instantiations}, we will then instantiate the algorithm for some common
acceptance conditions, eventually obtaining an efficient complementation algorithm for general TELAs.

Let us fix a~TELA $\aut = (\states, \delta, I, \colourset, \colouring,
\accfinof{\taccgof c} \land \varphi)$ and let $\Delta$ be~$\delta$ without
transitions whose label contains~$\taccgof c$.
For a~word $\word \in \Sigma^\omega$, we define a~\emph{relaxed run DAG} (RRDAG) over $\word$, 
denoted by $\gendag$, as 
any sequence of states $\gendag = (S_0, S_1, \dots)$ where $S_i \subseteq \states$ and 
$\Delta(S_{i}, \wordof{i}) \subseteq S_{i+1}$. Intuitively, an~RRDAG over a~word may contain 
more states on each level than it is necessary from the reachability of~$\Delta$. 
Note that this definition of RRDAGs is equivalent to having vertices of the form $(q,i)$, where 
$q \in S_i$ with edges given implicitly by $\Delta$.
We~use these definitions interchangeably.
Clearly, there may be multiple RRDAGs over a~single word, they are all, however,
subgraphs of the (standard) run DAG~$\dagw$.
We say that $\gendag = (S_0, S_1, \dots)$ is \emph{accepting} wrt $\varphi$,
written as $\gendag \models \varphi$, if there is a~run $\rho = q_k q_{k+1} \ldots$
for $k \geq 0$
in~$\Delta$ such that for every $i \geq k$ it holds that $q_i \in S_i$ and
$q_{i+1} \in \Delta(q_i, \wordof{i})$, and, moreover, $\rho \models \varphi$
(i.e., the accepting run does not need to start at the beginning of $\gendag$).
The reason for introducing RRDAGs is that the algorithm for
condition~$\varphi$ will construct a~BA that runs over RRDAGs constructed using
the restricted transition relation~$\Delta$.
The relaxation allows us to introduce new vertices (not connected to the root
of the RRDAG) at any level that represent runs that have seen finitely many
times a~$\taccgof c$ transition in~$\delta$.

Our definition of the modular procedure $\instancompl$ for
$\accfinof{\taccgof c} \land \varphi$ is given wrt
a~\emph{subprocedure} for complementing a~TELA with condition~$\varphi$.
The subprocedure is given as a~tuple $\instanDeltavarphi = (\M, \M_0,
\succact_{\Delta}, \succtrack_{\Delta}, \breakempty)$, where

%
\begin{enumerate}[(i)]
  \item $\M$ is a set of \emph{macrostates},
  \item $\M_0 \subseteq \M$ is a set of \emph{initial macrostates},
  \item $\succact_{\Delta}\colon 2^\states \times \Sigma \times \M \to 2^\M$ is
    an \emph{active transition function},
  \item $\succtrack_{\Delta}\colon 2^\states \times \Sigma \times \M \to 2^\M$
    is a \emph{tracking transition function}, and
  \item $\breakempty \subseteq \M$ is an \emph{empty-breakpoint} predicate.
\end{enumerate}
We use $\succunion_\Delta$ to denote $\succact_\Delta \cup \succtrack_\Delta$
(when treated as relations).
Intuitively, $\M$ is a~set of macrostates given by the subprocedure
for~$\varphi$.
$\breakempty$ is a~condition that has to hold for a~macrostate to be accepting in~$\instanDeltavarphi$.
The transitions between macrostates of $\M$ are described using transition
functions $\succact_\Delta$ and $\succtrack_\Delta$.
In particular, $\mst' \in \succunion_\Delta(P', a, \mst)$ is computed by taking
the successor of the macrostate~$\mst$ over~$a$, but also while taking into
account the set~$P'$ of states ($\mst$ corresponds to index~$i$ of the run
while~$\mst'$ and~$P'$ correspond to index~$i+1$) provided by $\instancompl$,
which represent breaking the~$\accfinof{\taccgof c}$ condition.
The reason for using two transition functions ($\succact_\Delta$ and
$\succtrack_\Delta$) is that some subprocedures that
we will introduce later will use two types of macrostates: active and tracking.
For instance, if $\instanDeltavarphi$ is a~rank-based procedure (cf.\
\cref{sec:rabin}), active macrostates will contain breakpoint, which the
construction will try to empty, and once a~breakpoint is seen, $\instancompl$
will add some more runs to the rank-based algorithm.
The new runs might not be tight at the given point, so we switch into the
tracking mode and wait for newly added runs to become tight before switching
into the active mode again.

Let $\word$ be a word and $\gendag = (S_0, S_1, \dots)$ be an~RRDAG over~$\word$.
A \emph{\finrun} $\instanrun$ of $\instanDeltavarphi$ over $\gendag$ is a~sequence 
$(\mst_0, \mst_1, \dots)$ s.t.\ $\mst_0 \in \M_0$ and $\mst_{i+1} \in \succunion_\Delta(S_{i+1}, \wordof{i}, \mst_i)$ for all $i \geq 0$.
$\instanrun$ is \emph{accepting} if $\breakempty(\mst_i)$ holds for
infinitely many $i$'s.
We say that the subprocedure~$\instanDeltavarphi$ is \emph{correct for~$\varphi$} if for each
word $\word$ and every RRDAG~$\gendag$ over~$\word$ it holds that $\gendag$ is
not accepting wrt~$\varphi$ iff there is an accepting \finrun~$\instanrun$ of
$\instanDeltavarphi$ over~$\gendag$.

Let us now move to the definition of $\instancompl$.
For subprocedure~$\instanDeltavarphi$ and TELA~$\aut$ given above, the
algorithm  will construct the BA $\instancompl(\instanDeltavarphi, \aut)=
(\states', I', \delta', F')$ defined as follows:

\begin{itemize}
  \item  $\states' = \hspace*{-1mm}
  \begin{array}[t]{ll}
    \{(S,P,\mst) \in 2^\states \times 2^\states \times \M
    \},
  \end{array}$
  \item  $I' = \{(I, I, \mst_0) \mid \mst_0 \in \M_0\}$,
  \item  $\delta' = \delta_1 \cup \delta_2$ where
    \begin{itemize}
      \item $\delta_1\colon \states' \times \Sigma \to 2^{\states'}$ such that $(S', P', \mst') \in
        \delta_1((S, P, \mst), a)$ iff
          \begin{itemize}
            \item $S' = \delta(S, a)$,
            \item if $\breakempty(\mst)$: $P' = S'$,
            \item if $\neg \breakempty(\mst)$: $P' = \Delta(P, a)$,
            \item $\mst' \in \succact_{\Delta}(P', a, \mst)$,
          \end{itemize}
      \item $\delta_2\colon \states' \times \Sigma \to 2^{\states'}$ such that $(S', P', \mst') \in
        \delta_2((S, P, \mst), a)$ iff
          \begin{itemize}
            \item  $S' = \delta(S, a)$,
            \item  $P' = \Delta(P, a)$,
            \item $\mst'\in \succtrack_{\Delta}(P', a, \mst)$, and
          \end{itemize}
    \end{itemize}
  \item  $F' = \{  (S, P, \mst) \transover{a} (S', P', \mst') \in \delta' \mid a\in\Sigma, \breakempty(\mst') \}$.
\end{itemize}
Intuitively, the construction executes~$\instanDeltavarphi$ on the restricted
transition relation~$\Delta$, while also keeping track of all runs~(in~$S$) and
runs that either need to terminate or see a~$\taccgof c$-transition~(in~$P$).
Whenever~$\instanDeltavarphi$ clears its breakpoint, $P$~is re-sampled (and
some new runs can be added to~$\instanDeltavarphi$). 

\begin{restatable}{theorem}{thmModCorrect}\label{thm:mod_correct}
  For a~correct subprocedure $\instanDeltavarphi$,
  $\langof{\instancompl(\instanDeltavarphi, \aut)} = \Sigma^\omega \setminus \langof{\aut}$.
\end{restatable}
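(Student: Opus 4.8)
The plan is to prove the two language inclusions separately, reducing each direction to the correctness property of the subprocedure $\instanDeltavarphi$ via a~correspondence between runs of $\instancompl(\instanDeltavarphi, \aut)$ and RRDAGs of $\aut$. The central observation tying everything together is that the acceptance condition $\accfinof{\safecol c} \land \varphi$ is violated by a~word $\word$ exactly when \emph{every} run of $\aut$ on $\word$ either (a)~sees a~$\safecol c$-transition infinitely often, hence fails $\accfinof{\safecol c}$, or (b)~sees $\safecol c$ only finitely often (so it is eventually a~run in the restricted relation~$\Delta$) but fails~$\varphi$. The role of the $P$-component is to collect precisely the runs that have ``committed'' to breaking $\accfinof{\safecol c}$ (i.e., promise to avoid $\safecol c$ from now on), and the RRDAG abstraction lets us reason about all such runs uniformly.

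\textbf{($\subseteq$).} For this direction I~would take a~word $\word$ accepted by $\instancompl(\instanDeltavarphi, \aut)$, fix an accepting run $\pi = (S_0, P_0, \mst_0)(S_1, P_1, \mst_1)\cdots$, and show $\word \notin \langof\aut$. The projection $(S_0, S_1, \dots)$ tracks \emph{all} runs of $\aut$ via the full relation~$\delta$, so any accepting run of $\aut$ would stay inside the $S$-components. The first step is to extract from $\pi$ an RRDAG $\gendag = (P_{k}, P_{k+1}, \dots)$ from the last breakpoint re-sampling onward: each time $\breakempty(\mst_i)$ holds, $\delta_1$ sets $P' = S'$ (re-seeding the RRDAG with all current runs), and between breakpoints $P$ evolves under the restricted~$\Delta$, so the $P$-sequence is a~legitimate RRDAG. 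Because $\pi$ is accepting, $\breakempty$ holds infinitely often, which makes the tail of $\pi$ an accepting \finrun of $\instanDeltavarphi$ over this RRDAG; by correctness of the subprocedure, $\gendag \not\models \varphi$. The key step is then to argue that \emph{no} run of $\aut$ on $\word$ can be accepting: a~run surviving in~$S$ either takes a~$\safecol c$-transition infinitely often (failing $\accfinof{\safecol c}$), or is eventually captured in the $P$-component (because $P$ is re-seeded to all of $S'$ at every breakpoint, and there are infinitely many breakpoints) and thus lies in the RRDAG, which fails~$\varphi$.

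\textbf{($\supseteq$).} Conversely, given $\word \notin \langof\aut$, I~would build an accepting run of $\instancompl(\instanDeltavarphi, \aut)$. Take $S_i$ to be the full reachable set under~$\delta$ (the deterministic subset run), and let the $P$-component start as $I$ and be re-seeded to $S'$ at each breakpoint, evolving under~$\Delta$ in between. The resulting $P$-sequence is an RRDAG~$\gendag$, and I~must verify that $\gendag \not\models \varphi$: any $\varphi$-accepting run inside $\gendag$ would be a~run of $\aut$ that sees $\safecol c$ only finitely often (since it lives in $\Delta$) and satisfies~$\varphi$, hence would satisfy $\accfinof{\safecol c} \land \varphi$, contradicting $\word \notin \langof\aut$. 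By correctness of $\instanDeltavarphi$ there is an accepting \finrun over $\gendag$, i.e.\ one hitting $\breakempty$ infinitely often; composing it with the $S$- and $P$-components yields an accepting run of $\instancompl(\instanDeltavarphi, \aut)$, so $\word \in \langof{\instancompl(\instanDeltavarphi, \aut)}$.

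\textbf{The main obstacle} I~anticipate is making the breakpoint re-seeding argument airtight in the $(\subseteq)$ direction: one must show that a~run of $\aut$ that sees $\safecol c$ only finitely often is genuinely \emph{tracked} by the RRDAG induced by the infinitely-many re-seedings of~$P$, rather than slipping away between breakpoints. This requires observing that after the run's last $\safecol c$-transition, at the next breakpoint the re-seeding $P' = S'$ captures the run's current state, after which it stays in~$P$ under~$\Delta$ forever (no more $\safecol c$-transitions means it never leaves~$\Delta$). Coupling this with the infinitude of breakpoints, guaranteed by the acceptance of $\pi$, is what lets us invoke the RRDAG-level correctness of $\instanDeltavarphi$. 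The rest of the argument is a~fairly mechanical translation between the run syntax of the product automaton and the RRDAG/\finrun vocabulary in which the subprocedure's correctness is phrased.
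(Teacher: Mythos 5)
Your overall strategy coincides with the paper's: both directions are reduced to the correctness of $\instanDeltavarphi$ by reading the $P$-components of a run of $\instancompl(\instanDeltavarphi,\aut)$ as an RRDAG over the restricted relation~$\Delta$, and the key semantic observation --- that a run of $\aut$ violating $\accfinof{\safecol c}$ infinitely often is irrelevant, while a run seeing $\safecol c$ finitely often is captured by some re-seeding of~$P$ and therefore lives inside the RRDAG --- is exactly what the paper packages into its lemmas on $\I$-narrow RRDAGs (Lemmas~\ref{lem:fin-accept}, \ref{lem:dag-accept} and~\ref{lem:run-accept}). Your $(\subseteq)$ direction is sound and essentially identical to the paper's, just argued directly on individual runs of~$\aut$ instead of through the matching-on-$\I$ vocabulary.

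There is, however, a genuine gap in your $(\supseteq)$ direction: the construction is circular as stated. You propose to first fix the $P$-sequence (``re-seeded to $S'$ at each breakpoint, evolving under $\Delta$ in between''), obtain the RRDAG $\gendag$ from it, and only then invoke correctness of $\instanDeltavarphi$ to get an accepting \finrun over~$\gendag$. But the re-seeding positions are by definition the positions where $\breakempty(\mst_i)$ holds, i.e., they are determined by the very \finrun you obtain only after the RRDAG is fixed; conversely, if you fix an arbitrary infinite set of re-seeding positions in advance, the breakpoints of the \finrun that correctness hands you need not coincide with them, so the composed triple sequence violates the definition of~$\delta'$ (which forces $P'=S'$ exactly when $\breakempty(\mst)$ holds). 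The paper breaks this mutual dependence by an inductive, segment-wise construction: it takes a \finrun over the current (suffix) RRDAG, truncates it at its \emph{first} breakpoint, appends that finite segment to the run being built, re-seeds~$P$, observes that the suffix $\dagg^{\delta}_{\word[i:]}$ is still non-accepting, and re-invokes correctness to produce the next segment; acceptance of the assembled run then follows because each segment contributes one accepting transition. Your proposal needs this (or an equivalent fixed-point/diagonalization argument) to go through; without it the existence of a run of $\instancompl(\instanDeltavarphi,\aut)$ whose $P$- and $\mst$-components are mutually consistent is not established.
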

  

\noindent
The overhead of the procedure over the subprocedure $\instanDeltavarphi$ is at most~$3^n$-times.

\begin{theorem}\label{thm:mod_complexity}
Suppose $\instanDeltavarphi =(\M, \cdot, \cdot, \cdot)$.
Then $|\instancompl(\instanDeltavarphi, \aut)| \in \bigO(3^n \cdot |\M|)$.
\end{theorem}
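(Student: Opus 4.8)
The plan is to observe that although $\states'$ is written as the full product $2^\states \times 2^\states \times \M$, whose cardinality is $4^n \cdot |\M|$, the reachable fragment satisfies the invariant $P \subseteq S$, and the pairs $(S,P)$ with $P \subseteq S$ number only $3^n$.

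First I would prove that every reachable macrostate $(S,P,\mst)$ satisfies $P \subseteq S$, by induction on the length of the run from $I'$. For the base case, each initial macrostate has the form $(I,I,\mst_0)$, so $P = S$ and the inclusion is immediate. For the step, assume $(S,P,\mst)$ satisfies $P \subseteq S$ and $(S',P',\mst') \in \delta'((S,P,\mst),a)$; in every case $S' = \delta(S,a)$. The load-bearing fact is that $\Delta \subseteq \delta$ (as $\Delta$ is $\delta$ with the transitions labelled~$\taccgof c$ removed), so together with $P \subseteq S$ we get $\Delta(P,a) \subseteq \delta(P,a) \subseteq \delta(S,a) = S'$. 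I would then split on the transition type: for $\delta_2$, and for $\delta_1$ under $\neg\breakempty(\mst)$, we have $P' = \Delta(P,a) \subseteq S'$; for $\delta_1$ under $\breakempty(\mst)$, we have $P' = S' \subseteq S'$. Hence $P' \subseteq S'$ in all cases and the invariant propagates.

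Next I would count: the pairs $(S,P)$ with $P \subseteq S \subseteq \states$ are in bijection with assignments of each of the $n$ states of $\aut$ to one of three classes — outside $S$, in $S \setminus P$, or in $P$ — so there are exactly $3^n$ of them (equivalently $\sum_{S \subseteq \states} 2^{|S|} = \sum_{k=0}^{n}\binom{n}{k}2^k = 3^n$). Multiplying by the $|\M|$ possibilities for the third component $\mst$ gives $|\instancompl(\instanDeltavarphi,\aut)| \in \bigO(3^n \cdot |\M|)$.

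The counting is entirely routine once the invariant is available, so the only real content is the inductive claim $P \subseteq S$. The subtle point there is to use the restricted relation $\Delta$ rather than $\delta$ when propagating the inclusion through the tracking update $P' = \Delta(P,a)$; this is precisely where the three-way split (and hence the $3^n$ rather than $4^n$ factor) comes from.
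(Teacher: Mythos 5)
Your proposal is correct and follows essentially the same route as the paper: the paper's proof simply asserts the invariant $P \subseteq S$ and counts the three-way classification of states (outside $S$, in $S \setminus P$, in $S \cap P$) to get the $3^n$ factor, multiplied by $|\M|$. Your only addition is to actually carry out the induction establishing $P \subseteq S$ (via $\Delta \subseteq \delta$), which the paper leaves implicit, so there is nothing to object to.
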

\begin{proof}
Since in $(S, P, \mst)$, it always holds that $P \subseteq S$, each state of~$\aut$ can be in one of the three following sets:
\begin{inparaenum}[(i)]
  \item  $\states \setminus S$,
  \item  $S \cap P$, and
  \item  $S \setminus P$.
\end{inparaenum}
\qed
\end{proof}

\vspace{-0.0mm}
\section{Complementation of TELAs and their Subclasses}\label{sec:instantiations}
\vspace{-0.0mm}

We proceed by instantiating the modular algorithm $\instancompl$ from the
previous section for several common automata classes---co-\buchi automata,
Rabin automata, parity automata, generalized Rabin automata, and, eventually,
TELAs.


\subsection{Co-\buchi Automata}

As a~simple demonstration of instantiation of $\instancompl$, we use it to create
a~complementation algorithm for co-\buchi automata.
The acceptance condition for co-\buchi automata is $\accfinof{\tacc 0} =
\accfinof{\tacc 0} \land \mytrue$, we therefore need to provide a~trivial
subprocedure $\instantrue =(\mintrue, \mintrue_0, \succacttrue_\Delta,
\emptyset, \breakemptytrue)$ that is correct for~$\mytrue$ (notice that $\succtracktrue_\Delta$ is empty).
In the subprocedure, $\mintrue = 2^\states$, $\mintrue_0 = \{ I \}$, and 
the remaining components are given as follows:
%
\begin{equation*}
  \succacttrue_\Delta(P, a, S) = \{P\} \qquad\text{and}\qquad
  \breakemptytrue(P) \Longleftrightarrow P = \emptyset.
\end{equation*}
Intuitively, the instantiated procedure works with macrostates~$(S,P,P)$ (i.e.,
to adhere to the formal definition of $\instancompl$, $P$
is there twice) where~$S$ tracks all runs and~$P$ is a~breakpoint that contains
runs that yet need to either terminate or see $\tacc 0$.
To accept, $P$ needs to be emptied infinitely often.
One can observe that $\instancompl(\instantrue, \aut)$ resembles the well-known
Miyano-Hayashi construction~\cite{MiyanoH84} for complementation of
co-B\"{u}chi automata.

\begin{restatable}{lemma}{lemInstantTrueCorrect}\label{lem:instantrue-correct}
  The subprocedure $\instantrue$ is correct for the acceptance condition $\mytrue$.
\end{restatable}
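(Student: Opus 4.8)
The plan is to unfold the definition of correctness for a subprocedure and verify both directions of the biconditional for the trivial condition $\varphi = \mytrue$. Recall that $\instanDeltavarphi$ is correct for $\varphi$ iff for every word $\word$ and every RRDAG $\gendag = (S_0, S_1, \dots)$ over $\word$, we have that $\gendag$ is \emph{not} accepting wrt $\varphi$ iff there is an accepting \finrun{} of $\instanDeltavarphi$ over $\gendag$. The key observation is that \emph{every} set of colours $M \subseteq \colourset$ satisfies $\mytrue$, so $\gendag \models \mytrue$ holds precisely when there exists \emph{any} infinite run $\rho = q_k q_{k+1} \dots$ in $\Delta$ with $q_i \in S_i$ for all $i \geq k$. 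Thus $\gendag \not\models \mytrue$ means that $\Delta$ admits \emph{no} infinite tail-run threading through the $S_i$'s.

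\textbf{Characterizing non-acceptance.} First I would establish the combinatorial heart of the argument: $\gendag \not\models \mytrue$ iff the sequence $(P_i)_{i \geq 0}$ defined by $P_0 = S_0$ and $P_{i+1} = \Delta(P_i, \wordof{i})$ (with breakpoint re-sampling) becomes empty infinitely often. Since a \finrun{} of $\instantrue$ over $\gendag$ is a sequence $(\mst_0, \mst_1, \dots)$ with $\mst_0 \in \mintrue_0 = \{I\}$ and $\mst_{i+1} \in \succacttrue_\Delta(S_{i+1}, \wordof{i}, \mst_i) = \{\Delta(\mst_i, \wordof{i})\}$ \emph{or} the re-sampled value $S_{i+1}$ after an empty-breakpoint, the macrostates are exactly the breakpoint sets $P$, and acceptance means $P = \emptyset$ infinitely often. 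So the two conditions to be matched are: (a) $\Delta$ has no infinite run threading the $S_i$'s from some level on, and (b) the breakpoint $P$ (re-sampled from $S'$ whenever emptied) empties out infinitely often.

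\textbf{The two directions.} For the forward direction ($\gendag \not\models \mytrue \Rightarrow$ accepting \finrun{} exists), I would argue by contradiction: if the breakpoint $P$ were nonempty from some level $\ell$ onward and never emptied, then by König's lemma (each $P_{i+1} \supseteq \Delta(P_i, \wordof i)$ is nonempty with finite branching bounded by $|\states|$) there would be an infinite run in $\Delta$ threading the $S_i$'s, contradicting $\gendag \not\models \mytrue$. For the reverse direction (accepting \finrun{} $\Rightarrow \gendag \not\models \mytrue$), I would show the contrapositive: if $\gendag \models \mytrue$, pick the witnessing run $\rho = q_k q_{k+1} \dots$; after the breakpoint is re-sampled at any level $j \geq k$, the state $q_j$ enters $P$ and its $\Delta$-successors keep it nonempty forever, so $P$ is never emptied again after level $k$, hence the \finrun{} is not accepting. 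The \textbf{main obstacle} I anticipate is handling the re-sampling semantics carefully: when $\breakemptytrue(\mst)$ holds the breakpoint is reset to $S'$ rather than propagated via $\Delta$, so I must confirm that the standard Miyano--Hayashi breakpoint invariant — ``$P$ empties infinitely often iff no $\Delta$-run survives within the $S_i$'s'' — still goes through with these \finrun-style transitions, and in particular that re-sampling from $S'$ (which may introduce states not reachable via $\Delta$ from the old $P$) does not spuriously create or destroy surviving runs. This amounts to checking that a surviving $\Delta$-run, once it appears in $S$, is necessarily captured into $P$ at the next re-sampling and then never lost, which is exactly the König's-lemma argument above applied after the last re-sample.
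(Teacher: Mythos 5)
Your proposal is correct and follows essentially the same route as the paper's proof: both identify the \finrun macrostates with a Miyano--Hayashi-style breakpoint over the levels of the RRDAG and argue that it empties infinitely often exactly when no infinite $\Delta$-run threads the $S_i$'s. The only difference is that you make the K\"{o}nig's-lemma step explicit where the paper merely asserts that, absent an infinite run, the breakpoint must eventually empty.
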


\begin{corollary}
For a~co-\buchi automaton~$\aut$, $\langof{\instancompl(\instantrue, \aut)} = \Sigma^\omega \setminus \langof \aut$.
\end{corollary}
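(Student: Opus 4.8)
The plan is to unfold the definition of correctness for the trivial condition $\mytrue$, reduce it to a statement about infinite $\Delta$-paths in the RRDAG, and then dispatch the two implications, the harder one by a König's lemma argument in the style of Miyano--Hayashi.

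First I would determine what a \finrun of $\instantrue$ over an RRDAG $\gendag = (S_0, S_1, \dots)$ looks like. Since $\succtracktrue_\Delta$ is empty and $\succacttrue_\Delta(P,a,S) = \{P\}$, the relation $\succunion_\Delta$ is deterministic and forces $\mst_{i+1} = S_{i+1}$; together with $\mst_0 \in \mintrue_0 = \{I\}$ this means the unique \finrun is $(I, S_1, S_2, \dots)$, i.e.\ the macrostate simply mirrors the levels of $\gendag$. By the definition of $\breakemptytrue$, this \finrun is accepting iff $S_i = \emptyset$ for infinitely many $i$. Since every run $\rho$ trivially satisfies $\rho \models \mytrue$, the RRDAG $\gendag$ is accepting wrt $\mytrue$ iff it contains an infinite $\Delta$-path, i.e.\ a run $\rho = q_k q_{k+1}\dots$ with $q_i \in S_i$ and $q_{i+1} \in \Delta(q_i, \wordof{i})$ for all $i \geq k$. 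Hence the lemma reduces to the equivalence: $\gendag$ has no infinite $\Delta$-path iff $S_i = \emptyset$ infinitely often.

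The easy direction is soundness: if $S_i = \emptyset$ for infinitely many $i$, no infinite $\Delta$-path can exist, because a path starting at level $k$ must place a vertex on every level $\geq k$, in particular on one of the infinitely many empty levels, which is impossible; thus an accepting \finrun forces $\gendag$ to be non-accepting. For completeness I would argue by contraposition: assuming the \finrun is not accepting, so the breakpoint is non-empty from some level $k$ onwards, I build an infinite $\Delta$-path by König's lemma. The construction takes the tree whose root is connected to all of $S_k$ and in which the parent of a level-$(i{+}1)$ vertex is one of its $\Delta$-predecessors in $S_i$; this tree is finitely branching (each level has at most $n = |\states|$ vertices) and infinite (each $S_i$ with $i \geq k$ is non-empty), so a König branch is precisely an infinite forward $\Delta$-path, witnessing that $\gendag$ is accepting wrt $\mytrue$. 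Combining both directions yields the equivalence, hence $\instantrue$ is correct for $\mytrue$.

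The main obstacle is the completeness direction, and specifically making the predecessor structure needed for König's lemma available. The RRDAG condition only guarantees $\Delta(S_i, \wordof{i}) \subseteq S_{i+1}$, so the relaxation may inject fresh vertices that carry no infinite $\Delta$-future; such vertices could keep the levels non-empty without supporting any path, which is exactly what König's lemma cannot tolerate. The key observation I would have to make precise is that, while the breakpoint is non-empty, its levels grow only through $\Delta$ (the relaxation re-samples new states only when the breakpoint is emptied, i.e.\ exactly at the levels we are trying to exhibit), so for $i \geq k$ every state of $S_{i+1}$ does have a $\Delta$-predecessor in $S_i$; establishing this predecessor property for the levels the \finrun actually traverses is the crux on which the König argument, and therefore the whole lemma, rests.
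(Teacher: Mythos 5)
Your overall route is the paper's: establish that the subprocedure $\instantrue$ is correct for $\mytrue$ (\cref{lem:instantrue-correct}) and then conclude via the modular \cref{thm:mod_correct} (you leave this last step implicit, but it is the obvious one). Your unfolding of the definitions is the literal and careful one: since $\succtracktrue_\Delta=\emptyset$ and $\succacttrue_\Delta(P,a,S)=\{P\}$, the unique \finrun over $\gendag=(S_0,S_1,\dots)$ is $(I,S_1,S_2,\dots)$, and correctness reduces to ``$\gendag$ has no infinite $\Delta$-path iff $S_i=\emptyset$ infinitely often.'' The soundness direction is fine. The problem is exactly the one you flag at the end, and it is a genuine gap rather than a detail you can defer: the completeness direction is \emph{false} for arbitrary RRDAGs. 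The relaxation $\Delta(S_i,\wordof{i})\subseteq S_{i+1}$ permits every level to contain a fresh orphan state with no $\Delta$-successor (take $\Delta=\emptyset$ and $S_i=\{p\}$ for all~$i$); then no level is ever empty, yet there is no infinite $\Delta$-path, so $\gendag$ is not accepting wrt $\mytrue$ while the unique \finrun is also not accepting. K\"{o}nig's lemma cannot be applied because the predecessor tree you describe need not be infinite even when all levels are nonempty.

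Your proposed repair---that while the breakpoint is nonempty its levels grow only through $\Delta$, so every state has a predecessor---is a property of the specific ($\I$-narrow) RRDAGs produced by $\instancompl$ (cf.\ \cref{lem:run-accept}), not of the arbitrary RRDAGs over which the correctness definition quantifies. Carried out as written, your argument would therefore prove a statement about the construction rather than \cref{lem:instantrue-correct} as stated, and the subsequent appeal to \cref{thm:mod_correct} would no longer be licensed without reworking that theorem's hypotheses. To close the proof you must either (a)~restrict the correctness definition to the narrow and $\I$-narrow RRDAGs that \cref{thm:mod_correct} actually feeds to the subprocedure and run your K\"{o}nig argument there, where the predecessor property does hold between consecutive resampling levels, or (b)~bypass the modular abstraction and prove the corollary directly by identifying $\instancompl(\instantrue,\aut)$ with the Miyano--Hayashi breakpoint construction. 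For what it is worth, the paper's own appendix proof of \cref{lem:instantrue-correct} sidesteps the same issue by silently describing the \finrun with the update $M_{i+1}=\Delta(M_i,\wordof{i})$ for nonempty $M_i$, which is not what $\succacttrue_\Delta$ as defined yields; your literal reading exposes a real mismatch in the formalization, but it does not by itself yield a complete proof.
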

\begin{proof}
Follows from \cref{lem:instantrue-correct,thm:mod_correct}.
\qed
\end{proof}

\noindent
Since the result of the construction can be mapped to the Miyano-Hayashi's
algorithm~\cite{MiyanoH84}, the complexities also match.
%
\begin{corollary}
  $|\instancompl(\instantrue, \aut)| \in \bigO(3^n)$
\end{corollary}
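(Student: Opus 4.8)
The plan is to sharpen the generic bound $\bigOof{3^n \cdot |\M|}$ from \cref{thm:mod_complexity}. Instantiating it directly with $|\mintrue| = 2^n$ would only yield $\bigOof{6^n}$, so the crux is the observation that for this particular subprocedure the third component of a~macrostate carries no information: it is always equal to the breakpoint.

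First I would establish the invariant that every reachable state of $\instancompl(\instantrue, \aut)$ has the form $(S, P, P)$, i.e.\ the macrostate~$\mst$ always coincides with the breakpoint~$P$. This holds for the initial states, since the general definition gives $I' = \{(I, I, \mst_0) \mid \mst_0 \in \mintrue_0\}$ and $\mintrue_0 = \{I\}$, so the only initial state is $(I, I, I)$. For the inductive step, note that $\succtracktrue_\Delta = \emptyset$, hence no transition of $\delta_2$ can fire and it suffices to inspect $\delta_1$. There the successor macrostate is drawn from $\succacttrue_\Delta(P', a, \mst) = \{P'\}$, which forces $\mst' = P'$; consequently, if $\mst = P$ held before the step then $\mst' = P'$ holds after it, regardless of whether $\breakemptytrue(\mst)$ was true (that only affects how $P'$ is computed). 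The invariant is therefore preserved.

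With the invariant in hand, every reachable state is completely determined by the pair $(S, P)$, where $P \subseteq S$ as already noted in the proof of \cref{thm:mod_complexity}. Counting such pairs by placing each of the~$n$ states of~$\aut$ into exactly one of the three classes $\states \setminus S$, $S \setminus P$, or $S \cap P$ yields at most $3^n$ states, giving $|\instancompl(\instantrue, \aut)| \in \bigOof{3^n}$.

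I do not expect a~genuine obstacle here: once the coincidence $\mst = P$ is recognized, the $|\M|$ factor of \cref{thm:mod_complexity} collapses and the count reduces to exactly the $3^n$ bound that already underlies that theorem. The only point demanding care is confirming that the tracking transition function is empty for $\instantrue$, so that the invariant cannot be broken by a~$\delta_2$-step.
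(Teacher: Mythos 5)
Your proof is correct and rests on the same observation the paper uses: every reachable macrostate has the form $(S,P,P)$ with $P\subseteq S$, so the state space collapses to the $3^n$ count of \cref{thm:mod_complexity}. The paper merely gestures at this via the correspondence with the Miyano--Hayashi construction, whereas you spell out the invariant (initial state $(I,I,I)$, $\succtracktrue_\Delta=\emptyset$, and $\succacttrue_\Delta(P',a,\mst)=\{P'\}$ forcing $\mst'=P'$) explicitly --- a welcome but not substantively different elaboration.
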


\subsection{Rabin Automata}
\label{sec:rabin}

In this section, we give an instantiation of $\instancompl$ with subprocedure
$\instinf = (\minf$, $\minf_0$, $\succactinf_\Delta$,
$\succtrackinf_\Delta, \breakemptyinf)$ for $\accinfof{\tacc 1}$, which will allow us to complement TELAs
where the acceptance condition is a~single Rabin pair.
The algorithm is based on the optimal rank-based BA complementation algorithm
from~\cite{Schewe09} adjusted to the needs of the modular construction.
The macrostates of the instantiation \mbox{are given as}
\begin{equation*}
\minf = 
  \overset{\minfact}{\overbrace{2^\states \cup (\cT\times 2^\states \times \{ 0, 2, \ldots, 2n - 2 \})}}
  \cup
  \overset{\minftrack}{\overbrace{(\cT\times \{ 0, 2, \ldots, 2n - 2 \})}}
\end{equation*}
where $\minf_0 = \{ I \}$.
Notice that \emph{active macrostates} ($\minfact$) are either sets of states
(from~$2^\states$, just keeping track of all runs) or states of the form $(f,
O, i)$ (representing tight runs).
On the other hand, \emph{tracking macrostates} ($\minftrack$) are of the form
$(f, i)$; these are used to wait for newly arrived runs to become tight.
The remaining components are then defined as follows:

\noindent
\begin{minipage}{\textwidth}\scriptsize
\begin{multicols}{2}
\begin{itemize}
  \item $(f', O', i') \in \succactinf_\Delta(P, a, (f, O, i))$ iff
  \begin{itemize}
    \item  $f \finsucca{\Delta} f'$ and $\rankof f = \rankof{f'}$,
    \item $\domof{f'} = P$,
    \item $O \neq \emptyset$, 
    \item $i' = i$, 
    \item $O' = \Delta(O, a) \cap f'^{-1}(i)$
  \end{itemize}
  \item $(f', i') \in \succactinf_\Delta(P, a, (f, O, i))$ iff
  \begin{itemize}
    \item  $f \finsucca{\Delta} f'$ and $\rankof f = \rankof{f'}$,
    \item $O = \emptyset$,
    \item $i' = (i+2) \mod (\rankof{f'} + 1)$
  \end{itemize}

  \item $P' \in \succactinf_\Delta(P, a, P)$ iff
  \begin{itemize}
    \item $P' = P$
  \end{itemize}
  \item $(f', i') \in \succtrackinf_\Delta(P, a, P)$ iff
  \begin{itemize}
    \item $f'$ is $P$-tight
    \item $i'= 0$
  \end{itemize}
  \item $\{(f', i'), (f', O', i')\} \subseteq \succtrackinf_\Delta(P, a, (f, i))$ iff
  \begin{itemize}
    \item $f \finsucca{\Delta} f'$ and $\rankof f = \rankof{f'}$,
    \item $O' = f'^{-1}(i')$,
    \item $i'= i$
  \end{itemize}
  \item $\breakemptyinf((f,O,i)) \Longleftrightarrow O = \emptyset$
  \item $\breakemptyinf(P) \Longleftrightarrow P = \emptyset$
  \item $\breakemptyinf((f,i))\Longleftrightarrow \mathit{false}$
\end{itemize}
\end{multicols}
\end{minipage}
\vspace{2mm}

\noindent
An example of the construction is shown in \cref{acc:rabin-example}.
The correctness of the instantiation is then summarized by the following lemma.
\begin{restatable}{lemma}{lemInstantInfCorrect}
The subprocedure $\instinf$ is correct for the acceptance condition $\accinfof{\tacc 1}$.
\end{restatable}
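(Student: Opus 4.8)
The plan is to recognise that $\instinf$ is Schewe's optimal rank-based \buchi complementation~\cite{Schewe09} transplanted from the word-induced run DAG to an arbitrary RRDAG: the ``automaton'' is the level graph $\gendag=(S_0,S_1,\dots)$ with edges given by the restricted relation~$\Delta$, and the \buchi accepting transitions are the $\tacc 1$-edges (for a single Rabin pair $\modelsminof{\varphi}=\{\{\tacc 1\}\}$, so the successor relation $\finsucca{\Delta}$ collapses to the ordinary tight-ranking successor in which a rank may not survive a $\tacc 1$-edge). Correctness is the equivalence ``$\gendag\not\models\accinfof{\tacc 1}$ iff $\instinf$ has an accepting \finrun'', and I would prove its two directions, in each case reusing the ranking analysis of~\cite{Schewe09,FriedgutKV06} rather than redoing it.

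For soundness (an accepting \finrun exists $\Rightarrow \gendag\not\models\accinfof{\tacc 1}$) I would argue by contraposition. Assume $\rho=q_kq_{k+1}\dots$ is a run inside $\gendag$ (so $q_i\in S_i$ and $q_{i+1}\in\Delta(q_i,\wordof i)$) that takes $\tacc 1$-edges infinitely often, and let $\instanrun=(\mst_0,\mst_1,\dots)$ be any \finrun. Such a \finrun cannot remain forever in the $2^\states$ mode (there $\breakemptyinf(P)$ needs $P=\emptyset$, impossible as $q_i\in S_i$) nor forever in a tracking macrostate (there $\breakemptyinf$ is identically false), so it must visit active tight macrostates $(f_i,O_i,i)$ infinitely often; since $\finsucca{\Delta}$ keeps $\rankof{f_i}$ constant, it in fact eventually stays tight. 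Along $\rho$ the values $f_i(q_i)$ are non-increasing and hence eventually equal to some constant~$c$; because $\rho$ crosses $\tacc 1$-edges infinitely often and each such edge forces $f_{i+1}(q_{i+1})\le\evenceil{f_i(q_i)}$, the constant $c$ must be even. Then by the standard breakpoint dynamics (the index~$i$ cycles through the even ranks and $O$ collects the rank-$c$ runs) $\rho$ is eventually trapped inside the breakpoint, so $O$ is emptied only finitely often and $\instanrun$ is not accepting.

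For completeness ($\gendag\not\models\accinfof{\tacc 1}\Rightarrow$ an accepting \finrun exists) I would observe that ``no run of $\gendag$ sees $\tacc 1$ infinitely often'' is exactly the rejecting condition for the \buchi level graph $\gendag$, so by the ranking lemmas of~\cite{Schewe09,FriedgutKV06} there is a bounded, eventually tight ranking of $\gendag$ whose associated breakpoint construction empties $O$ infinitely often. I would then read this ranking off as a concrete \finrun: follow $S_0,S_1,\dots$ in the $2^\states$ mode, guess the tight ranking at the level where it becomes tight via $\succtrackinf_\Delta$, let the runs that have only just appeared in the RRDAG settle while staying in the tracking mode $(f,i)$, switch to the active mode $(f,O,i)$, and thereafter mirror the ranking so that $O$ empties infinitely often; the degenerate case in which a level of $\gendag$ becomes empty is caught directly by $\breakemptyinf(P)\Leftrightarrow P=\emptyset$.

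The main obstacle I anticipate is the RRDAG relaxation. Because $\Delta(S_i,\wordof i)\subseteq S_{i+1}$ is only an inclusion, $\gendag$ is not the run DAG of any fixed automaton on $\word$---extra, newly-rooted vertices may appear at every level---so I must justify that Schewe's ranking theory, which only uses that the graph is a bounded-width level DAG with a designated set of accepting edges, applies verbatim here, and that acceptance wrt $\varphi$ is correctly captured by paths that eventually follow $\Delta$. The genuinely new bookkeeping beyond~\cite{Schewe09} is the interplay of the tracking macrostates $\minftrack$ and the active macrostates $\minfact$: I must check that after each emptying of the breakpoint the construction may re-sample the fresh runs and wait in the tracking mode until they become tight, without ever spuriously blocking an accepting \finrun or creating one when $\gendag\models\accinfof{\tacc 1}$.
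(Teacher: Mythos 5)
Your proposal matches the paper's own (sketched) proof: both directions reduce to Schewe's rank-based ranking argument applied to the bounded-width level graph $\gendag$ with $\tacc 1$-edges playing the role of accepting transitions, with the degenerate case of a finite RRDAG handled by the $\breakemptyinf(P) \Leftrightarrow P = \emptyset$ clause and the tracking/active mode split handling freshly sampled runs. The paper's appendix argument is itself only a sketch at roughly the same level of detail (your contrapositive rank-trapping argument for soundness is, if anything, more explicit than the paper's), so there is no substantive discrepancy to report.
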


\noindent
The following lemma shows that using our approach, handling the
$\accfinof{\taccgof c}$ condition is ``\emph{for free},'' i.e., the
asymptotical complexity stays the same as for the optimal algorithm for BA
complementation from~\cite{Schewe09}.

\begin{lemma}
	$|\instancompl(\instinf, \aut)| \in  \bigO(\tight(n+1))$.
\end{lemma}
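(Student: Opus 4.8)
The plan is to sharpen the crude bound $\bigOof{3^n \cdot |\M|}$ of \cref{thm:mod_complexity} by exploiting the fact that, for the subprocedure $\instinf$, the breakpoint component $P$ of a~macrostate $(S,P,\mst)$ is \emph{not} an independent degree of freedom. First I would observe that for every tight-ranking macrostate $\mst = (f,O,i)$ or $\mst = (f,i)$ the construction forces $P = \domof{f}$ (this is exactly the side condition $\domof{f'} = P$ in $\succactinf_\Delta$, and $f'$ being $P$-tight in $\succtrackinf_\Delta$), so $P$ is already recorded inside $\mst$ and contributes no extra factor. Likewise, for a~set-mode macrostate $\mst \in 2^\states$ one checks from $\delta_1$ and the rule $\succactinf_\Delta(P,a,P)=\{P\}$ that $\mst = P$ throughout, so these states have the form $(S,P,P)$ with $P \subseteq S$ and number at most $3^n$, which lies in $\bigOof{\tight(n)}$ since $\tight(n) \approx (0.76n)^n$ dominates $3^n$. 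Thus the whole count reduces to bounding the tight-mode states $(S, f, O, i)$ and $(S, f, i)$ with $\domof f \subseteq S$.

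For the dominant case I would reuse the encoding argument of \cref{thm:inf-tela-complexity}, specialized to $k = |\modelsminof{\acccond}| = 1$ (here $\overline{\acccond} = \tacc 1$). Fixing the even index $i$, I encode each state $(S, f, O, i)$ by a~single function $h\colon \states \to \{-3,-2,-1,0,\dots,2n-1\}$ that sets $h(q) = -3$ for $q \in \states\setminus S$, $h(q) = -2$ for $q \in S\setminus\domof f$, $h(q) = -1$ for $q \in O$, and $h(q) = f(q)$ otherwise; this is injective on states since $i$ is fixed and $O \subseteq f^{-1}(i)$. Splitting the encodings into the constantly many groups according to which of the markers $\{-3,-2,-1\}$ actually occur, each group in which some marker occurs has at least one state carrying no positive odd rank, so---exactly as in \cref{thm:inf-tela-complexity}---at least one positive odd rank below $2n-1$ is free; shifting all ranks up by two embeds $h$ into an ordinary tight ranking over $n$ states, giving $\bigOof{\tight(n)}$ per group, while the single marker-free group is itself a~tight ranking over all $n$ states. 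Hence for a~fixed $i$ there are $\bigOof{\tight(n)}$ states, and summing over the $\bigOof{n}$ admissible even values of $i$ yields $\bigOof{n\cdot\tight(n)} = \bigOof{\tight(n+1)}$ by the estimate from~\cite{Schewe09}. The tracking states $(S,f,i)$ are counted the same way, and more loosely since they lack the $O$ component, so they stay within the same bound.

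The step I expect to be the crux is showing that the extra freedom in choosing $S \supseteq \domof f$ does not introduce a~spurious $2^n$ factor. The naive worry is that for each tight ranking one could append any subset of $\states\setminus\domof f$ to form $S$, multiplying the count by $2^{n-|\domof f|}$. The resolution, captured by the marker $-2$ in the encoding above, is that every state placed in $S\setminus\domof f$ is a~state \emph{omitted} from the ranking domain, which lowers the maximal achievable rank and hence frees a~positive odd rank slot; the shift-by-two embedding re-uses precisely that freed slot, so these states are absorbed into the tight-ranking budget over $n$ states rather than multiplying it. Once this bookkeeping is aligned with \cref{thm:inf-tela-complexity}, the remaining estimates are routine.
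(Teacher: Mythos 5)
Your proof follows essentially the same route as the paper's: it encodes each macrostate $(S,P,f,O,i)$ as a pair $(h,i)$ with negative markers for $O$, $\states\setminus S$, and $S\setminus P$ (using that $\domof f = P$), splits the encodings by which markers occur, shifts ranks to embed each group into the tight rankings over $n$ states, and sums over the $\bigOof n$ values of $i$ to get $\bigOof{n\cdot\tight(n)} = \bigOof{\tight(n+1)}$. The only bookkeeping difference is that in the group where all three markers occur a shift by two does not suffice (the paper shifts by \emph{four}, using that two odd rank slots are freed), but this is a routine fix that does not affect the bound.
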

\begin{proof}
  It suffices to count the number of macrostates of the form $(S,P,f,O,i)$. 
  Consider a macrostate $(S,P,f,O,i)$. We uniquely encode the macrostate as
  $(h,i)$ where $h\colon \states \to \{-3, \ldots, 2n-1\}$ is defined as follows:
	\begin{equation}
		h(q) = \begin{cases}
			-1 & \text{if } q \in O, \\
			-2 & \text{if } q \in \states\setminus S, \\
			-3 & \text{if } q \in S \setminus P, \text{ and} \\
			f(q) & \text{otherwise}.
		\end{cases}
	\end{equation}
	For a~fixed~$i$ we compute the number of such encodings~$h$.
  First we divide all encodings into groups according to the set $\imgof{h}
  \cap \{ -3, -2, -1 \}$ (8~groups at most) and we will show for each of the
  groups how we can ``\emph{shuffle}'' the ranks in~$h$ to obtain the bound
  $\bigO(\tight(n))$ for each of the groups.
  We will denote each of the groups by $g_M$ with $M \subseteq \{-3, -2, -1\}$.

  \begin{enumerate}
    \item[$g_\emptyset$:] from the definition, $f$~is tight so $|g_\emptyset| =
      \bigO(\tight(n))$
    \item[$g_{\{-1\}}$:] since there is at least one state $q$ with $h(q) = -1$,
      this means that $q \in O$ so~$q$ has an even rank.
      As a~consequence, at least one of the positive odd ranks of~$h$ will not
      be taken, so we can infer that $h\colon \states \to \{-1, \ldots, 2n-3\}$.
      We can therefore uniquely map~$h$ to a~mapping~$h'$ by incrementing all
      ranks of~$h$ by two, so $h'\colon \states \to \{1, \ldots, 2n-1\}$.
      But then $h' \in \cT(n)$, so $|g_{\{-1\}}| \in \bigO(\tight(n))$.

    \item[$g_{\{-2,-1\}}$:] via the same reasoning as for $g_{\{-1\}}$ we get
      that $|g_{\{-2,-1\}}| \in \bigO(\tight(n))$.

    \item[$g_{\{-2\}}$:] the reasoning is similar to the one for  $g_{\{-1\}}$,
      with the exception that now, we know that there is a~state~$q \in \states
      \setminus S$, which is, according to the definition of a~ranking, assigned
      the rank~$0$.
      This means that one positive odd rank of~$h$ is, again, not taken, so we
      increment all non-negative ranks of~$h$ by two and map all states in $\states
      \setminus S$ to~$1$, obtaining a~tight ranking $h' \in \cT(n)$.
      Therefore, $|g_{\{-2\}}| \in \bigO(\tight(n))$.

    \item[$g_{\{-3\}}$:] the reasoning is, again, similar to the one for  $g_{\{-1\}}$,
      with the exception that now, we know that there is a~state~$q \in S
      \setminus P$ such that its rank is, according to the definition~$0$.
      Therefore, we increment all non-negative ranks of~$h$ by two and map the
      states in $S \setminus P$ to~$1$, obtaining a~tight ranking $h' \in
      \cT(n)$;
      therefore, $|g_{\{-3\}}| \in \bigO(\tight(n))$.

    \item[$g_{\{-3,-2\}}$, $g_{\{-3, -1\}}$:]
      similarly as for $g_{\{-2\}}$, we increment all non-negative ranks of~$h$
      by two and set $h'(q) = 0$ if $h(q) = -3$ and $h'(q) = 1$ if $h(q) = -2$
      (resp.\ if $h(q) = -1$).
      Then $h' \in \cT(n)$ and so $|g_{\{-3,-2\}}| = \bigO(\tight(n))$ and
      $|g_{\{-3, -1\}}| \in \bigO(\tight(n))$.

    \item[$g_{\{-3, -2, -1\}}$:] in this case, we know that there is at least
      one state $q_1 \in O$ and at least one state $q_2 \in \states \setminus S$.
      Therefore, there will be at least two odd positions not taken in~$h$, so
      we can infer that $h\colon \{-3, \ldots, 2n-5\}$.
      We create $h'$ by incrementing all ranks in~$h$ by \emph{four}; in this
      way, we obtain a~tight ranking $h'\colon \states \to \{0, \ldots, 2n-1\}$, so
      $|g_{\{-3, -2, -1\}}| \in \bigO(\tight(n))$.
  \end{enumerate}

  \noindent
  Since the size of all groups is bounded by $\bigO(\tight(n))$, for
  a~fixed~$i$, the total number of these encodings is still
  $\bigO(\tight(n))$.
  When we sum the encodings for all possible~$i$'s, we obtain that the number is
  bounded by~$\bigO(\tight(n+1))$, since $\bigO(n \cdot \tight(n)) =
  \bigO(\tight(n+1))$~\cite{Schewe09}. 
  \qed
\end{proof}

The modular construction instantiated with $\instinf$ gives us a~procedure for 
the complementation of Rabin automata with a~single pair.
In order to get a~procedure for general Rabin automata, we construct
a~complement automaton for each Rabin pair and make a~product of these automata
and obtain a~GBA accepting the complement of the original Rabin automaton.
The complexity reasoning is then straightforward and is summarized by the
following corollary.

\begin{corollary}\label{theorem:rabin}
  Let $\aut$ be a~Rabin automaton with $k$~Rabin pairs.
  Then we can construct a~GBA accepting the complement of the language
  of~$\aut$ with~$\bigO(\tight(n+1)^k) = \bigO(n^k(0.76n)^{nk})$ states and~$k$
  colours.
\end{corollary}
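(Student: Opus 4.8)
The plan is to reduce complementation of a $k$-pair Rabin automaton to $k$ independent single-pair complementations, each handled by the modular construction instantiated with $\instinf$, and then glue them together by an intersection (product) that turns $k$ Büchi conditions into one generalized Büchi condition.

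First I would observe that Rabin acceptance factors as a~disjunction over pairs. Writing $\aut_j = (\states, \trans, \inits, \colourset, \colouring, \accfinof{\taccBj} \land \accinfof{\taccGj})$ for the automaton obtained from~$\aut$ by keeping only the $j$-th Rabin pair, a~run~$\rho$ satisfies $\acccond$ iff $\infofcol\rho$ satisfies some disjunct, i.e.\ iff $\rho$ is accepting in~$\aut_j$ for some~$j$. Since language membership is existence of an accepting run, this yields $\langof{\aut} = \bigcup_{0 \le j < k} \langof{\aut_j}$, and hence, by De Morgan, $\Sigma^\omega \setminus \langof{\aut} = \bigcap_{0 \le j < k} (\Sigma^\omega \setminus \langof{\aut_j})$.

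Second, each $\aut_j$ has the single-pair condition $\accfinof{\taccBj} \land \accinfof{\taccGj}$, which is exactly of the modular form $\accfinof{\safecol c} \land \varphi$ with $c = \taccBj$ and $\varphi = \accinfof{\taccGj}$. Up to renaming of colours, the subprocedure $\instinf$ is correct for $\accinfof{\taccGj}$ (the preceding lemma states it for $\accinfof{\tacc 1}$), so \cref{thm:mod_correct} gives a~BA $\instancompl(\instinf, \aut_j)$ with $\langof{\instancompl(\instinf, \aut_j)} = \Sigma^\omega \setminus \langof{\aut_j}$, while the preceding size lemma gives $|\instancompl(\instinf, \aut_j)| \in \bigOof{\tight(n+1)}$.

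Third I would take the standard synchronous product of the $k$~Büchi automata $\instancompl(\instinf, \aut_j)$, recording in a~tuple the current state of each component and assigning colour~$j$ to those product transitions whose $j$-th component fires an accepting transition of $\instancompl(\instinf, \aut_j)$. The resulting generalized Büchi automaton has acceptance $\bigwedge_{0 \le j < k} \accinfof{\taccj}$, so a~run is accepting iff every component sees its Büchi condition infinitely often, i.e.\ iff the word lies in every $\langof{\instancompl(\instinf, \aut_j)}$; by the second step this intersection equals $\Sigma^\omega \setminus \langof{\aut}$, and the automaton has exactly~$k$ colours. Finally, the state count is the product of the component sizes, $\prod_{j} |\instancompl(\instinf, \aut_j)| \in \bigOof{\tight(n+1)^k}$, and using $\bigOof{n \cdot \tight(n)} = \bigOof{\tight(n+1)}$ (cited from~\cite{Schewe09}) together with $\tight(n) \approx (0.76n)^n$ gives $\tight(n+1)^k \in \bigOof{n^k (0.76n)^{nk}}$, as claimed. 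The only genuinely non-routine point is the language decomposition of the first step; everything after it is the textbook intersection-of-Büchi-automata product, whose correctness and multiplicative state blow-up are standard.
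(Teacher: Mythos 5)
Your proposal is correct and follows essentially the same route as the paper: the paper likewise complements each Rabin pair separately via $\instancompl(\instinfm, \aut)$ restricted to that pair, takes the $k$-fold product to obtain a GBA with $k$ colours, and then performs the same arithmetic $\bigO(\tight(n+1)^k) = \bigO((n\cdot\tight(n))^k) = \bigO(n^k(0.76n)^{nk})$. You have merely spelled out the language decomposition and product construction that the paper leaves to the surrounding prose.

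\newcommand{\instinfm}{\mathbb{S}^{\mathsf{inf}}}
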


\begin{proof}
\!\!$\bigO(\tight(n\!+\!1)^k)\!=\!
\bigO(\!(n\!\cdot\!\tight(n)\!)^k)\!=\!
\bigO(\!(n (0.76n)^n)^k)\!=\!
\bigO(n^k (0.76n)^{nk})$
\qed
\end{proof}

To the best of our knowledge, the state complexity of our procedure is
better than the complexity of other approaches (even if we require the output
to be a~BA and not a~GBA).
In particular, it is better than the complexity $\bigO(k\cdot 3^n \cdot
(2n+1)^{nk})$ of~\cite{KV05} and also better than the complexity of a~procedure
that would first transform the input Rabin automaton into a~BA with $m = nk$ states
and run the optimal BA complementation
with complexity~$\bigO(m(0.76m)^m) = \bigO(nk(0.76nk)^{nk})$~\cite{Schewe09},
\mbox{as shown by the following lemmas.}

\begin{lemma}\label{lem:complexity-rabin-kv}
$\bigO(n^k(0.76n)^{nk}) \subset \bigO(k\cdot 3^n \cdot (2n+1)^{nk})$
\end{lemma}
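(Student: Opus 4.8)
The plan is to show the strict inclusion $\bigOof{n^k(0.76n)^{nk}} \subset \bigOof{k\cdot 3^n \cdot (2n+1)^{nk}}$ by comparing the dominant factors term-by-term and then exhibiting a witness separating the two classes. First I would reduce both sides to their essential growth. On the left, $n^k(0.76n)^{nk} = n^k \cdot (0.76)^{nk} \cdot n^{nk}$, so the left-hand function is $(0.76)^{nk} \cdot n^{nk + k}$. On the right, ignoring the harmless polynomial prefactor~$k$, the dominant part is $3^n \cdot (2n+1)^{nk} = 3^n \cdot (2n+1)^{nk}$; since $(2n+1)^{nk} = (2n)^{nk}(1 + \tfrac{1}{2n})^{nk}$ and $(1+\tfrac{1}{2n})^{nk} \to e^{k/2}$ is bounded for fixed~$k$, the right-hand side grows like $3^n \cdot 2^{nk} \cdot n^{nk}$ up to constant factors (for fixed~$k$).

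The key comparison is therefore between $(0.76)^{nk} n^{nk+k}$ and $2^{nk} n^{nk}$ (absorbing the subexponential $3^n$ factor, which only helps the right-hand side). Dividing the right side by the left, the $n^{nk}$ factors cancel and I am left with a ratio of order $\dfrac{2^{nk}}{(0.76)^{nk} \cdot n^{k}} = \dfrac{(2/0.76)^{nk}}{n^k} = \dfrac{c^{nk}}{n^k}$ where $c = 2/0.76 > 2$. For every fixed~$k$ this ratio tends to~$+\infty$ as $n \to \infty$, since the exponential $c^{nk}$ dominates the polynomial~$n^k$. This establishes the $\subseteq$ direction: the left-hand class is contained in the right. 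To get \emph{strict} inclusion ($\subset$), I would note the same computation shows the reverse containment fails, because the quotient of right over left is unbounded, so no constant~$C$ makes $3^n(2n+1)^{nk} \le C \cdot n^k(0.76n)^{nk}$ hold for all large~$n$.

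The main obstacle I anticipate is purely bookkeeping: making precise the claim that $3^n$ and the $(2n+1)^{nk}$-versus-$(2n)^{nk}$ discrepancy are negligible relative to the clean exponential gap $c^{nk}$. Concretely, I would bound $3^n \le (2n)^{nk}$ trivially for large~$n$ and $k \ge 1$ (so the $3^n$ factor cannot reverse the inequality), and bound $(1+\tfrac{1}{2n})^{nk} \le e^{k/2}$, a constant absorbable into the big-$\bigO$. Once these two auxiliary bounds are in hand, the dominant comparison reduces to the single inequality $(2/0.76)^{nk} \ge n^k$ for all sufficiently large~$n$, which holds because $2/0.76 \approx 2.63 > 1$. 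I would present this as a short chain of inequalities inside the big-$\bigO$ framework, being careful to fix~$k$ as a constant throughout (both classes are parameterized by~$n$ with~$k$ fixed), and conclude strictness from the unboundedness of the ratio.
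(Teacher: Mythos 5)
Your proposal is correct and rests on the same key observation as the paper's proof: the polynomial prefactor $n^k$ on the left is swallowed by the exponential gap between the bases $0.76n$ and $2n+1$, so only the $(cn)^{nk}$ terms matter. The paper packages this more compactly by writing $n^k(0.76n)^{nk} = (\sqrt[n]{n}\cdot 0.76n)^{nk}$ and using the numeric bound $\sqrt[n]{n} < 1.5$ to get the pointwise inequality $(1.14n)^{nk} < (2n+1)^{nk}$, whereas you reach the same conclusion via a direct ratio computation $\frac{(2/0.76)^{nk}}{n^k}\to\infty$ (and, unlike the paper, you explicitly argue strictness from the unboundedness of that ratio).
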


\begin{proof}
$n^k(0.76n)^{nk} = (\sqrt[n]{n}\cdot 0.76n)^{nk}$.  The global maximum of the
function $\sqrt[n]{n}$ is less than 1.5, so $(\sqrt[n]{n}\cdot 0.76n)^{nk} <
(1.14n)^{nk} < (2n+1)^{nk}$ for $n \geq 1$.
\qed
\end{proof}

\begin{lemma}
$\bigO(n^k(0.76n)^{nk}) \subset \bigO(nk(0.76nk)^{nk})$
\end{lemma}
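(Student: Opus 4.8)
The plan is to prove the strict inclusion $\bigOof{n^k(0.76n)^{nk}} \subset \bigOof{nk(0.76nk)^{nk}}$ by comparing the two expressions directly, factoring out the common structure. First I would isolate the part of each expression that differs. Both contain the factor $(0.76n)^{nk}$ implicitly, so the comparison reduces to the remaining factors. The left-hand expression can be written as $n^k (0.76n)^{nk}$, while the right-hand expression is $nk \cdot (0.76n)^{nk} \cdot k^{nk}$, since $(0.76nk)^{nk} = (0.76n)^{nk} \cdot k^{nk}$.

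With this rewriting, proving the inclusion amounts to showing that the left factor is dominated by the right factor, i.e.\ that $n^k$ is asymptotically smaller than $nk \cdot k^{nk}$. I would argue this as follows. For $k \geq 1$ we have $k^{nk} \geq 1$ and $nk \geq n$, so it suffices to observe that $n^k$ grows far slower than $k^{nk}$ (for $k\geq 2$ the latter has $nk$ factors of $k$ against only $k$ factors of $n$ on the left, and for $k=1$ both sides collapse so the $nk$ factor gives the strict separation). Concretely, one writes
\begin{equation*}
  \frac{nk \cdot (0.76nk)^{nk}}{n^k(0.76n)^{nk}}
  = \frac{nk \cdot k^{nk}}{n^k}
  \geq \frac{nk}{n^k} \cdot k^{nk},
\end{equation*}
and since $k^{nk} \to \infty$ (for $k \geq 2$) dominates the polynomial $n^{k}/(nk)$ in $n$, the ratio is unbounded, establishing the strict inclusion.

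The main subtlety, and the step I would be most careful about, is making the big-$\bigO$ inclusion precise for \emph{all} admissible values of both parameters $n$ and $k$ rather than just asymptotically in a single variable, since these are two-parameter bounds. In particular the degenerate case $k = 1$ must be checked separately: there the two expressions become $n \cdot (0.76n)^n$ and $n \cdot (0.76n)^n$, which are equal, so the claimed \emph{strict} inclusion must rely on constants/the $nk$ prefactor rather than on asymptotic growth. I expect this boundary case to be the only real obstacle; for $k \geq 2$ the exponential gap $k^{nk}$ trivially swamps the polynomial factor $n^{k}$, so the inclusion is immediate and strict. The cleanest presentation is therefore to note that $n^k (0.76n)^{nk} \leq n^k (0.76nk)^{nk} \leq nk(0.76nk)^{nk}$ with the first inequality following from $k \geq 1$ and the second from $n^k \leq nk$ being false in general---so instead I would keep the ratio computation above, which handles both the dominant $k\geq 2$ case and the prefactor-driven $k=1$ case uniformly.
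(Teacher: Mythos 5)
Your argument is correct for $k \ge 2$ and is in substance the same as the paper's: the paper's proof (by reference to its preceding lemma) rewrites $n^k(0.76n)^{nk} = (\sqrt[n]{n}\cdot 0.76n)^{nk} < (1.14n)^{nk}$ and then uses $1.14n < 0.76nk$ for $k\ge 2$, which is exactly your observation that the ratio $nk\cdot k^{nk}/n^k$ blows up because the exponential factor $k^{nk}$ swamps the polynomial $n^k$. The one flaw is your closing claim that the ratio computation ``handles the prefactor-driven $k=1$ case uniformly'': at $k=1$ the ratio is $nk\cdot k^{nk}/n^k = n\cdot 1/n = 1$, so the two bounds are identical ($n(0.76n)^n$ on both sides) and the inclusion degenerates to equality --- no prefactor rescues strictness there. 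You correctly spotted this degeneracy earlier in your writeup but then talked yourself out of it; the honest resolution is that the lemma (and the paper's own proof, which needs $0.76k > 1.14$) implicitly assumes $k\ge 2$, a defect the paper shares and does not acknowledge.
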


\begin{proof}
Similar reasoning as in the proof of \cref{lem:complexity-rabin-kv}.
\qed
\end{proof}

%

\subsection{Parity Automata}

Since the parity condition is a~special case of the Rabin
condition~\cite{GradelTW01}, we can easily give an upper bound on the
complementation of parity automata.


\begin{lemma}
  For a~parity automaton~$\aut$ with index~$k$,
  there is a~GBA for the complement of~$\langof \aut$ with
  \!$\frac k 2$\! colours and $\bigOof{\tight(n\!+\!1)^{\frac{k}{2}}} \!=\!
  \bigOof{n^{\frac k 2} (0.76n)^{\frac{nk}{2}}}$ states. 
\end{lemma}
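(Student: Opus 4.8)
The plan is to reduce the statement to the Rabin case already handled in \cref{theorem:rabin}, since the parity min-odd condition is syntactically a restricted Rabin condition. The only genuine work is to re-express parity acceptance over colours $\{0, \ldots, k-1\}$ as a Rabin condition with $\lfloor k/2 \rfloor \le \frac k 2$ pairs while leaving the underlying automaton (states, transitions, initial states) untouched; the complexity bound is then an immediate consequence of the Rabin result.

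First I would recall the semantic characterisation of min-odd parity: a run~$\rho$ is accepting iff the minimal colour in $\infofcol{\rho}$ is odd. This is equivalent to saying that there exists an odd colour~$j$ such that colour~$j$ occurs infinitely often and every colour strictly smaller than~$j$ occurs only finitely often. (Unfolding the formula $\accfinof{\tacc 0} \land (\accinfof{\tacc 1} \lor (\accfinof{\tacc 2} \land (\accinfof{\tacc 3} \lor \ldots)))$ confirms that a colour set~$M$ is a model exactly when $\min M$ is odd.) This immediately matches a disjunction of Rabin pairs indexed by the odd colours: the pair for odd~$j$ requires $\accinfof{\{j\}}$ together with $\accfinof{\{0, \ldots, j-1\}}$, and the union over all odd~$j$ says precisely that $\min \infofcol{\rho}$ is odd.

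Next I would turn this into an honest single-colour Rabin automaton by a recolouring that does not alter the state set. For each odd $j \in \{0, \ldots, k-1\}$ I introduce a fresh green colour~$G_j$, placed on exactly those transitions whose original parity colour equals~$j$, and a fresh blue colour~$B_j$, placed on exactly those transitions whose original parity colour is strictly below~$j$; since a transition may carry several new colours, this is well-defined in the TELA colouring $\colouring\colon \trans \to 2^\colourset$. The acceptance condition becomes $\bigvee_{j \text{ odd}} (\accfinof{B_j} \land \accinfof{G_j})$, which by the previous paragraph is equivalent to the original parity condition (note $\accfinof{B_j}$ encodes $\accfinof{\tacc 0} \land \cdots \land \accfinof{\{j-1\}}$). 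The number of pairs equals the number of odd colours in $\{0, \ldots, k-1\}$, namely $\lfloor k/2 \rfloor \le \frac k 2$.

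Finally I would invoke \cref{theorem:rabin} on this Rabin automaton, which has $n$ states and $\lfloor k/2 \rfloor$ pairs: it yields a GBA for the complement with $\bigOof{\tight(n+1)^{\lfloor k/2 \rfloor}} \subseteq \bigOof{\tight(n+1)^{k/2}}$ states and $\lfloor k/2 \rfloor \le \frac k 2$ colours. The closed form $\bigOof{\tight(n+1)^{k/2}} = \bigOof{n^{k/2}(0.76n)^{nk/2}}$ then follows exactly as in the proof of \cref{theorem:rabin}, using $\tight(n+1) = \bigO(n \cdot \tight(n))$ and the approximation $\tight(n) \approx (0.76n)^n$. The only step requiring care is the equivalence between min-odd parity and the Rabin disjunction together with the exact pair count; everything else is a direct application of the Rabin construction, so I expect no real obstacle.
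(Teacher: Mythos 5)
Your proposal is correct and follows essentially the same route as the paper: both reduce min-odd parity to a Rabin condition with $\frac k2$ pairs by recolouring (without touching the automaton's structure) and then invoke \cref{theorem:rabin}. The only difference is cosmetic --- the paper obtains its pairs by syntactically putting the condition into DNF (yielding pairs of the form $\accfinof{\tacc 0 + \tacc 2 + \cdots} \land \accinfof{\tacc 1 + \tacc 3 + \cdots}$), whereas you derive equivalent pairs from the semantic characterisation that the minimal infinitely occurring colour is odd.
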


\begin{proof}

The min-odd parity acceptance condition is of the form $\acccond = 
\accfinof{\tacc 0} \land (\accinfof{\tacc 1} \lor (\accfinof{\tacc 2} \land (\accinfof{\tacc 3} \lor (\accfinof{\tacc 4} \land \ldots))))$.
If we transform the acceptance condition into the DNF, we obtain 
$\acccond' = (\accfinof{\tacc 0} \land \accinfof{\tacc 1}) \lor (\accfinof{{\tacc 0}+{\tacc 2}} \land \accinfof{{\tacc 1}+{\tacc 3}}) \lor 
(\accfinof{{\tacc 0}+{\tacc 2}+{\tacc 4}} \land \accinfof{{\tacc 1}+{\tacc 3}+{\tacc 5}}) \lor \ldots$
which is a~Rabin acceptance condition with $\frac{k}{2}$ Rabin pairs. 
Note that we can use a~new colour for each union of colours and we obtain the same number of colours 
as in $\acccond$. 
According to \cref{theorem:rabin}, the parity automaton $\aut$ can be complemented into a~GBA 
with $\bigOof{\tight(n+1)^{\frac{k}{2}}}$ states. 
\qed
\end{proof}

\noindent
We note that the complexity obtained by our general procedure is worse than the
best one we are aware of, which is $2^{\bigO(n \log n)}$~\cite{CaiZ11}.

\vspace{-0.0mm}
\subsection{Generalized Rabin Automata}
\vspace{-0.0mm}

Recall that the generalized Rabin condition is of the form 
$\accfinof{\tacc 0} \land \bigwedge_{j = 1}^n\accinfof{\taccj}$.
We can now easily combine the procedure for (standard) Rabin automata from the
previous section and the procedure for $\accinf$-TELA from \cref{subsec:infela}
to construct the subprocedure $\instbinf$ for $\bigwedge_{j =
1}^n\accinfof{\taccj}$.
The set of macrostates will be 
\begin{equation*}
\minbinf = 2^\states \cup
(\cT\times 2^\states \times \{ 0, 2, \dots, 2n - 2 \} \times \levelmodels) \cup
  (\cT\times \{ 0, 2, \dots, 2n - 2 \} \times \levelmodels )
\end{equation*}
Details are given in \cref{app:grabin}.
Similarly to \cref{subsec:infela,sec:rabin}, one can then obtain the following
bound on the size of the complement.

\begin{lemma}
  \label{lem:generalized_rabin}
  Let $\aut$ be a generalized Rabin automaton with one generalized Rabin pair
  with $\ell$ $\accinf$s. Then, there exists a BA accepting the complement
  of~$\aut$ with $\bigO(\ell^{n}\tight(n+1)) = \bigO(n\ell^n (0.76n)^n)$ states.
\end{lemma}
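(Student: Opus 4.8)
The plan is to instantiate the modular procedure of \cref{sec:fin-modular} with the subprocedure $\instbinf$ (whose components are spelled out in \cref{app:grabin}) and then prove the statement in two parts: correctness and size. Since the acceptance condition of a~single generalized Rabin pair is $\accfinof{\tacc 0} \land \varphi$ with $\varphi = \bigwedge_{j=1}^\ell \accinfof{\taccj}$, the construction $\instancompl(\instbinf, \aut)$ falls exactly under the scope of the modular framework and produces a~BA by design.

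For correctness I would first establish in \cref{app:grabin} that $\instbinf$ is a~correct subprocedure for~$\varphi$, i.e., that for every word~$\word$ and every RRDAG~$\gendag$ over~$\word$ we have $\gendag \not\models \varphi$ iff $\instbinf$ admits an accepting \finrun over~$\gendag$. This is obtained by fusing the two correctness arguments already present in the paper: the active/tracking breakpoint machinery of the Rabin subprocedure $\instinf$ from \cref{sec:rabin}, which forces newly injected runs to become tight (in the tracking macrostates $\minftrack$) before re-entering the active phase, together with the model-based ranking of \cref{subsec:infela}, where the level model~$\mu$ records which minimal model of~$\overline\varphi$ is eventually broken along each tracked run. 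Granting this, the correctness of the full construction, namely $\langof{\instancompl(\instbinf, \aut)} = \Sigma^\omega \setminus \langof \aut$, follows immediately from \cref{thm:mod_correct}.

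For the size bound the generic estimate $\bigOof{3^n \cdot |\minbinf|}$ of \cref{thm:mod_complexity} is too weak, so I would redo the refined counting used for the Rabin bound in \cref{sec:rabin}, now carrying the level-model component as in \cref{thm:inf-tela-complexity}. It is enough to count the dominant macrostates, which have the shape $(S, P, f, O, i, \mu)$. I encode each of these as a~pair $(h, i)$, where $h\colon \states \to \{-3, \ldots, 2n-1\} \times \modelsminof \varphi$ assigns to each~$q$ the level model $\mu(q)$ together with the symbol $-1$ if $q \in O$, $-2$ if $q \in \states \setminus S$, $-3$ if $q \in S \setminus P$, and $f(q)$ otherwise. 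Fixing~$i$, I split the encodings into the (at most eight) groups $g_M$, $M \subseteq \{-3,-2,-1\}$, according to which of these symbols occur as rank components of~$h$, and in each group shuffle the rank component exactly as in the Rabin proof (incrementing non-negative ranks by two or four and recovering a~genuine tight ranking in $\cT$). The level-model component is orthogonal to this shuffling: it offers at most~$\ell$ independent choices per state---because for a~single pair $\overline\varphi = \bigvee_{j=1}^\ell \taccj$ is already in DNF, so $\modelsminof \varphi = \{\{\tacc 1\}, \ldots, \{\taccgof \ell\}\}$ has exactly $\ell$ elements---and hence contributes a~clean multiplicative factor of at most~$\ell^n$. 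Each group therefore has size $\bigOof{\ell^n \cdot \tight(n)}$, and so does the total for a~fixed~$i$.

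Summing over the $\bigOof{n}$ admissible values of~$i$ and using $\bigOof{n \cdot \tight(n)} = \bigOof{\tight(n+1)}$ yields $\bigOof{\ell^n \cdot \tight(n+1)}$; the remaining macrostates (the subsets in $2^\states$ and the tracking macrostates $(f, i, \mu)$) are dominated, and the numeric form $\bigOof{n \ell^n (0.76n)^n}$ follows from $\tight(n) \approx (0.76n)^n$. The only genuinely new work is the size argument, and the single point that must be verified rather than quoted is that the level-model factor multiplies in cleanly over all eight groups---which is exactly the orthogonality observed above, since the number of model choices per state is bounded by~$\ell$ regardless of the rank carried by that state.
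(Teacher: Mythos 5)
Your proposal is correct and follows essentially the same route as the paper: correctness is delegated to the correctness of the subprocedure $\instbinf$ plus \cref{thm:mod_correct}, and the size bound uses the identical $(h,i)$ encoding with markers $-1,-2,-3$ for $O$, $\states\setminus S$, and $S\setminus P$, the same eight-group case split with rank shuffling into $\cT(n)$, and the same observation that the level-model component contributes an orthogonal factor of $\ell^n$ (since $\modelsminof\varphi$ has exactly $\ell$ singletons for a single generalized Rabin pair), followed by summing over $i$ via $\bigOof{n\cdot\tight(n)} = \bigOof{\tight(n+1)}$. No substantive differences from the paper's argument.
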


\begin{theorem}\label{thm:generalized_rabin_compl}
  Let $\aut$ be a generalized Rabin automaton with $k$ generalized Rabin pairs, each 
  with at most $\ell$ $\accinf$s. Then, there exists a~GBA with~$k$ colours and 
  $\bigO(\ell^{nk}\tight(n+1)^k) = \bigO(n^k(0.76 \ell n)^{nk})$ states
  accepting $\Sigma^\omega \setminus \langof \aut$.
\end{theorem}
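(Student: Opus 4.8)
The plan is to reduce the statement to~\cref{lem:generalized_rabin} by the standard observation that complementing a~\emph{disjunctive} acceptance condition amounts to \emph{intersecting} the complements of its disjuncts; the argument mirrors the proof of~\cref{theorem:rabin}, only replacing the single-Rabin-pair complement by~\cref{lem:generalized_rabin}. Write the condition of~$\aut$ as $\acccond = \bigvee_{0 \le j < k} \psi_j$, where each disjunct~$\psi_j = \accfinof{\taccBj} \land \bigwedge_{0 \le \ell < m_j}\accinfof{\taccGjl}$ is a~single generalized Rabin pair (so $m_j \le \ell$), and let~$\autof j$ be the automaton sharing the states, transitions, and colouring of~$\aut$ but carrying the single-pair condition~$\psi_j$. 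Since a~word is accepted by~$\aut$ iff \emph{some} run satisfies \emph{some} disjunct~$\psi_j$, and these two existential quantifiers commute, $\langof{\aut} = \bigcup_{0 \le j < k} \langof{\autof j}$, whence $\Sigma^\omega \setminus \langof{\aut} = \bigcap_{0 \le j < k} (\Sigma^\omega \setminus \langof{\autof j})$.

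Next I would complement each disjunct and intersect. For each~$j$, \cref{lem:generalized_rabin} applied to~$\autof j$ yields a~BA~$\but_j$ with $\bigOof{m_j^n\,\tight(n+1)} \subseteq \bigOof{\ell^n\,\tight(n+1)}$ states satisfying $\langof{\but_j} = \Sigma^\omega \setminus \langof{\autof j}$. It then remains to recognise $\bigcap_{0 \le j < k} \langof{\but_j}$ by a~GBA, for which I would use the usual synchronous product of the~$k$ transition-based B\"uchi automata: the state set is $\prod_{0 \le j < k} Q_{\but_j}$, a~product transition fires on~$a$ iff every component fires on~$a$, and it carries colour~$j$ exactly when the $j$-th component takes one of its accepting transitions; the acceptance condition is the generalized B\"uchi condition $\bigwedge_{0 \le j < k} \accinfof{\taccj}$ with~$k$ colours. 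Any product run projects to one run of each~$\but_j$ over the same word and is accepting iff every projection sees its accepting transitions infinitely often, i.e.\ iff the word lies in each $\langof{\but_j}$; thus the product recognises exactly $\bigcap_{0 \le j < k} \langof{\but_j} = \Sigma^\omega \setminus \langof{\aut}$, with the required~$k$ colours.

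For the size, the product has $\prod_{0 \le j < k} |\but_j| \in \bigOof{(\ell^n\,\tight(n+1))^k} = \bigOof{\ell^{nk}\,\tight(n+1)^k}$ states. Plugging in $\tight(n+1) = \bigOof{n\,\tight(n)} = \bigOof{n(0.76n)^n}$~\cite{Schewe09,FriedgutKV06} gives $\bigOof{\ell^{nk} n^k (0.76n)^{nk}} = \bigOof{n^k(0.76\ell n)^{nk}}$, as claimed.

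Given~\cref{lem:generalized_rabin}, the construction and its correctness are routine; the only step deserving care is the decomposition $\langof{\aut} = \bigcup_j \langof{\autof j}$, which rests on the fact that acceptance is existential over runs, so the run-quantifier and the disjunct-quantifier may be swapped --- this is precisely what licenses complementing each single pair in isolation rather than the whole disjunction at once. Everything else is the standard B\"uchi-intersection product, contributing only the multiplicative $k$-fold blow-up that appears in the exponents.
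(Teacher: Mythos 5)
Your proposal is correct and follows essentially the same route as the paper: the paper likewise complements each generalized Rabin pair separately via \cref{lem:generalized_rabin} and then takes the synchronous product of the resulting automata to obtain a GBA with $k$ colours and $\bigO(\ell^{nk}\tight(n+1)^k)$ states. You merely spell out the union/intersection decomposition and the product construction that the paper leaves implicit.
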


\noindent
There is not much work on the complementation of generalized Rabin automata or
general TELAs (we are only aware of the upper bound~$2^{2^{\bigO(n)}}$
from~\cite{SafraV89})).
One could approach the complementation by translation of the generalized Rabin automaton into a~GBA
using the technique from~\cite{JohnJBK22}.
The technique first performs $\accfin$-removal, i.e., it makes $k$ copies
of~$\aut$, each with the corresponding $\accfin$-transitions removed, obtaining
a~GBA with $n(k+1)$ states and~$\ell$ colours (one can share colours across the
independent copies).
After that, we could use our GBA complementation algorithm from
\cref{sec:inftela}, which would give us a~BA with 
$\bigO(n(k+1)(0.76\ell n(k+1))^{n(k+1)})$ states, which is worse.

\begin{lemma}\label{lem:}
$\bigO(n^k(0.76 \ell n)^{nk}) \subset \bigO(n(k+1)(0.76\ell n(k+1))^{n(k+1)})$
\end{lemma}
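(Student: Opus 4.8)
The plan is to show that the ratio of the right-hand bound (RHS) to the left-hand bound (LHS) grows without bound as $n \to \infty$; this simultaneously yields the inclusion $\bigO(n^k(0.76\ell n)^{nk}) \subseteq \bigO(n(k+1)(0.76\ell n(k+1))^{n(k+1)})$ and its strictness, since the gap is witnessed by the RHS function itself, which then lies in $\bigO$ of the RHS but not in $\bigO$ of the LHS. I would follow the computational style of \cref{lem:complexity-rabin-kv}, treating $n$ as the asymptotic variable and $k, \ell \geq 1$ as fixed parameters.

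First I would expand the base of the RHS as $(0.76\ell n(k+1))^{n(k+1)} = (0.76\ell n)^{n(k+1)}(k+1)^{n(k+1)}$ and divide by the LHS. After cancelling the common factor $(0.76\ell n)^{nk}$ (note $n(k+1)-nk = n$), the quotient reduces to
$$\frac{n(k+1)}{n^{k}}\cdot(0.76\ell n)^{n}\cdot(k+1)^{n(k+1)}.$$
The two trailing factors $(0.76\ell n)^{n}$ and $(k+1)^{n(k+1)} \geq 2^{2n}$ are the source of the strict gap: each already grows superpolynomially in~$n$.

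The one point that requires care---and the only mildly delicate step---is verifying that these exponential factors dominate the polynomial prefactor $\tfrac{n(k+1)}{n^{k}} = (k+1)\,n^{1-k}$, which for $k \geq 2$ decays like $n^{-(k-1)}$ and for $k=1$ equals the constant~$2$. Since $(0.76\ell n)^{n}$ outgrows any fixed polynomial in~$n$, the whole product tends to $+\infty$ for every fixed $k \geq 1$ and $\ell \geq 1$. Hence the quotient is unbounded, so the LHS lies in $\bigO$ of the RHS but not conversely, which is exactly the strict containment claimed.

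Alternatively, mirroring \cref{lem:complexity-rabin-kv} more literally, one can rewrite $n^{k}=(\sqrt[n]{n})^{nk}$ to obtain LHS $=(\sqrt[n]{n}\cdot 0.76\ell n)^{nk}$ and bound $\sqrt[n]{n}$ by its global maximum $e^{1/e}<1.45$, so the LHS base is below $1.45\cdot 0.76\ell n = 1.102\ell n$. Discarding the harmless polynomial prefactor of the RHS and splitting its exponent as $(0.76\ell n(k+1))^{nk}\cdot(0.76\ell n(k+1))^{n}$, the RHS base $0.76\ell n(k+1)\geq 1.52\ell n$ (using $k+1\geq 2$) strictly exceeds $1.102\ell n$, while the RHS carries the additional factor $(0.76\ell n(k+1))^{n}\to\infty$; the same conclusion then follows.
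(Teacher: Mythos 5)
Your proof is correct, and it follows the same basic strategy as the paper --- form the quotient of the right-hand bound by the left-hand bound and show it diverges --- but your execution is genuinely different and, in fact, more rigorous. The paper's proof is only an ``Idea'': it first \emph{simplifies} the right-hand side by replacing $k+1$ with $k$, reducing the quotient to $\smash{\frac{nk^{nk+1}}{n^k}}$, and then splits into the cases $n\geq k$ and $k\geq n$ (so it implicitly treats $k$ as a second asymptotic parameter). That simplification discards exactly the factors you keep, namely $(0.76\ell n)^{n}(k+1)^{n(k+1)}$, and for $k=1$ the paper's simplified quotient is identically $1$, so its case analysis does not actually close that case; your exact computation of the quotient $(k+1)\,n^{1-k}\cdot(0.76\ell n)^{n}\cdot(k+1)^{n(k+1)}$ handles all $k\geq 1$ uniformly. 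The one thing the paper's version aims at that you do not explicitly address is divergence when $k$ (rather than $n$) grows; your exact quotient still diverges in that regime since $(k+1)^{n(k+1)}$ dominates $n^{k}$ for fixed $n\geq 2$, but you should say so if the multivariate reading is intended. Your concluding logic --- unbounded quotient gives containment plus strictness witnessed by the right-hand function --- and your alternative argument via $\sqrt[n]{n}\leq e^{1/e}<1.45$ in the style of \cref{lem:complexity-rabin-kv} are both sound.
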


\begin{proof}[Idea]
Let us observe the behaviour of the fraction with a~simplified right-hand side:
$\frac{nk(0.76\ell nk)^{nk}}{n^k(0.76 \ell n)^{nk}} =
\frac{nk^{nk+1}}{n^k}$.
There are two options:
\begin{enumerate}[(i)]
  \item  $n \geq k$: in this case, $k^{nk} \gg n^k$ and the theorem holds.
  \item  $k \ge n$: in this case, $k^k \gg n^k$ and the theorem holds.
\qed
\end{enumerate}
\end{proof}

\vspace{-0.0mm}
\subsection{General TELAs}
\label{general_ela}
\vspace{-0.0mm}

For complementation of general TELAs, we use the fact that any TELA can be
converted into a~generalized Rabin automaton with the same structure by
modifying the acceptance condition into the DNF form (and not touching the
structure of the automaton).
For a~TELA with~$k$ colours, the DNF will have at most~$2^k$ clauses (i.e.,
generalized Rabin pairs), each one with at most~$k$ literals.

\begin{theorem}\label{thm:}
Let $\aut$ be a TELA with $k$ colours.
Then, there exists a~GBA with~$2^k$ colours and 
$\bigO(k^{n2^k}\tight(n+1)^{2^k}) = \bigO(n^{2^k}(0.76 nk)^{n2^k})$ states
accepting $\Sigma^\omega \setminus \langof \aut$.
\end{theorem}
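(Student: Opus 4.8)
The plan is to lift the generalized Rabin complementation of \cref{thm:generalized_rabin_compl} to arbitrary TELAs by rewriting the acceptance condition into disjunctive normal form while leaving the automaton structure untouched. Fix $\aut = (\states, \trans, \inits, \colourset, \colouring, \acccond)$ with $|\colourset| = k$. Reading each colour $c \in \colourset$ as a Boolean variable, with $\accinfof{c}$ asserting it and $\accfinof{c}$ negating it, a model of $\acccond$ is exactly a subset $M \subseteq \colourset$ with $M \models \acccond$, and there are at most $2^k$ such subsets. Taking the canonical (minterm) DNF, I would write $\acccond$ as the logically equivalent $\bigvee_{M \models \acccond} \psi_M$, where each clause $\psi_M = \bigwedge_{c \in M}\accinfof{c} \land \bigwedge_{c \notin M}\accfinof{c}$ contains exactly $k$ literals, of which at most $k$ are $\accinf$ atoms. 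Hence the DNF has at most $2^k$ clauses.

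Next I would reshape each clause $\psi_M$ into a single generalized Rabin pair, i.e., a conjunction of one $\accfin$ atom and at most $k$ $\accinf$ atoms. The only discrepancy is that $\psi_M$ may contain several $\accfin$ atoms $\bigwedge_{c \notin M}\accfinof{c}$, whereas a generalized Rabin pair permits only one. This is handled exactly as in the proof of \cref{thm:generalized_rabin_compl}: since $\bigwedge_{c \notin M}\accfinof{c}$ states that every colour outside $M$ occurs only finitely often, I introduce a fresh colour $d_M$ that labels a transition $t$ iff $\colouring(t) \cap (\colourset \setminus M) \neq \emptyset$, and replace the whole $\accfin$-block by the single atom $\accfinof{d_M}$. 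For any run $\rho$ we have $d_M \in \infofcol\rho$ iff some colour of $\colourset \setminus M$ lies in $\infofcol\rho$, so $\accfinof{d_M}$ and $\bigwedge_{c \notin M}\accfinof{c}$ are equivalent and $\psi_M$ becomes a genuine generalized Rabin pair. Crucially, this step only augments the colouring $\colouring$ with derived colours and rewrites $\acccond$; it changes neither $\states$, $\trans$, nor $\inits$, so $n = |\states|$ is preserved. The outcome is a generalized Rabin automaton $\aut_{\mathit{GR}}$ over the structure of $\aut$ with at most $2^k$ generalized Rabin pairs, each carrying at most $k$ $\accinf$ conjuncts.

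Finally I would invoke \cref{thm:generalized_rabin_compl} with the $k' \le 2^k$ pairs and $\ell = k$ infinitary conjuncts per pair, producing a GBA for $\Sigma^\omega \setminus \langof\aut$ with $k'$ colours and $\bigO(\ell^{nk'}\tight(n+1)^{k'})$ states. Since this bound is monotone in $k'$ and $\ell$, taking the worst case $k' = 2^k$, $\ell = k$ and using the equivalence $\bigO(\ell^{nk'}\tight(n+1)^{k'}) = \bigO(n^{k'}(0.76\ell n)^{nk'})$ supplied by that theorem yields $\bigO(k^{n2^k}\tight(n+1)^{2^k}) = \bigO(n^{2^k}(0.76 nk)^{n2^k})$ states and at most $2^k$ colours, which is exactly the claimed bound.

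I expect the genuinely substantive point—beyond the routine bookkeeping of clause and literal counts—to be verifying that the two rewriting steps preserve the language, i.e., $\langof\aut = \langof{\aut_{\mathit{GR}}}$. This reduces to a purely acceptance-level argument: a run $\rho$ satisfies $\acccond$ iff $\infofcol\rho \models \acccond$, the DNF step is a logical equivalence over the same colour variables, and the $\accfin$-merging step is justified by the equivalence $d_M \in \infofcol\rho \Leftrightarrow \infofcol\rho \cap (\colourset\setminus M) \neq \emptyset$ noted above. Everything else is complexity arithmetic that collapses directly into \cref{thm:generalized_rabin_compl}.
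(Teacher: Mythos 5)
Your proposal is correct and follows exactly the route the paper takes: rewrite $\acccond$ into DNF without touching the automaton's structure, view the at most $2^k$ clauses as generalized Rabin pairs (merging the several $\accfin$ atoms of a clause into one via a derived ``union'' colour, the same trick the paper uses for parity automata), and then substitute $k' = 2^k$ and $\ell = k$ into \cref{thm:generalized_rabin_compl}. The paper's own proof is a one-line substitution, so your write-up simply makes explicit the bookkeeping and the language-preservation argument that the paper leaves implicit.
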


\begin{proof}
By substituting to \cref{thm:generalized_rabin_compl}.
\qed
\end{proof}

\ifHP
{
\vspace{-0.0mm}
\section{Related Work}\label{sec:label}
\vspace{-0.0mm}


Lower bounds for complementation of classes of $\omega$-automata using the full
automata technique were established in~\cite{Yan06} (improving the previous
$\Omega(n!)$ lower bound of Michel~\cite{michel1988complementation}).
The technique was later generalized to improve the lower bound of Rabin
automata complementation~\cite{rabin-lower}. 
Double exponential lower bound for complementation of general Emerson-Lei automata was
given in~\cite{SafraV89}.
\mbox{See the survey in~\cite{Boker18} for more details.}

Simultaneously to establishing the lower bound, there emerged algorithms for determinizing and complementing various classes of $\omega$-automata. 
The optimal determinization approach for GBAs introduced in \cite{ScheweV12} yields deterministic Rabin automaton of $\mathsf{ghist}_k(n)$ states 
and $2^n-1$ Rabin pairs, where $\mathsf{ghist}_k(n)$ converges against $(1.47 nk)^n$ for large $k$.
Rank-based complementation of GBAs was proposed in~\cite{KupfermanV05}. Furthermore, 
there are approaches for semideterminization-based complementation of
GBAs~\cite{seminator} with double exponential complexity.
Regarding other acceptance conditions, determinization of parity automata based on root history trees was proposed in~\cite{ScheweV14}. 
A rank-based complementation of Streett and Rabin automata was introduced in~\cite{KV05} and later 
improved by tree structures in~\cite{CaiZ11}. Tight determinization of Streett automata was presented in~\cite{streett-tight}.
Tight complementation technique for parity automata based on flattened nested history trees was then proposed in~\cite{ScheweV14a}.
A lot of effort has been put into complementation of B\"{u}chi automata leading to algorithms roughly divided into several groups:
Ramsey-based~\cite{breuers-improved-ramsey,buchi1962decision,sistla1987complementation},
rank-based~\cite{HavlenaL21,HavlenaLS22,HavlenaLS22b,vardi2007buchi,KupfermanV01,Schewe09}, 
determinization-based~\cite{safra1988complexity,piterman2006nondeterministic,LiTFVZ22}, slice-based~\cite{kahler2008complementation}, and others~\cite{AllredU18,HavlenaLLST23,li2018learning}.
There are specialized more efficient algorithms for subclasses of
BAs, such as inherently-weaks~\cite{MiyanoH84}, deterministic~\cite{Kurshan87},
semideterministic~\cite{BlahoudekHSST16},
elevator~\cite{HavlenaLS22,HavlenaLLST23}, or
unambiguous~\cite{li-unambigous,FengLTVZ23} BAs.
















\bibliographystyle{plainurl}
\bibliography{literature}

\newpage
\appendix
\newcommand{
\begin{figure}[t]
  \begin{subfigure}[b]{0.32\linewidth}
  \begin{center}
  \begin{tikzpicture}[>=stealth',shorten >=0pt,auto,node distance=1.0cm,
                    scale=0.8,transform shape,initial text={}]
  \tikzstyle{every state}=[inner sep=3pt,minimum size=5pt,rectangle,rounded corners=1mm]
  \tikzstyle{empty}=[]

  \node[state] (q0) {$q,0$};

  \node[state,below of=q0] (q1) {$q,1$};
  \node[state,right of=q1] (r1) {$r,1$};
  \node[right of=r1] (t1) {};
  \coordinate[above right of=t4,xshift=-3mm] (t4_ar);

  \node[state,below of=q1] (q2) {$q,2$};
  \node[below of=r1]       (r2) {};
  \node[state,right of=r2] (t2) {$t,2$};
  \node[state,right of=t2] (s2) {$s,2$};

  \node[below of=r2]      (r3) {};
  \node[state,left of=r3] (q3) {$q,3$};
  \node[below of=s2]      (s3) {};

  \node[below of=q3] (q4) {$\vdots$};
  \coordinate[below left of=q4,xshift=3mm,yshift=3mm] (q4_bl);
  \node[right of=q4] (r4) {$\ddots$};
  \node[right of=r4] (t4) {};
  \coordinate[below right of=t4,xshift=-3mm,yshift=3mm] (t4_br);

  \draw[->] (q0) edge node[above right,xshift=-1mm,yshift=-1mm] {\tacc 0} (r1)
    (q0) edge (q1)
    (r1) edge (s2)
    (r1) edge node[below left,xshift=1mm,yshift=1mm] {\tacc 2} (t2)
    (q1) edge (q2);

  \draw[->] (t2) edge (q3)
    (q2) edge (q3);

  \draw[->] (q3) edge (q4)
    (q3) edge node[above right,xshift=-1mm,yshift=-1mm] {\tacc 0} (r4);

  \node[empty, node distance=8mm,below left of=q0,xshift=-3mm] (sym1) {$c$};
  \node[empty, below of=sym1] (sym2) {$a$};
  \node[empty, below of=sym2] (sym3) {$b$};
  \node[empty, below of=sym3,node distance=10mm] (symdots) {$\vdots$};
\end{tikzpicture}
  \caption{The run DAG $\dagw$}
  \end{center}
  \label{label}
  \end{subfigure}
  \begin{subfigure}[b]{0.29\linewidth}
  \begin{center}
    \begin{tikzpicture}[>=stealth',shorten >=0pt,auto,node distance=1.0cm,
                    scale=0.8,transform shape,initial text={}]
  \tikzstyle{every state}=[inner sep=3pt,minimum size=5pt,rectangle,rounded corners=1mm]
  \tikzstyle{empty}=[]

  \node[state] (q0) {$q,0$};

  \node[state,below of=q0] (q1) {$q,1$};
  \node[state,right of=q1] (r1) {$r,1$};
  \node[right of=r1] (t1) {};
  \coordinate[above right of=t4,xshift=-3mm] (t4_ar);

  \node[state,below of=q1] (q2) {$q,2$};
  \node[below of=r1]       (r2) {};
  \node[state,right of=r2] (s2) {$s,2$};

  \node[below of=r2]      (r3) {};
  \node[state,left of=r3] (q3) {$q,3$};
  \node[below of=s2]      (s3) {};

  \node[below of=q3] (q4) {$\vdots$};
  \coordinate[below left of=q4,xshift=3mm,yshift=3mm] (q4_bl);
  \node[right of=q4] (r4) {$\ddots$};
  \node[right of=r4] (t4) {};
  \coordinate[below right of=t4,xshift=-3mm,yshift=3mm] (t4_br);

  \draw[->] (q0) edge node[above right,xshift=-1mm,yshift=-1mm] {\tacc 0} (r1)
    (q0) edge (q1)
    (r1) edge (s2)
    (q1) edge (q2);

  \draw[->] (q2) edge (q3);

  \draw[->] (q3) edge (q4)
    (q3) edge node[above right,xshift=-1mm,yshift=-1mm] {\tacc 0} (r4);

  \node[empty, node distance=8mm,below left of=q0,xshift=-3mm] (sym1) {$c$};
  \node[empty, below of=sym1] (sym2) {$a$};
  \node[empty, below of=sym2] (sym3) {$b$};
  \node[empty, below of=sym3,node distance=10mm] (symdots) {$\vdots$};
\end{tikzpicture}
    \caption{The narrow run DAG $\gendag$}
  \end{center}
  \label{label}
  \end{subfigure}
  \begin{subfigure}[b]{0.36\linewidth}
    \begin{center}
      \begin{tikzpicture}[>=stealth',shorten >=0pt,auto,node distance=1.0cm,
                    scale=0.8,transform shape,initial text={}]
  \tikzstyle{every state}=[inner sep=3pt,minimum size=5pt,rectangle,rounded corners=1mm]
  \tikzstyle{empty}=[]

  \node[state] (q0) {$q,0$};

  \node[state,below of=q0] (q1) {$q,1$};
  \node[state,right of=q1] (r1) {$r,1$};
  \node[right of=r1] (t1) {};
  \coordinate[above right of=t4,xshift=-3mm] (t4_ar);

  \node[state,below of=q1] (q2) {$q,2$};
  \node[below of=r1]       (r2) {};
  \node[state,right of=r2] (t2) {$t,2$};
  \node[state,right of=t2] (s2) {$s,2$};

  \node[below of=r2]      (r3) {};
  \node[state,left of=r3] (q3) {$q,3$};
  \node[below of=s2]      (s3) {};

  \node[below of=q3] (q4) {$\vdots$};
  \coordinate[below left of=q4,xshift=3mm,yshift=3mm] (q4_bl);
  \node[right of=q4] (r4) {$\ddots$};
  \node[right of=r4] (t4) {};
  \coordinate[below right of=t4,xshift=-3mm,yshift=3mm] (t4_br);

  \draw[->] (q0) edge node[above right,xshift=-1mm,yshift=-1mm] {\tacc 0} (r1)
    (q0) edge (q1)
    (r1) edge (s2)
    (q1) edge (q2);

  \draw[->] (t2) edge (q3)
    (q2) edge (q3);

  \draw[->] (q3) edge (q4)
    (q3) edge node[above right,xshift=-1mm,yshift=-1mm] {\tacc 0} (r4);

  \node[empty, node distance=8mm,below left of=q0,xshift=-3mm] (sym1) {$c$};
  \node[empty, below of=sym1] (sym2) {$a$};
  \node[empty, below of=sym2] (sym3) {$b$};
  \node[empty, below of=sym3,node distance=10mm] (symdots) {$\vdots$};
\end{tikzpicture}
    \end{center}
    \caption{The $\I$-narrow run DAG $\gendag{}^\prime$ for $\I = \{ 3k \mid k \in \omega \}$. }
    \label{label}
    \end{subfigure}
  \caption{Consider the TELA $\autex$ from \cref{fig:ex_inf_inf_aut} with the
  acceptance condition $\accinfof{\acccondof 0} \land \accfinof{\acccondof 2}$, the
  transition function $\delta$ and the transition function $\Delta$ omitting 
    transitions labelled by $\acccondof 2$ (i.e., the single transition $r \ltr a t$).
    Then, for a word $w = (cab)^\omega$, we show a particular run DAG in each subfigure.
    \ol{}
   }
  \label{fig:dags}
\end{figure}
}[0]{
\begin{figure}[t]
  \begin{subfigure}[b]{0.32\linewidth}
  \begin{center}
  \begin{tikzpicture}[>=stealth',shorten >=0pt,auto,node distance=1.0cm,
                    scale=0.8,transform shape,initial text={}]
  \tikzstyle{every state}=[inner sep=3pt,minimum size=5pt,rectangle,rounded corners=1mm]
  \tikzstyle{empty}=[]

  \node[state] (q0) {$q,0$};

  \node[state,below of=q0] (q1) {$q,1$};
  \node[state,right of=q1] (r1) {$r,1$};
  \node[right of=r1] (t1) {};
  \coordinate[above right of=t4,xshift=-3mm] (t4_ar);

  \node[state,below of=q1] (q2) {$q,2$};
  \node[below of=r1]       (r2) {};
  \node[state,right of=r2] (t2) {$t,2$};
  \node[state,right of=t2] (s2) {$s,2$};

  \node[below of=r2]      (r3) {};
  \node[state,left of=r3] (q3) {$q,3$};
  \node[below of=s2]      (s3) {};

  \node[below of=q3] (q4) {$\vdots$};
  \coordinate[below left of=q4,xshift=3mm,yshift=3mm] (q4_bl);
  \node[right of=q4] (r4) {$\ddots$};
  \node[right of=r4] (t4) {};
  \coordinate[below right of=t4,xshift=-3mm,yshift=3mm] (t4_br);

  \draw[->] (q0) edge node[above right,xshift=-1mm,yshift=-1mm] {\tacc 0} (r1)
    (q0) edge (q1)
    (r1) edge (s2)
    (r1) edge node[below left,xshift=1mm,yshift=1mm] {\tacc 2} (t2)
    (q1) edge (q2);

  \draw[->] (t2) edge (q3)
    (q2) edge (q3);

  \draw[->] (q3) edge (q4)
    (q3) edge node[above right,xshift=-1mm,yshift=-1mm] {\tacc 0} (r4);

  \node[empty, node distance=8mm,below left of=q0,xshift=-3mm] (sym1) {$c$};
  \node[empty, below of=sym1] (sym2) {$a$};
  \node[empty, below of=sym2] (sym3) {$b$};
  \node[empty, below of=sym3,node distance=10mm] (symdots) {$\vdots$};
\end{tikzpicture}
  \caption{The run DAG $\dagw$}
  \end{center}
  \label{label}
  \end{subfigure}
  \begin{subfigure}[b]{0.29\linewidth}
  \begin{center}
    \begin{tikzpicture}[>=stealth',shorten >=0pt,auto,node distance=1.0cm,
                    scale=0.8,transform shape,initial text={}]
  \tikzstyle{every state}=[inner sep=3pt,minimum size=5pt,rectangle,rounded corners=1mm]
  \tikzstyle{empty}=[]

  \node[state] (q0) {$q,0$};

  \node[state,below of=q0] (q1) {$q,1$};
  \node[state,right of=q1] (r1) {$r,1$};
  \node[right of=r1] (t1) {};
  \coordinate[above right of=t4,xshift=-3mm] (t4_ar);

  \node[state,below of=q1] (q2) {$q,2$};
  \node[below of=r1]       (r2) {};
  \node[state,right of=r2] (s2) {$s,2$};

  \node[below of=r2]      (r3) {};
  \node[state,left of=r3] (q3) {$q,3$};
  \node[below of=s2]      (s3) {};

  \node[below of=q3] (q4) {$\vdots$};
  \coordinate[below left of=q4,xshift=3mm,yshift=3mm] (q4_bl);
  \node[right of=q4] (r4) {$\ddots$};
  \node[right of=r4] (t4) {};
  \coordinate[below right of=t4,xshift=-3mm,yshift=3mm] (t4_br);

  \draw[->] (q0) edge node[above right,xshift=-1mm,yshift=-1mm] {\tacc 0} (r1)
    (q0) edge (q1)
    (r1) edge (s2)
    (q1) edge (q2);

  \draw[->] (q2) edge (q3);

  \draw[->] (q3) edge (q4)
    (q3) edge node[above right,xshift=-1mm,yshift=-1mm] {\tacc 0} (r4);

  \node[empty, node distance=8mm,below left of=q0,xshift=-3mm] (sym1) {$c$};
  \node[empty, below of=sym1] (sym2) {$a$};
  \node[empty, below of=sym2] (sym3) {$b$};
  \node[empty, below of=sym3,node distance=10mm] (symdots) {$\vdots$};
\end{tikzpicture}
    \caption{The narrow run DAG $\gendag$}
  \end{center}
  \label{label}
  \end{subfigure}
  \begin{subfigure}[b]{0.36\linewidth}
    \begin{center}
      \begin{tikzpicture}[>=stealth',shorten >=0pt,auto,node distance=1.0cm,
                    scale=0.8,transform shape,initial text={}]
  \tikzstyle{every state}=[inner sep=3pt,minimum size=5pt,rectangle,rounded corners=1mm]
  \tikzstyle{empty}=[]

  \node[state] (q0) {$q,0$};

  \node[state,below of=q0] (q1) {$q,1$};
  \node[state,right of=q1] (r1) {$r,1$};
  \node[right of=r1] (t1) {};
  \coordinate[above right of=t4,xshift=-3mm] (t4_ar);

  \node[state,below of=q1] (q2) {$q,2$};
  \node[below of=r1]       (r2) {};
  \node[state,right of=r2] (t2) {$t,2$};
  \node[state,right of=t2] (s2) {$s,2$};

  \node[below of=r2]      (r3) {};
  \node[state,left of=r3] (q3) {$q,3$};
  \node[below of=s2]      (s3) {};

  \node[below of=q3] (q4) {$\vdots$};
  \coordinate[below left of=q4,xshift=3mm,yshift=3mm] (q4_bl);
  \node[right of=q4] (r4) {$\ddots$};
  \node[right of=r4] (t4) {};
  \coordinate[below right of=t4,xshift=-3mm,yshift=3mm] (t4_br);

  \draw[->] (q0) edge node[above right,xshift=-1mm,yshift=-1mm] {\tacc 0} (r1)
    (q0) edge (q1)
    (r1) edge (s2)
    (q1) edge (q2);

  \draw[->] (t2) edge (q3)
    (q2) edge (q3);

  \draw[->] (q3) edge (q4)
    (q3) edge node[above right,xshift=-1mm,yshift=-1mm] {\tacc 0} (r4);

  \node[empty, node distance=8mm,below left of=q0,xshift=-3mm] (sym1) {$c$};
  \node[empty, below of=sym1] (sym2) {$a$};
  \node[empty, below of=sym2] (sym3) {$b$};
  \node[empty, below of=sym3,node distance=10mm] (symdots) {$\vdots$};
\end{tikzpicture}
    \end{center}
    \caption{The $\I$-narrow run DAG $\gendag{}^\prime$ for $\I = \{ 3k \mid k \in \omega \}$. }
    \label{label}
    \end{subfigure}
  \caption{Consider the TELA $\autex$ from \cref{fig:ex_inf_inf_aut} with the
  acceptance condition $\accinfof{\acccondof 0} \land \accfinof{\acccondof 2}$, the
  transition function $\delta$ and the transition function $\Delta$ omitting 
    transitions labelled by $\acccondof 2$ (i.e., the single transition $r \ltr a t$).
    Then, for a word $w = (cab)^\omega$, we show a particular run DAG in each subfigure.
    \ol{}
   }
  \label{fig:dags}
\end{figure}
}

\vspace{-0.0mm}
\section{Proofs for \cref{sec:fin-modular}}\label{sec:label}
\vspace{-0.0mm}

\begin{figure}[t]
  \begin{subfigure}[b]{0.32\linewidth}
  \begin{center}
  \begin{tikzpicture}[>=stealth',shorten >=0pt,auto,node distance=1.0cm,
                    scale=0.8,transform shape,initial text={}]
  \tikzstyle{every state}=[inner sep=3pt,minimum size=5pt,rectangle,rounded corners=1mm]
  \tikzstyle{empty}=[]

  \node[state] (q0) {$q,0$};

  \node[state,below of=q0] (q1) {$q,1$};
  \node[state,right of=q1] (r1) {$r,1$};
  \node[right of=r1] (t1) {};
  \coordinate[above right of=t4,xshift=-3mm] (t4_ar);

  \node[state,below of=q1] (q2) {$q,2$};
  \node[below of=r1]       (r2) {};
  \node[state,right of=r2] (t2) {$t,2$};
  \node[state,right of=t2] (s2) {$s,2$};

  \node[below of=r2]      (r3) {};
  \node[state,left of=r3] (q3) {$q,3$};
  \node[below of=s2]      (s3) {};

  \node[below of=q3] (q4) {$\vdots$};
  \coordinate[below left of=q4,xshift=3mm,yshift=3mm] (q4_bl);
  \node[right of=q4] (r4) {$\ddots$};
  \node[right of=r4] (t4) {};
  \coordinate[below right of=t4,xshift=-3mm,yshift=3mm] (t4_br);

  \draw[->] (q0) edge node[above right,xshift=-1mm,yshift=-1mm] {\tacc 0} (r1)
    (q0) edge (q1)
    (r1) edge (s2)
    (r1) edge node[below left,xshift=1mm,yshift=1mm] {\tacc 2} (t2)
    (q1) edge (q2);

  \draw[->] (t2) edge (q3)
    (q2) edge (q3);

  \draw[->] (q3) edge (q4)
    (q3) edge node[above right,xshift=-1mm,yshift=-1mm] {\tacc 0} (r4);

  \node[empty, node distance=8mm,below left of=q0,xshift=-3mm] (sym1) {$c$};
  \node[empty, below of=sym1] (sym2) {$a$};
  \node[empty, below of=sym2] (sym3) {$b$};
  \node[empty, below of=sym3,node distance=10mm] (symdots) {$\vdots$};
\end{tikzpicture}
  \caption{The run DAG $\dagw$}
  \end{center}
  \label{label}
  \end{subfigure}
  \begin{subfigure}[b]{0.29\linewidth}
  \begin{center}
    \begin{tikzpicture}[>=stealth',shorten >=0pt,auto,node distance=1.0cm,
                    scale=0.8,transform shape,initial text={}]
  \tikzstyle{every state}=[inner sep=3pt,minimum size=5pt,rectangle,rounded corners=1mm]
  \tikzstyle{empty}=[]

  \node[state] (q0) {$q,0$};

  \node[state,below of=q0] (q1) {$q,1$};
  \node[state,right of=q1] (r1) {$r,1$};
  \node[right of=r1] (t1) {};
  \coordinate[above right of=t4,xshift=-3mm] (t4_ar);

  \node[state,below of=q1] (q2) {$q,2$};
  \node[below of=r1]       (r2) {};
  \node[state,right of=r2] (s2) {$s,2$};

  \node[below of=r2]      (r3) {};
  \node[state,left of=r3] (q3) {$q,3$};
  \node[below of=s2]      (s3) {};

  \node[below of=q3] (q4) {$\vdots$};
  \coordinate[below left of=q4,xshift=3mm,yshift=3mm] (q4_bl);
  \node[right of=q4] (r4) {$\ddots$};
  \node[right of=r4] (t4) {};
  \coordinate[below right of=t4,xshift=-3mm,yshift=3mm] (t4_br);

  \draw[->] (q0) edge node[above right,xshift=-1mm,yshift=-1mm] {\tacc 0} (r1)
    (q0) edge (q1)
    (r1) edge (s2)
    (q1) edge (q2);

  \draw[->] (q2) edge (q3);

  \draw[->] (q3) edge (q4)
    (q3) edge node[above right,xshift=-1mm,yshift=-1mm] {\tacc 0} (r4);

  \node[empty, node distance=8mm,below left of=q0,xshift=-3mm] (sym1) {$c$};
  \node[empty, below of=sym1] (sym2) {$a$};
  \node[empty, below of=sym2] (sym3) {$b$};
  \node[empty, below of=sym3,node distance=10mm] (symdots) {$\vdots$};
\end{tikzpicture}
    \caption{The narrow run DAG $\gendag$}
  \end{center}
  \label{label}
  \end{subfigure}
  \begin{subfigure}[b]{0.36\linewidth}
    \begin{center}
      \begin{tikzpicture}[>=stealth',shorten >=0pt,auto,node distance=1.0cm,
                    scale=0.8,transform shape,initial text={}]
  \tikzstyle{every state}=[inner sep=3pt,minimum size=5pt,rectangle,rounded corners=1mm]
  \tikzstyle{empty}=[]

  \node[state] (q0) {$q,0$};

  \node[state,below of=q0] (q1) {$q,1$};
  \node[state,right of=q1] (r1) {$r,1$};
  \node[right of=r1] (t1) {};
  \coordinate[above right of=t4,xshift=-3mm] (t4_ar);

  \node[state,below of=q1] (q2) {$q,2$};
  \node[below of=r1]       (r2) {};
  \node[state,right of=r2] (t2) {$t,2$};
  \node[state,right of=t2] (s2) {$s,2$};

  \node[below of=r2]      (r3) {};
  \node[state,left of=r3] (q3) {$q,3$};
  \node[below of=s2]      (s3) {};

  \node[below of=q3] (q4) {$\vdots$};
  \coordinate[below left of=q4,xshift=3mm,yshift=3mm] (q4_bl);
  \node[right of=q4] (r4) {$\ddots$};
  \node[right of=r4] (t4) {};
  \coordinate[below right of=t4,xshift=-3mm,yshift=3mm] (t4_br);

  \draw[->] (q0) edge node[above right,xshift=-1mm,yshift=-1mm] {\tacc 0} (r1)
    (q0) edge (q1)
    (r1) edge (s2)
    (q1) edge (q2);

  \draw[->] (t2) edge (q3)
    (q2) edge (q3);

  \draw[->] (q3) edge (q4)
    (q3) edge node[above right,xshift=-1mm,yshift=-1mm] {\tacc 0} (r4);

  \node[empty, node distance=8mm,below left of=q0,xshift=-3mm] (sym1) {$c$};
  \node[empty, below of=sym1] (sym2) {$a$};
  \node[empty, below of=sym2] (sym3) {$b$};
  \node[empty, below of=sym3,node distance=10mm] (symdots) {$\vdots$};
\end{tikzpicture}
    \end{center}
    \caption{The $\I$-narrow run DAG $\gendag{}^\prime$ for $\I = \{ 3k \mid k \in \omega \}$. }
    \label{label}
    \end{subfigure}
  \caption{Consider the TELA $\autex$ from \cref{fig:ex_inf_inf_aut} with the
  acceptance condition $\accinfof{\acccondof 0} \land \accfinof{\acccondof 2}$, the
  transition function $\delta$ and the transition function $\Delta$ omitting 
    transitions labelled by $\acccondof 2$ (i.e., the single transition $r \ltr a t$).
    Then, for a word $w = (cab)^\omega$, we show a particular run DAG in each subfigure.
    \ol{}
   }
  \label{fig:dags}
\end{figure}

Let $\I \subseteq \omega$ be a set of indices.
An~RRDAG $\gendag = (S_0, S_1, \dots)$ is $\I$-narrow if $S_0 = I$ and for all
$i \in \omega$ it holds that
\begin{enumerate}[(i)]
  \item $\Delta(S_i, \wordof{i}) = S_{i+1}$ if $i+1 \notin \I$ and
  \item $\Delta(S_i, \wordof{i}) \subseteq S_{i+1}$ otherwise.
\end{enumerate}
An RRDAG is called \emph{narrow} if it is $\emptyset$-narrow (note that for
every word, there is exactly one narrow RRDAG).
Let $\dagg_1 = (S_0, S_1, \dots)$ and $\dagg_2 = (S_0', S_1', \dots)$ be two RRDAGs.
We say that for a~set of indices $\I$, $\dagg_1$ matches $\dagg_2$ on $\I$ if
$S_i'= S_i$ for each $i \in \I$.
An example illustrating the notions is given in \cref{fig:dags}.
We say that an $\I$-narrow RRDAG $\gendag$ is \emph{accepting} wrt~$\varphi$ if
it matches the (standard) run DAG $\dagw$ on $\I$ and there is some $k \geq 0$ such that
there is a~run $\rho = q_k q_{k+1}\ldots$ with $\rho \models
\varphi$ and for all $i \geq k$ it holds that $q_i \in S_i$ and $q_{i+1} \in
\Delta(q_i, w_i)$.
Intuitively, $\I$~will be positions where we re-sample~$P$, so the level will
be the same as in the full run DAG.

\begin{lemma}\label{lem:fin-accept}
  For every word $\word \in \Sigma^\omega$, it holds that $\word\in \langof{\aut}$ iff the 
  narrow RRDAG $\gendag$ is accepting wrt~$\accfinof{\taccgof{c}} \land \varphi$.
\end{lemma}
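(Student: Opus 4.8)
The plan is to reduce the statement to the following clean equivalence and prove both directions of it: $\word\in\langof{\aut}$ holds iff $\aut$ admits a run from $I$ that fires $\taccgof c$ only finitely often and whose infinitely-often colours satisfy $\varphi$. This reduction is immediate from the definition of acceptance, since a run $\rho$ is accepting iff $\infofcol{\rho}\models\accfinof{\taccgof c}\land\varphi$, and the latter splits into $\taccgof c\notin\infofcol{\rho}$ together with $\infofcol{\rho}\models\varphi$. The content of the lemma is then to match such ``eventually $\Delta$'' runs with the witnesses used in the definition of $\gendag$ being accepting wrt $\accfinof{\taccgof c}\land\varphi$.

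For the direction $\word\in\langof{\aut}\Rightarrow\gendag$ accepting, I would start from an accepting run $\rho=\rho_0\rho_1\dots$ of $\aut$ on $\word$ from $I$. Because $\taccgof c\notin\infofcol{\rho}$, there is an index $k$ after which every transition $\rho_i\ltr{\word_i}\rho_{i+1}$ (for $i\geq k$) carries no $\taccgof c$ and hence lies in $\Delta$; thus the suffix $\rho_k\rho_{k+1}\dots$ is a run of $\Delta$. As deleting a finite prefix leaves the set of infinitely-often colours unchanged, this suffix still satisfies $\varphi$, so it is a $\varphi$-accepting $\Delta$-run. I would then try to exhibit this suffix as the witness required by the acceptance definition of $\gendag$ — verifying $q_i\in S_i$ and $q_{i+1}\in\Delta(q_i,\word_i)$ for all $i\geq k$ — which is where the real work lies (see the final paragraph).

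For the converse, from an accepting witness I obtain a run $\rho=q_k q_{k+1}\dots$ of $\Delta$ with $q_i\in S_i$, $q_{i+1}\in\Delta(q_i,\word_i)$ and $\rho\models\varphi$. Since $q_k\in S_k$ and every vertex of the narrow $\gendag$ is reachable from $I$ along $\Delta$-transitions, I can prepend a finite $\Delta$-path $\pi$ from $I$ to $q_k$. The concatenation $\pi\cdot\rho$ is then a run of $\aut$ from $I$ using only $\Delta$-transitions; it therefore never fires $\taccgof c$, so $\infofcol{\pi\cdot\rho}\models\accfinof{\taccgof c}$, and since $\infofcol{\pi\cdot\rho}=\infofcol{\rho}$ it also satisfies $\varphi$. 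Hence $\pi\cdot\rho\models\accfinof{\taccgof c}\land\varphi$ is accepting and $\word\in\langof{\aut}$.

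I expect the main obstacle to be the first direction: the accepting run $\rho$ may reach its $\Delta$-stabilisation state $\rho_k$ through $\taccgof c$-transitions, so $\rho_k$ need not be reachable from $I$ along $\Delta$ alone and hence need not sit on level $k$ of the narrow $\gendag$. Reconciling this is precisely the motivation for relaxed run DAGs and for re-sampling (matching $\dagw$ on a set of indices $\I$): the witnessing $\Delta$-suffix must be relocated onto a level at which the RRDAG has recaptured $\rho_k$, and one must argue that the acceptance notion for $\gendag$ tolerates this relocation. Everything else — that $\Delta$-runs automatically satisfy $\accfinof{\taccgof c}$, and that prepending or deleting a finite prefix preserves $\infofcol{\cdot}$ and thus $\varphi$ — is routine.
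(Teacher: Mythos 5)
Your ($\Leftarrow$) direction is complete and essentially identical to the paper's: take the witnessing $\Delta$-run $q_k q_{k+1}\ldots$ with $q_k \in S_k$, prepend a finite path from $I$ to $q_k$ (which exists since $S_0 = I$ and each level consists of successors of the previous one), and note that a finite prefix changes neither $\infofcol{\cdot}$ nor, therefore, satisfaction of $\accfinof{\taccgof{c}} \land \varphi$.

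The ($\Rightarrow$) direction, however, is not finished: you explicitly defer the step ``verify $q_i \in S_i$ for all $i \geq k$'' to a closing paragraph that names the obstacle without overcoming it. Worse, the escape you sketch there---relocating the $\Delta$-suffix to a level at which the RRDAG has \emph{recaptured} $\rho_k$ via re-sampling---is not available for this lemma, because the narrow RRDAG is by definition the $\emptyset$-narrow one: it never re-samples, so its levels contain only states reachable from $I$ using $\Delta$ alone, and a state first reached through a $\taccgof{c}$-transition is never recaptured at any later level. Re-sampling enters only in the later lemmas about $\I$-narrow RRDAGs, not here. The paper's own proof steps over this point by asserting that, since $S_0=I$ and every level is closed under successors, ``$\gendag$ contains $\rho$''---i.e., it reads the levels of the narrow $\gendag$ as those of the full run DAG $\dagw$, under which $\rho_i \in S_i$ holds for every $i$ and the $\Delta$-suffix of the accepting run of $\aut$ is immediately a valid witness. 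Under that reading your worry dissolves and the forward direction is the two-line argument you already wrote; under the strict $\Delta$-only reading of ``narrow'' that you adopted, your argument cannot be completed as proposed. Either way, as submitted the forward implication is not established.
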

\begin{proof}
  ($\Rightarrow$)
  Let $w \in \langof{\aut}$ be a~word. Then there exists a~run $\rho$ of $\aut$
  on $w$ such that $\rho \models \accfinof{\taccgof{c}} \land \varphi$.
  In the narrow RRDAG $\gendag = (S_0, S_1, \ldots)$ it holds that $S_0 = I$.
  Since for all $i$ it holds that $\Delta(S_i, w_i) = S_{i+1}$, $\gendag$ contains $\rho$ and is
  therefore accepting wrt $\accfinof{\taccgof{c}} \land \varphi$. 
  
  ($\Leftarrow$)
  If $\dagg_\word^\delta = (S_0, S_1 \ldots)$ is accepting wrt $\accfinof{\taccgof{c}} \wedge \varphi$, then there is some $k \geq 0$ such that there is a~run $\rho = q_kq_{k+1} \ldots$ with $\rho \models \accfinof{\taccgof{c}} \land \varphi$ and for all $i \geq k$ it holds that $q_i \in S_i$ and $q_{i+1} \in \Delta(q_i, w_i)$. Since $S_0 = I$, then there exists a~run $\rho' = q_0\ldots q_kq_{k+1}\ldots$ which is also a~run of $\aut$ and therefore $w \in \langof{\aut}$. 
  \qed
\end{proof}

\begin{lemma}
  \label{lem:two_accepting_run_dags}
  Let $\word$ be a word and $\gendag$ be the narrow RRDAG.
  Furthermore, let $\I_\daggh$ and $\I_\daggk$ be two infinite sets of indices and $\gendagh$ 
  and $\gendagk$ be two $\I_\daggh$-narrow ($\I_\daggk$-narrow) RRDAGs
  that match $\gendag$ on $\I_\daggh$ ($\I_\daggk$).
  Then, $\gendagh$ is accepting wrt $\varphi$ iff $\gendagk$ is accepting wrt $\varphi$. 
\end{lemma}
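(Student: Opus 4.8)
The plan is to use the narrow RRDAG $\gendag = (S_0, S_1, \dots)$ as a common reference and to prove the auxiliary equivalence: for any infinite index set $\I$ and any $\I$-narrow RRDAG matching $\gendag$ on $\I$, that RRDAG is accepting wrt $\varphi$ iff $\gendag$ is. Applying this twice --- to $\gendagh$ with $\I_\daggh$ and to $\gendagk$ with $\I_\daggk$ --- and chaining the two equivalences gives the claim. Throughout I would rely on two facts. First, since $\gendag$ is $\emptyset$-narrow, the inclusion defining an RRDAG is an equality at every level, i.e.\ $S_{i+1} = \Delta(S_i, \wordof i)$. Second, acceptance wrt $\varphi$ (the existence of a tail run $\rho = q_k q_{k+1}\dots$ inside the level sets with $\rho \models \varphi$) depends only on $\infofcol \rho$, hence is invariant under passing to a tail of the witnessing run.

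Write $\gendagh = (H_0, H_1, \dots)$. For ``$\gendag$ accepting $\Rightarrow \gendagh$ accepting'', I would take a witnessing run $\rho = q_k q_{k+1}\dots$ of $\gendag$, choose an index $m \in \I_\daggh$ with $m \ge k$ (possible as $\I_\daggh$ is infinite, hence unbounded), and claim the tail $q_m q_{m+1}\dots$ witnesses acceptance of $\gendagh$. At the anchor we have $q_m \in S_m = H_m$ by the matching hypothesis, and the induction step grows the run into $\gendagh$: if $q_i \in H_i$ then $q_{i+1} \in \Delta(q_i, \wordof i) \subseteq \Delta(H_i, \wordof i) \subseteq H_{i+1}$, using that $\gendagh$ is an RRDAG. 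Tail-invariance gives $q_m q_{m+1}\dots \models \varphi$.

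For the converse ``$\gendagh$ accepting $\Rightarrow \gendag$ accepting'', I would run the symmetric argument, now shrinking the run back into $\gendag$: take a witnessing run of $\gendagh$, anchor it at some $m \in \I_\daggh$ with $m \ge k$ where again $q_m \in H_m = S_m$, and propagate using the narrow equality, namely $q_i \in S_i$ forces $q_{i+1} \in \Delta(q_i, \wordof i) \subseteq \Delta(S_i, \wordof i) = S_{i+1}$. Repeating both directions verbatim with $\gendagk$ and $\I_\daggk$ yields ``$\gendagk$ accepting iff $\gendag$ accepting'', and transitivity finishes the proof.

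The one place I would be careful is the asymmetry between the two directions: transplanting a run into $\gendagh$ uses only that an RRDAG may add states ($\Delta(H_i,\cdot) \subseteq H_{i+1}$), while transplanting it back into $\gendag$ uses that the narrow RRDAG contains exactly the $\Delta$-image at each level. Both steps depend on anchoring the transplanted run at a level where the two DAGs agree, so the infiniteness of $\I_\daggh$ and $\I_\daggk$ is essential --- it is what guarantees a matching index beyond the (a priori arbitrary) start $k$ of the witnessing run.
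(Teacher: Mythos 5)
Your proof is correct and follows essentially the same route as the paper's: anchor the witnessing run at a matching index (which exists because the index sets are infinite and hence unbounded), transplant its tail, and use tail-invariance of acceptance. The only organizational difference is that you factor the argument through an explicit equivalence with the narrow RRDAG $\gendag$ and chain it twice, whereas the paper does a single pass from $\gendagh$ to $\gendagk$ via intermediate anchors $l \in \I_\daggh$ and $m \in \I_\daggk$ with $m \geq l$ --- your version is, if anything, more explicit about the inclusion-versus-equality propagation steps that the paper leaves implicit.
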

\begin{proof}
  ($\Rightarrow$)
  If $\daggh_\word^{\Delta} = (H_0, H_1, \ldots)$ is accepting wrt $\varphi$, then there is some $k \geq 0$ such that there is a~run $\rho = q_kq_{k+1} \ldots$ with $\rho \models \varphi$ and for all $i \geq k$ it holds that $q_i \in H_i$ and $q_{i+1} \in \Delta(q_i, w_i)$.
  Since $\daggh_\word^\Delta$ matches $\dagg_\word^\Delta = (S_0, S_1, \ldots)$ on $\I_\daggh$, there is some $l \geq k$ such that $l \in \I_\daggh$ and $H_l = S_l$. Since $\daggk_\word^\Delta = (K_0, K_1, \ldots)$ matches $\dagg_\word^\Delta$ on $\I_\daggk$, there is some $m \geq l$ such that $m \in \I_\daggk$ and $K_m = S_m$. 
  That means that there is an accepting~run $\rho' = q_m q_{m+1} \ldots$ and $\daggk_\word^\Delta$ is accepting wrt $\varphi$. 

  ($\Leftarrow$) A~similar reasoning as in ($\Rightarrow$) can be used. 
  \qed
\end{proof}

\begin{lemma}\label{lem:dag-accept}
  Let $\word$ be a word. 
  The narrow RRDAG $\gendag$ is \emph{not} accepting wrt $\varphi \wedge \accfinof{\taccgof{c}}$ iff 
    there is an infinite set of indices $\I$ such that the $\I$-narrow RRDAG $\daggh_\word^{\Delta}$ 
    that matches $\dagg_\word^\Delta$ on $\I$ is \emph{not} accepting wrt $\varphi \wedge \accfinof{\taccgof{c}}$.  
\end{lemma}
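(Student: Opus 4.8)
The plan is to prove the two implications separately, after one reduction that lets me use \cref{lem:two_accepting_run_dags} (which is stated for~$\varphi$) in a setting about $\varphi \land \accfinof{\taccgof{c}}$. The reduction is that for \emph{any} RRDAG, being accepting wrt~$\varphi$ and being accepting wrt~$\varphi \land \accfinof{\taccgof{c}}$ coincide: a witness of acceptance is by definition a run in~$\Delta$, and $\Delta$ contains no $\taccgof{c}$-edge, so such a run sees $\taccgof{c}$ zero times and satisfies $\accfinof{\taccgof{c}}$ for free. Hence I may drop the $\accfinof{\taccgof{c}}$ conjunct and reason only about $\varphi$-acceptance from here on.

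For the implication from right to left I would use a containment argument that needs nothing beyond the definitions. A short induction on the level shows that the narrow RRDAG $\gendag = (S_0, S_1, \dots)$ is contained level-wise in every $\I$-narrow RRDAG $\gendagh = (H_0, H_1, \dots)$ matching $\dagw$ on~$\I$: indeed $S_0 = I = H_0$, and $H_{i+1} \supseteq \Delta(H_i, \wordof{i}) \supseteq \Delta(S_i, \wordof{i}) = S_{i+1}$ by monotonicity of~$\Delta$. Therefore any $\Delta$-run witnessing acceptance of~$\gendag$ is also a run of~$\gendagh$, so $\gendag$ accepting implies $\gendagh$ accepting; contrapositively, if some $\gendagh$ is non-accepting, then $\gendag$ is non-accepting.

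The implication from left to right is the crux, and I expect it to be the main obstacle, because the narrow RRDAG is the $\emptyset$-narrow one---it never re-samples---and so lies outside the scope of \cref{lem:two_accepting_run_dags}, which quantifies only over \emph{infinite} index sets; the containment above supplies merely the opposite inequality. My plan is to exhibit a single infinite witness, $\I = \omega$, whose $\I$-narrow RRDAG matching $\dagw$ on~$\I$ is the full run DAG~$\dagw$ itself. Assuming $\gendag$ is non-accepting, \cref{lem:fin-accept} gives $\word \notin \langof{\aut}$. I would then show that $\dagw$ is non-accepting wrt~$\varphi$ by contradiction: an accepting $\Delta$-run in~$\dagw$ passes only through states that are $\delta$-reachable from~$I$ (every level of~$\dagw$ consists of exactly those states), so it extends backwards to a genuine run of~$\aut$ that takes a $\taccgof{c}$-edge only finitely often and thus satisfies $\accfinof{\taccgof{c}} \land \varphi$, contradicting $\word \notin \langof{\aut}$. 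This produces the required non-accepting $\I$-narrow RRDAG, namely for $\I = \omega$.

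Finally, \cref{lem:two_accepting_run_dags} makes the statement robust and ties the two directions together: the acceptance status of an $\I$-narrow RRDAG matching $\dagw$ on~$\I$ is independent of the choice of the infinite set~$\I$, so the witness $\I = \omega$ that I control is interchangeable with any infinite~$\I$, and the existential over infinite index sets collapses to that one instance. The subtle point throughout is precisely the passage between \emph{no} re-sampling (the narrow RRDAG, related to membership in $\langof{\aut}$ by \cref{lem:fin-accept}) and \emph{infinitely many} re-samplings (the infinite-$\I$ family, whose common behaviour is guaranteed by \cref{lem:two_accepting_run_dags}).
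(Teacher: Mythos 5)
Your proof is correct, and the right-to-left direction coincides with the paper's: the paper disposes of it in one sentence (``all runs of the narrow RRDAG are present in $\daggh_\word^\Delta$''), and your induction $H_{i+1} \supseteq \Delta(H_i,\wordof{i}) \supseteq \Delta(S_i,\wordof{i}) = S_{i+1}$ is exactly the content that sentence hides. The left-to-right direction is where you genuinely diverge. The paper proves the stronger universal claim that \emph{every} $\I$-narrow RRDAG matching on an infinite $\I$ is non-accepting, by transferring a hypothetical accepting run of $\daggh_\word^\Delta$ into the narrow RRDAG through a matched level $l \in \I$; this is short but leans on identifying the matched levels with the narrow RRDAG's levels, which is delicate because the matching is really against the $\delta$-built run DAG $\dagw$ rather than against the $\Delta$-narrow RRDAG. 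You instead exhibit the single witness $\I = \omega$ (the full run DAG with $\Delta$-edges) and establish its non-acceptance by routing through \cref{lem:fin-accept} to get $\word \notin \langof{\aut}$ and then a backward-extension argument: an accepting $\Delta$-run starting at level $k$ of $\dagw$ prepends to a genuine run of $\aut$ that sees $\taccgof{c}$ only finitely often, contradicting $\word \notin \langof{\aut}$. This costs a dependence on \cref{lem:fin-accept}, but it handles the discrepancy between $\Delta$-reachable and $\delta$-reachable levels more carefully than the paper does, and your preliminary observation that $\accfinof{\taccgof{c}}$ is vacuous for $\Delta$-runs (so acceptance wrt $\varphi$ and wrt $\varphi \land \accfinof{\taccgof{c}}$ coincide for RRDAGs) makes explicit a simplification the paper leaves implicit. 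Your closing appeal to \cref{lem:two_accepting_run_dags} is not needed for the existential statement as written, but it is harmless and correctly records why the particular choice of infinite $\I$ is immaterial.
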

\begin{proof}
  ($\Rightarrow$)
  $\dagg_\word^\Delta = (S_0, S_1, \ldots)$ is not accepting wrt $\varphi \wedge \accfinof{\taccgof{c}}$, hence there is no  $k \geq 0$ such that
  there is a~run $\rho = q_k q_{k+1}\ldots$ with $\rho \models
  \varphi \wedge \accfinof{\taccgof{c}}$ and for all $i \geq k$ it holds that $q_i \in S_i$ and $q_{i+1} \in
  \Delta(q_i, w_i)$. Let $\daggh_\word^\Delta = (H_0, H_1, \ldots)$. 
  For an infinite set of indices $\I$, it holds that $S_i = H_i$ for every $i \in \I$. There is therefore no accepting run in $\daggh_\word^\Delta$.  

  ($\Leftarrow$)
  Since all runs in the narrow RRDAG $\dagg_\word^\Delta$ are also present in $\daggh_\word^\Delta$ and $daggh_\word^\Delta$ is not accepting, $\dagg_\word^\Delta$ is also not accepting. 
  \qed
\end{proof}

\begin{lemma}\label{lem:run-accept}
  Let $\instanDeltavarphi$ be a~correct subprocedure, $\word$ be a~word, and
  $\gendag = (S_0, S_1, \dots)$ be the narrow RRDAG.
  Furthermore, let 
  $\rho = (S_0, P_0, \mst_0)(S_1, P_1, \mst_1)\ldots$ be a run of $\instancompl(\instanDeltavarphi, \aut)$ over $\word$ and $\I = \{ i \in\omega \mid S_i = P_i \}$.
  Then $\dagg_\rho^{\Delta} = (P_0, P_1, \dots)$ is an $\I$-narrow RRDAG that matches $\gendag$ on $\I$.
\end{lemma}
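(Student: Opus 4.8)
The plan is to read the sequence $(P_0, P_1, \dots)$ directly off the two transition rules $\delta_1, \delta_2$ of $\instancompl(\instanDeltavarphi, \aut)$ and to check the three required properties in turn: that $\dagg_\rho^{\Delta}$ is an RRDAG, that it is $\I$-narrow, and that it matches $\gendag$ on $\I$. The observation driving everything is that at each step the successor breakpoint takes one of exactly two forms: inspecting $\delta_1$ and $\delta_2$, for every $i$ we have either $P_{i+1} = \Delta(P_i, \wordof i)$ (the tracking transitions $\delta_2$, and the active transitions $\delta_1$ with $\neg\breakempty(\mst_i)$) or $P_{i+1} = S_{i+1}$ (the re-sampling case, i.e.\ $\delta_1$ with $\breakempty(\mst_i)$). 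Alongside this dichotomy I would maintain the invariant $P_i \subseteq S_i$.

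First I would settle the base level. Since $\rho$ starts in $I' = \{(I,I,\mst_0) \mid \mst_0 \in \M_0\}$, we have $S_0 = P_0 = I$; hence $0 \in \I$ and the initial requirement $P_0 = I$ for an $\I$-narrow RRDAG holds. For the invariant $P_i \subseteq S_i$ I would use induction: the base case is $P_0 = S_0$, and in the step I use $S_{i+1} = \delta(S_i,\wordof i)$ together with $\Delta \subseteq \delta$ and monotonicity of $\delta$. In the re-sampling case $P_{i+1} = S_{i+1}$ trivially, and in the other case $P_{i+1} = \Delta(P_i,\wordof i) \subseteq \delta(P_i,\wordof i) \subseteq \delta(S_i,\wordof i) = S_{i+1}$ by the induction hypothesis.

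With the invariant in hand, the RRDAG condition $\Delta(P_i,\wordof i) \subseteq P_{i+1}$ is immediate: in the tracking case it is an equality, and in the re-sampling case $\Delta(P_i,\wordof i) \subseteq \Delta(S_i,\wordof i) \subseteq \delta(S_i,\wordof i) = S_{i+1} = P_{i+1}$. For $\I$-narrowness, clause (ii) (the containment at levels whose index lies in $\I$) is exactly this RRDAG property, so only clause (i) needs work: if $i+1 \notin \I$ then by definition of $\I$ we have $S_{i+1} \neq P_{i+1}$, and since $P_{i+1}$ is one of $\Delta(P_i,\wordof i)$ or $S_{i+1}$, it must equal $\Delta(P_i,\wordof i)$, which is precisely clause (i). Finally, the matching on $\I$ is nothing more than the defining property of $\I$: for $i \in \I$ we have $P_i = S_i$, so $\dagg_\rho^{\Delta}$ and the sequence of first components of $\rho$ agree on every level indexed by $\I$.

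The only step demanding care---and the one I would flag as the main obstacle---is extracting this clean two-case description of $P_{i+1}$ from the definition of $\delta' = \delta_1 \cup \delta_2$. One has to check that the subprocedure transition functions $\succact_\Delta$ and $\succtrack_\Delta$ only constrain the macrostate component $\mst'$ and never the breakpoint $P'$, so that $P_{i+1}$ is decided entirely by the $\breakempty(\mst_i)$ test and is always either $\Delta(P_i,\wordof i)$ or $S_{i+1}$. There is also the notational point that the first components $(S_0, S_1, \dots)$ of $\rho$ follow the full transition relation $\delta$ (they track all runs), so that the matching is to be read on these first components, exactly the levels recorded by $\I$. Once the dichotomy and the invariant $P_i \subseteq S_i$ are pinned down, there is no deeper difficulty: the RRDAG property and clause (ii) come for free, clause (i) is a one-line case analysis, and the matching is immediate from the definition of $\I$.
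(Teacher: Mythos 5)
Your proposal is correct and follows essentially the same route as the paper's proof: read off the dichotomy $P_{i+1} = S_{i+1}$ (re-sampling) versus $P_{i+1} = \Delta(P_i,\wordof i)$ from $\delta_1$ and $\delta_2$, use $P_i \subseteq S_i$ for the containment at levels in $\I$, use $S_{i+1} \neq P_{i+1}$ to force equality at levels outside $\I$, and get the matching directly from the definition of $\I$. You are in fact slightly more careful than the paper, which asserts the invariant $P_i \subseteq S_i$ and the base case $P_0 = I$ without proof, whereas you establish both by induction.
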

\begin{proof}
    In order to show that $\dagg_\rho^{\Delta}$ is an $\I$-narrow run DAG that matches $\dagg_\word^\Delta$ on $\I$, we need to show that for indices $i+1 \in \I$ it holds that $\Delta(P_i, w_i) \subseteq P_{i+1}$, for indices $i+1 \not\in \I$ it holds that $\Delta(P_i, w_i) = P_{i+1}$ and for $i \in \I$ it holds that $S_i = P_i$. 
    If $i+1 \in \I$, then $P_{i+1} = S_{i+1}$. Since $P_i \subseteq S_i$, it holds that $\Delta(P_i, w_i) \subseteq P_{i+1}$. 
    If $i+1 \not\in \I$, then $S_{i+1} \neq P_{i+1}$, therefore $P_{i+1} = \Delta(P, a)$ (follows directly from the procedure). 
    $\dagg_\rho^\Delta$ matches $\dagg_w^\delta$ on $\I$ because for every $i \in \I$ it trivially holds that $S_i = P_i$. 
  \qed
\end{proof}

\thmModCorrect*
  

\begin{proof}
  Let $w \in \langof{\instancompl(\instan_\Delta, \aut)}$. Then there is an accepting run $\rho = (S_0, P_0, \mst_0)(S_1, P_1, \mst_1) \ldots $ of $\instancompl(\instan_\Delta, \aut)$ s.t. $\breakempty(\mst_i)$
  holds for infinitely many $i$s. From \cref{lem:run-accept} we have that the narrow RRDAG $\dagg_\rho^\Delta = (P_0, P_1, \dots)$ matches $\dagg_w^\Delta$ on indices $\I$ and from the construction of 
  $\instancompl(\instan_\Delta, \aut)$ we have that $\I$ is infinite (there are infinite many $\breakempty(\mst_i)$ inducing resampling of $P$-part of the macrostate). From correctness of 
  $\instan_\Delta$ we have that $\dagg_\rho^\Delta$ is not accepting wrt $\varphi$. Further from \cref{lem:dag-accept} we have that $\dagg_w^\Delta$ is not accepting wrt $\varphi\wedge \accfinof{\taccgof{c}}$.
  Finally, from \cref{lem:fin-accept} we have that $w\notin \langof{\aut}$.

  Now let $w \in \Sigma^\omega \setminus \langof{\aut}$. Then, $\dagg_\word^\Delta = (S_1, S_2, \dots)$ is not accepting wrt $\varphi\wedge \accfinof{\taccgof{c}}$. From \cref{lem:dag-accept} we have that there 
  is an infinite set of indices $\I$ and $\I$-narrow RRDAG $\daggh_\word^\Delta$ matching $\dagg_\word^\Delta$ and moreover $\daggh_\word^\Delta$ is not accepting wrt $\varphi \wedge \accfinof{\taccgof{c}}$.
  From the correctness of $\instan_\Delta$ we further have that $\daggh_\word^\Delta = (P_1, P_2, \dots)$ is accepting in $\instan_\Delta$ meaning there is a run $\rho = (\mst_{0}, \mst_1, \dots)$ in $\instan_\Delta$
  s.t. $\breakempty(\mst_i)$ for infinite many $i$s. We inductively construct run $R$ of $\instancompl(\instan_\Delta, \aut)$, starting from $R = \epsilon$ as follows: 
  Let $i$ be the first position where $\breakempty(\mst_i)$ holds. We set $R := R . (S_0, P_0, \mst_0) \dots (S_i, P_i, \mst_i)$. Then, $\dagg_{\word[i:]}^\delta = (S_i, S_{i+1}, \dots)$ is not accepting wrt $\varphi\wedge \accfinof{\taccgof{c}}$ and we can again apply \cref{lem:dag-accept} to get an RRDAG $\dagg_{\word[i:]}^\Delta$, which is not accepting wrt $\varphi$. We can hence repeatedly construct another run of $\instan_\Delta$ over $\dagg_{\word[i:]}^\delta$ and mimic the construction of another part of $R$ in the same way as depicted. From the construction, $R$ is accepting meaning that $w\in\langof{\instancompl(\instan_\Delta, \aut)}$.
  \qed 
\end{proof}

\vspace{-0.0mm}
\section{Proofs for \cref{sec:instantiations}}\label{sec:label}
\vspace{-0.0mm}

\lemInstantTrueCorrect*

\begin{proof}
  In order to show that the subprocedure $\instantrue$ is correct for $\mytrue$, we need to show that for each word $w$ and every RRDAG $\dagg_\word^\Delta$ it holds that $\dagg_\word^\Delta$ is not accepting wrt $\mytrue$ iff there is an accepting $\Fin$-run of $\instantrue$ over $\dagg_\word^\Delta$. 
  We first prove this statement from left to right. Assume that $\dagg_\word^\Delta = (S_1,S_2,\ldots)$ is not accepting wrt $\mytrue$, i.e., there is no run $\rho = q_kq_{k+1}\ldots$ for $k \geq 0$ such that for every $i \geq k$ it holds that $q_i \in S_i$ and $q_{i+1} \in \Delta(q_i, w_i)$ and $\rho \models \mytrue$.
  Then, a~$\Fin$-run R of $\instantrue$ over $\dagg_\word^\Delta$ is a~sequence $(M_0,M_1,\ldots)$ such that $M_0 = I$ and for all $i$ it holds that if $M_i = \emptyset$, then $M_{i+1} = S_{i+1}$ and if $M_i \neq \emptyset$, then $M_{i+1} = \Delta(M_i, w_i)$. 
  Since all runs on the input automaton contain infinitely many $\taccgof{c}$-transitions, for all $k > 0$, there is some $j > k$ such that $M_j = \emptyset$ and $\breakempty(M_j)$ is true. The $\Fin$-run R is therefore accepting. 

  Now we prove the statement from right to left. Consider an accepting $\Fin$-run $R = (M_0, M_1, \ldots)$ of $\instantrue$ over $\dagg_\word^\Delta$. It holds that $\breakempty(M_i)$ is true for infinitely many $i$'s. That means that all sampled  runs eventually end when using $\Delta$ as a~transition function, because all runs contain infinitely many accepting states. The word $w$ is therefore not accepted by the input automaton and $\dagg_\word^\Delta$ is not accepting. 
  \qed
\end{proof}

\lemInstantInfCorrect*

\begin{proof}[Sketch]
  In order to show that the subprocedure $\instinf$ is correct, we need to show that for each word $w$ and every RRDAG $\dagg_\word^\Delta$ it holds that $\dagg_\word^\Delta$ is not accepting wrt $\Infof{\tacc{1}}$ iff there is an accepting $\Fin$-run of $\instinf$ over $\dagg_\word^\Delta$. 
  We begin with the proof of the statement from left to right. 
  Assume that $\dagg_\word^\Delta$ is not accepting wrt $\Infof{\tacc{1}}$. 
  There is either no run of $\dagg_\word^\Delta$ on $\word$ at all or all runs do not satisfy the formula. If there is no run of $\dagg_\word^\Delta$ on $\word$, then there is a~sequence $(M_0, M_1, \ldots)$ where $M_0 = \inits$ and $M_{j+1} = \Delta(M_j, a)$ for all $j\geq0$ such that there is some $i \geq 0$ such that $M_l = \emptyset$ for all $l \geq i$. The predicate $\breakempty(M_l)$ is true for all $l \geq i$, so it holds infinitely often, and there therefore exists an accepting run of $\instinf$ over $\dagg_\word^\Delta$. 
  Now assume that there is a~run of $\dagg_w^\Delta$ on $w$. Then, no matter from which point there are no transitions from $\taccgof{c}$, the condition $\accinfof{\tacc 1}$ does not hold for the particular run. With every transition $(f,i) \rightarrow (f',O',i')$ we sample all currently reachable states and then check that all runs from these states contain transitions from $\tacc 1$ only finitely often by modified Schewe's rank-based algorithm. The $O$-component is emptied infinitely often and there is therefore an accepting run of $\instinf$ over $\dagg_\word^\Delta$. 
  
  Now we prove the equivalence in the opposite direction. Assume that there is an accepting run of $\instinf$ over $\dagg_\word^\Delta$. There is therefore a~run where the $\breakempty$ predicate is true infinitely many times. 
  The first possible option is that $\breakempty(P)$ is true infinitely many times. That can happen only if there is no run on $\word$ and $\dagg_\word^\Delta$ is finite. If there is no such run, the formula is not satisfied and $\dagg_\word^\Delta$ is not accepting. 
  The second option is that $\breakempty((f,O,i))$ is true infinitely many times. That means that the formula $\accinfof{\tacc 1}$ does not hold for any run, no matter when the run stops containing transitions from $\taccgof{c}$. The formula is therefore not satisfied in any run and $\dagg_\word^\Delta$ is not accepting. 
  \qed
\end{proof}

\vspace{-0.0mm}
\section{Example of Complementation of Rabin Automata}\label{acc:rabin-example}
\vspace{-0.0mm}

We give an example of the complementation of Rabin automata using subprocedure
$\instinf$ in \cref{fig:rabin-ex}.

\begin{figure}[t]
  \begin{subfigure}[b]{0.3\linewidth}
    \begin{center}
      \scalebox{1.0}{
        \begin{tikzpicture}[>=stealth',shorten >=0pt,scale=0.8,transform shape,initial text={},node distance=2cm]
    \tikzstyle{every state}=[inner sep=3pt,minimum size=5pt]
    \tikzstyle{empty}=[]
    \tikzstyle{initstate}=[fill=yellow!30]
    \tikzstyle{uberstate}=[
      rounded corners,draw,anchor=base,
      rectangle split,rectangle split horizontal,rectangle split parts=3,
      rectangle split part align=base,
      rectangle split part fill={black!20, blue!30, green!30}]
    \newcommand{\ustate}[3]{$#1$\nodepart{two}$#2$\nodepart{three}$#3$}
  
    \node[state,initial] (p) at (0,0) {$p$};
    \node[state,right of=p] (q) {$q$};

    \path[->]
        (p) edge[loop above] node {$a$} (p)
        (p) edge node[above] {$a$} (q)
        (q) edge[loop below] pic{l=$b$} pic[anchor=center] {acc=0} (q)
        (q) edge[loop above] pic{l=$c$} pic[anchor=center] {acc=1} (q);
  \end{tikzpicture}

  
      }
    \caption{Example of a Rabin automaton with the acceptance condition $\accfinof{\tacc 0}\wedge \accinfof{\tacc 1}$.}
    \end{center}
    \label{label}
  \end{subfigure}
  ~
  \begin{subfigure}[b]{0.66\linewidth}
  \begin{center}
    \scalebox{0.6}{
      \begin{tikzpicture}[>=stealth',shorten >=0pt,scale=0.8,transform shape,initial text={},node distance=4cm]
    \tikzstyle{every state}=[inner sep=3pt,minimum size=5pt]
    \tikzstyle{empty}=[]
    \tikzstyle{initstate}=[fill=yellow!30]
    \tikzstyle{uberstate}=[
      rounded corners,draw,anchor=base,
      rectangle split,rectangle split horizontal,rectangle split parts=3,
      rectangle split part align=base,
      rectangle split part fill={black!20, blue!30, green!30}]
    \newcommand{\ustate}[3]{$#1$\nodepart{two}$#2$\nodepart{three}$#3$}
  
    \node[uberstate,initial] (p) at (0,0) {\ustate{\{p\}}{\{p\}}{\{p\}}};
    \node[uberstate,right of=p] (emp) {\ustate{\emptyset}{\emptyset}{\emptyset}};
    \node[uberstate,below of=p,node distance=1.5cm] (pq) {\ustate{\{p,q\}}{\{p,q\}}{\{p,q\}}};
    \node[uberstate,below of=emp,node distance=1.5cm] (qemp) {\ustate{\{q\}}{\{q\}}{\emptyset}};
    \node[uberstate,right of=qemp] (q) {\ustate{\{q\}}{\{q\}}{\{q\}}};

    \node[uberstate,below of=pq,node distance=1.5cm,xshift=20mm] (m1) {\ustate{\{p,q\}}{\{p,q\}}{\{p{:}1,q{:}1\}, 0}};
    \node[uberstate,right of=m1,node distance=6cm] (m2) {\ustate{\{p,q\}}{\{p,q\}}{\{p{:}1,q{:}1\}, \emptyset, 0}};

    \node[uberstate,below of=m1,node distance=1.5cm] (m3) {\ustate{\{p,q\}}{\{p,q\}}{\{p{:}1,q{:}0\}, 0}};
    \node[uberstate,below of=m2,node distance=1.5cm] (m4) {\ustate{\{p,q\}}{\{p,q\}}{\{p{:}1,q{:}0\}, \{q\}, 0}};
    \node[uberstate,right of=m4,node distance=6cm] (m5) {\ustate{\{p,q\}}{\{p,q\}}{\{p{:}1,q{:}0\}, \emptyset, 0}};

    \node[uberstate,below of=m3,node distance=1.5cm] (m6) {\ustate{\{p,q\}}{\{p,q\}}{\{p{:}3,q{:}1\}, 0}};
    \node[uberstate,below of=m4,node distance=1.5cm] (m7) {\ustate{\{p,q\}}{\{p,q\}}{\{p{:}3,q{:}1\}, \emptyset, 0}};
    \node[uberstate,below of=m6,node distance=1.5cm] (m8) {\ustate{\{p,q\}}{\{p,q\}}{\{p{:}3,q{:}1\}, 2}};
    \node[uberstate,below of=m7,node distance=1.5cm] (m9) {\ustate{\{p,q\}}{\{p,q\}}{\{p{:}3,q{:}1\}, \emptyset, 2}};
    
    \path[->]
        (p) edge node[left] {$a$} (pq)
        (p) edge pic {acc=0} pic[auto] {l={$b,c$}} (emp)
        (emp) edge[loop right] pic{l={$a,b,c$}} pic[anchor=center] {acc=0} (emp)
        (pq.12) edge[out=120,in=60,loop,distance=6mm] node[auto] {$a$} (pq.11)
        (pq) edge node[above] {$b$} (qemp)
        (qemp) edge[out=10,in=170] node[above] {$c$} (q)
        (q) edge[in=-10,out=190] pic {acc=0} pic[auto] {l=$b$} (qemp)
        (qemp.30) edge[out=120,in=60,loop,distance=6mm] pic {acc=0} pic[auto] {l=$b$} (qemp.29)
        (qemp) edge pic {acc=0} pic[auto] {l=$a$} (emp)
        (q) edge pic {acc=0} pic[above right] {l=$a$} (emp)
        (q.30) edge[out=120,in=60,loop,distance=6mm] node[auto] {$c$} (q.29)
        (pq) edge node {$a$} (m1)
        (m1.12) edge[out=120,in=60,loop,distance=6mm] node[auto] {$a$} (m1.11)
        (m3.12) edge[out=120,in=60,loop,distance=6mm] node[auto] {$a$} (m3.11)
        (m6.12) edge[out=120,in=60,loop,distance=6mm] node[auto] {$a$} (m6.11)
        (m8.12) edge[out=120,in=60,loop,distance=6mm] node[auto] {$a$} (m8.11)
        (m8) edge[out=10,in=170] pic {acc=0} pic[auto] {l=$a$} (m9)
        (m9) edge[in=-10,out=190] node[above] {$a$} (m8)
        (pq) edge[out=220,in=170] node[right] {$a$} (m3)
        (pq) edge[out=210,in=170] node[right] {$a$} (m6)
        (pq) edge[out=200,in=170] node[right] {$a$} (m8)
        (m1) edge[bend left] node[left] {$a$} (m3)
        (m3) edge[bend left] node[left] {$a$} (m1)
        (m6) edge[out=10,in=170] pic {acc=0} pic[above] {l=$a$} (m7)
        (m7) edge[in=-10,out=190] node[above] {$a$} (m6)
        (m1) edge[out=10,in=170] pic {acc=0} pic[auto] {l=$a$} (m2)
        (m2) edge[in=-10,out=190] node[above] {$a$} (m1)
        (m1) edge node[above] {$a$} (m4)
        (m4) edge pic[pos=0.3] {acc=0} pic[pos=0.3,left] {l=$a$} (m2)
        (m3) edge node[above] {$a$} (m4)
        (m4) edge pic {acc=0} pic[below] {l=$a$} (m5)
        (m5) edge node[pos=0.4,above] {$a$} (m1)
        (m5) edge[out=190,in=-10] node[above] {$a$} (m3);
  \end{tikzpicture}
    }
  \caption{The resulting complementary automaton with the acceptance condition $\accinfof{\tacc 0}$. The macrostates are of the form $S$ (grey), $P$ (blue), $\mst$ (green).}
  \end{center}
  \label{label}
  \end{subfigure}
  \caption{Example of the $\instancompl$ instantiated with $\instinf$ for complementation of automata 
  with the acceptance condition containing a single Rabin pair.
   }
  \label{fig:rabin-ex}
\end{figure}

\vspace{-0.0mm}
\section{Generalized Rabin Automata}\label{app:grabin}
\vspace{-0.0mm}

In this section, we give a~subprocedure $\instbinf$ for $\bigwedge_{j = 1}^n\accinfof{\taccj}$ of the modular procedure with an 
algorithm allowing to complement generalized Rabin automata automata with a single generalized Rabin pair.
The macrostates are given as $\minbinf = 2^\states \cup (\cT\times \{ 0, 2, \dots, 2n - 2 \} \times \levelmodels ) \cup 
(\cT\times 2^\states \times \{ 0, 2, \dots, 2n - 2 \} \times \levelmodels)$ where $|\states| = n$ and $\minbinf_0 = \{ I \}$.
The components are then defined as follows:

\begin{minipage}{\textwidth}\scriptsize
  \begin{multicols}{2}
\begin{itemize}
  \item $(f', O', i', \mu') \in \succactbinf_\Delta(P, a, (f, O, i, \mu))$ iff
  \begin{itemize}
    \item $f \finsucca{\Delta,\mu,\mu'} f'$ and $\rankof f = \rankof{f'}$,
    \item $\domof{f'} = P$,
    \item $O \neq \emptyset$ 
    \item $i' = i$, 
    \item $O' = \delta(O, a) \cap f'^{-1}(i)$
  \end{itemize}

  \item $(f', i', \mu') \in \succactbinf_\Delta(P, a, (f, O, i, \mu))$ iff
  \begin{itemize}
    \item $f \finsucca{\Delta,\mu,\mu'} f'$ and $\rankof f = \rankof{f'}$,
    \item $O = \emptyset$ 
    \item $i' = (i+2) \mod (\rankof{f'} + 1)$
  \end{itemize}

  \item $P' \in \succactbinf_\Delta(P, a, P)$ iff
  \begin{itemize}
    \item $P' = P$
  \end{itemize}

  \item $P' \in \succtrackbinf_\Delta(P, a, P)$ iff
  \begin{itemize}
    \item $P' = P$
  \end{itemize}

  \item $(f', i',\mu') \in \succtrackbinf_\Delta(P, a)$ iff
  \begin{itemize}
    \item $f'$ is $(P,\mu')$-tight
    \item $\mu'\in \levelmodels$,
    \item $i'= 0$
  \end{itemize}

  \item $\{(f', O', i', \mu'), (f', i', \mu')\} \subseteq \succtrackbinf_\Delta(P, a, (f, i, \mu))$ iff
  \begin{itemize}
    \item $f \finsucca{\Delta,\mu,\mu'} f'$ and $\rankof f = \rankof{f'}$,
    \item $O' = f'^{-1}(i)$,
    \item $\mu' \in \levelmodels$,
    \item $i'= i$
  \end{itemize}
  \item $\breakemptybinf((f,O,i,\mu)) \Longleftrightarrow O = \emptyset$
  \item $\breakemptybinf(P) \Longleftrightarrow P = \emptyset$
  \item $\neg \breakemptyinf((f,i,\mu))$
\end{itemize}
\end{multicols}
\end{minipage}
\vspace{2mm}

\begin{lemma}
  The subprocedure $\instbinf = (\mstbinf, \mstbinf_0, \succactbinf_\Delta, \succtrackbinf_\Delta, \breakemptybinf)$ for $\bigwedge_{j = 1}^n \accinfof{\taccj}$ is correct.
\end{lemma}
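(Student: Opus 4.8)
The plan is to unfold the definition of a correct subprocedure and prove, for every word $\word$ and every RRDAG $\gendag = (S_0, S_1, \dots)$ over $\word$, that $\gendag$ is not accepting wrt $\varphi = \bigwedge_{j} \accinfof{\taccj}$ iff $\instbinf$ has an accepting $\finrun$ over $\gendag$. The observation driving the whole argument is that $\instbinf$ is exactly the $\accinf$-TELA complement construction of \cref{subsec:infela}, specialized to the generalized \buchi condition $\varphi$ and re-read over the restricted transition relation $\Delta$: active macrostates $(f,O,i,\mu)$ and tracking macrostates $(f,i,\mu)$ carry a $\mu$-tight level ranking together with a level model $\mu$ consistent wrt $f$, exactly as in $\alginfela$, while the subset macrostates in $2^\states$ play the role of the non-tight $\states_1$ part. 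Since $\overline\varphi = \bigvee_j \taccj$, the minimal models $\modelsminof\varphi$ are the singletons $\{\taccj\}$, so breaking $\varphi$ along a path means making some single colour $\taccj$ occur finitely often. I would therefore mirror the two directions of the correctness theorem for $\alginfela$ and of the correctness lemma for $\instinf$ from \cref{sec:rabin}, the only genuine difference being that the labelling of \cref{alg:label-run-dag} is run on the run DAG induced by $\gendag$ under $\Delta$ rather than on the full run DAG under $\delta$.

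For the ($\Rightarrow$) direction I would assume $\gendag$ is not accepting wrt $\varphi$, i.e.\ no run $\rho = q_k q_{k+1}\dots$ with $q_i \in S_i$ using $\Delta$ satisfies $\varphi$, and apply \cref{alg:label-run-dag} to the DAG induced by $\gendag$. If that DAG is finite, the subset transitions $\succactbinf_\Delta(P,a,P)=\{P\}$ and $\succtrackbinf_\Delta(P,a,P)=\{P\}$ eventually reach $P=\emptyset$, so $\breakemptybinf(P)$ holds infinitely often and the $\finrun$ is accepting. Otherwise, by \cref{lem:inf-label,lem:rank} the labelling yields a tight ranking $r$ and a model assignment $m$; I encode them into a $\finrun$ with $f_i(q)=r((q,i))$ and $\mu_i(q)=m((q,i))$, guessing tightness via the tracking-to-active transition of $\succtrackbinf_\Delta$ and then running in active mode. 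Because $r$ is tight and the $i$-component cyclically visits every even rank, the breakpoint $O$ (tracking the states of even rank $i$) is emptied infinitely often, exactly as in the $(\supseteq)$ part of the $\alginfela$ theorem, so $\breakemptybinf((f,O,i,\mu))$ holds infinitely often.

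For the ($\Leftarrow$) direction I would take an accepting $\finrun$ of $\instbinf$ over $\gendag$ and split on which macrostate shape witnesses the infinitely many breakpoints. If $\breakemptybinf(P)$, i.e.\ $P=\emptyset$, holds infinitely often, the induced DAG has no infinite run and $\gendag$ is trivially not accepting wrt $\varphi$. If instead $\breakemptybinf((f,O,i,\mu))$, i.e.\ $O=\emptyset$, holds infinitely often, I argue by contradiction: assume $\gendag$ is accepting wrt $\varphi$ via a run $\rho$. Since $\rho$ sees every colour $\taccj$ infinitely often, no $M \in \modelsminof\varphi$ is satisfied along $\rho$, so conditions (iib) and (iic) of the consistent-successor relation $\finsucca{\Delta,\mu,\mu'}$ force the ranks along $\rho$ to become eventually even (and, being non-increasing, to stabilize); once the $i$-component matches that even rank, $\rho$ is trapped forever in the non-empty $O$-component, contradicting $O=\emptyset$ infinitely often. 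This is precisely the $(\subseteq)$ argument of the $\alginfela$ theorem transported to $\Delta$, so $\gendag$ is not accepting wrt $\varphi$.

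The main obstacle I anticipate is not either direction individually but the bookkeeping needed to transport the labelling lemmas of \cref{sec:infelalabelling} from standard run DAGs to RRDAGs. An RRDAG may contain fresh vertices unconnected to the root at arbitrary levels and uses the restricted $\Delta$ (so transitions carrying $\taccgof c$ are absent), yet \cref{lem:ranking-termination,lem:inf-label,lem:rank} were stated for the full run DAG; I would re-justify that they apply to the subgraph induced by $\gendag$, using that this subgraph is itself a run DAG of $\Delta$ and that the accepting-run notion of an RRDAG already allows a run to start at an arbitrary level $k$. The second delicate point is verifying that the $\succactbinf_\Delta$ and $\succtrackbinf_\Delta$ rules implement $\finsucca{\Delta,\mu,\mu'}$ faithfully, in particular that the interaction of the level-model constraints (iib)/(iic) with the $i$-rotation and with the tracking phase, which waits for freshly sampled runs to become $(P,\mu')$-tight before switching to active mode, exactly reproduces the guarantees established for $\fsucc$ in \cref{subsec:infela}.
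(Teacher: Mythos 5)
Your proposal is correct and follows essentially the same route as the paper's own (sketched) proof: both directions reduce to the correctness of the rank-based $\accinf$-TELA machinery of \cref{subsec:infela} applied over the restricted relation $\Delta$, with the finite-DAG case discharged by $\breakemptybinf(P)$ and the infinite case by the tight ranking and the cyclically emptied $O$-breakpoint. If anything, you are more explicit than the paper about the two points its sketch glosses over --- transporting the labelling lemmas from run DAGs to RRDAGs and checking that $\succactbinf_\Delta$/$\succtrackbinf_\Delta$ faithfully implement $\fsucc$ --- which is a strength rather than a divergence.
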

\begin{proof}[Sketch]
  In order to show that the subprocedure $\instbinf$ is correct, we need to show that for each word $\word$ and every RRDAG $\dagg_\word^\delta$ over $w$ it holds that $\dagg_\word^\delta$ is not accepting wrt $\varphi$ iff there exists an accepting run of $\instbinf$ over $\dagg_\word^\Delta$. 
  We begin with the proof of the equivalence from left to right. 
  Assume that $\dagg_\word^\Delta$ is not accepting. 
  There is either no run on $\word$ at all or all runs do not satisfy the formula. If there is no run of $\dagg_\word^\Delta$ on $\word$, there is a~sequence $(M_0, M_1, \ldots)$ where $M_0 = \inits$ and $M_{j+1} = \Delta(M_j, a)$ for all $j\geq0$ such that there is some $i \geq 0$ such that $M_l = \emptyset$ for all $l \geq i$. The predicate $\breakempty(M_l)$ is true for all $l \geq i$ and there therefore exists an accepting run of $\instan_\delta$ over $\dagg_\word^\Delta$. 
  Now assume that there is a~run of $\dagg_w^\Delta$ on $w$. Then, no matter from which point there are no transitions from $\taccgof{c}$, the condition $\bigwedge_{j = 1}^n \accinfof{\taccj}$ does not hold for the particular run. With every transition $(f,i, \mu) \rightarrow (f',O',i', \mu')$ we sample all currently reachable states and then check that the condition $\bigwedge_{j = 1}^n \accinfof{\taccj}$ does not hold for any run from these states by modified Schewe's rank-based algorithm for GBAs described in \cref{subsec:infela}. The $O$-component is emptied infinitely often and there is therefore an accepting run of $\instbinf$ over $\dagg_\word^\Delta$. 
  
  Now we prove the equivalence in the opposite direction. Assume that there is an accepting run of $\instbinf$ over $\dagg_\word^\Delta$. There is therefore a~run where the $\breakempty$ predicate is true infinitely many times. 
  The first possible option is that \\ $\breakempty(P)$ is true infinitely many times. That can happen only if there is no run on $\word$ and $\dagg_\word^\Delta$ is finite. If there is no such run, the formula is not satisfied and $\dagg_\word^\Delta$ is not accepting. 
  The second option is that $\breakempty((f,O,i))$ is true infinitely many times. That means that the formula $\bigwedge_{j = 1}^n \accinfof{\taccj}$ does not hold for any run, no matter when the run stops containing transitions from $\taccgof{c}$. The formula is therefore not satisfied in any run and $\dagg_\word^\Delta$ is not accepting. 
  \qed
\end{proof}

\begin{lemma}
  \label{lem:generalized_rabin}
  Let $\aut$ be a generalized Rabin automaton with one generalized Rabin pair with $\ell$ $\accinf$s. Then, the 
  complemented GBA has $\bigO(\ell^{n}\tight(n+1))$ states. 
\end{lemma}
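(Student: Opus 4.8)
The plan is to follow the macrostate-counting strategy already used for the single Rabin pair subprocedure (the lemma showing $|\instancompl(\instinf, \aut)| \in \bigOof{\tight(n+1)}$), enriched with the orthogonal model-counting argument from the proof of \cref{thm:inf-tela-complexity}. By \cref{thm:mod_complexity} the states of $\instancompl(\instbinf, \aut)$ have the form $(S, P, \mst)$ with $\mst \in \minbinf$; the subset-style macrostates (those with $\mst \in 2^\states$) number at most $\bigOof{2^n}$ and are dominated, so it suffices to count the macrostates of the form $(S, P, f, O, i, \mu)$.

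First I would encode each such macrostate uniquely as a pair $(h, i)$, where $h\colon \states \to \{-3, \ldots, 2n-1\} \times \modelsminof \varphi$ has rank-component $-1$ on $O$, $-2$ on $\states \setminus S$, $-3$ on $S \setminus P$, and $f(q)$ otherwise, and model-component $\mu(q)$. This is exactly the encoding from the $\instinf$ complexity proof, carrying the level model $\mu$ along in a second coordinate. Since $\varphi = \bigwedge_{j=1}^{\ell}\accinfof{\taccj}$, its complement is $\bigvee_{j=1}^{\ell}\taccj$, so $\modelsminof \varphi = \{\{\taccj\} \mid 1 \le j \le \ell\}$ and $|\modelsminof \varphi| = \ell$.

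For a fixed $i$ I would then split the encodings into the (at most eight) groups $g_M$ according to $\imgof{h}_0 \cap \{-3,-2,-1\}$, where $\imgof{h}_0$ denotes the set of rank-components of $\imgof h$, exactly as in the $\instinf$ proof. The rank-shuffling arguments there (incrementing all non-negative ranks by two, or by four in the group where both $-1$ and $-2$ occur, and filling the freed odd position) show that the rank-component of $h$ ranges over tight rankings in $\cT(n)$, so the number of rank-parts in each group is $\bigOof{\tight(n)}$. The model-component is orthogonal: each of the $n$ states is assigned one of the $\ell$ minimal models (and even-rank states are forced to $\lexof{\overline\varphi}$, still one of these $\ell$), contributing an independent factor of at most $\ell^n$. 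Hence $|g_M| \in \bigOof{\ell^n \cdot \tight(n)}$ for every $M$, and for fixed $i$ the total count is $\bigOof{\ell^n \cdot \tight(n)}$.

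Finally, summing over the $\bigOof{n}$ possible values of $i$ and using $\bigOof{n \cdot \tight(n)} = \bigOof{\tight(n+1)}$~\cite{Schewe09} gives $\bigOof{\ell^n \cdot \tight(n+1)}$, from which the explicit form $\bigOof{n\ell^n (0.76n)^n}$ follows by the approximation of $\tight(n)$. The main obstacle I anticipate is bookkeeping rather than conceptual: verifying that the eight-way rank-shuffling from the single-pair proof still yields genuinely tight rankings once the model coordinate is carried along, and confirming that the consistency requirement of $\mu$ with respect to $f$ introduces no dependency that would spoil the clean $\ell^n$ upper bound on the model-part.
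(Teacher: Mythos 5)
Your proposal matches the paper's own proof essentially step for step: the same encoding of macrostates $(S,P,f,O,i,\mu)$ as pairs $(h,i)$ with rank-components in $\{-3,\dots,2n-1\}$ and a model coordinate, the same eight-way case split on $\imgof{h}_0 \cap \{-3,-2,-1\}$ with the rank-shuffling arguments yielding $\bigOof{\tight(n)}$ per group, the independent $\ell^n$ factor from the level models (correctly justified by $|\modelsminof{\varphi}| = \ell$ for a conjunction of $\ell$ $\accinf$ atoms), and the final summation over $i$ via $\bigOof{n\cdot\tight(n)} = \bigOof{\tight(n+1)}$. The argument is correct and is the same approach as the paper's.
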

\begin{proof}
  It suffices to count the number of macrostates of the form $(S, P, f, O, i, \mu)$. Consider a~macrostate $(S, P, f, O, i, \mu)$. 
We uniquely encode each macrostate as $(h, i)$ where $h \colon \states \rightarrow \{ -3, -2, -1, \ldots, 2n-1 \} \times \modelsof \acccond$ for $n = |\states|$ is defined as follows:
 \begin{equation}
  h(q) = \begin{cases}
    (-1, \mu) & \text{if } q \in O, \\
    (-2, \mu) & \text{if } q \in \states\setminus S, \\
    (-3, \mu) &\text{if } q \in S \setminus P, \text{and} \\ 
    (f(q), \mu) & \text{otherwise}.
  \end{cases}
\end{equation}
We compute the number of encodings $h$ for a~fixed $i$. We divide all encodings into groups according to the set $img(h)_0 \cap \{ -3, -2, -1 \}$ where $img(h)_0$ denotes the set of first elements of the pairs in $img(h)$. We show that for each of the (at most 8) groups we can obtain the bound $\bigOof{l^n \cdot \mathit{tight}(n)}$. 
Each of the group is denoted by $g_M$ with $M \subseteq \{-2, -1\}$,
  i.e., $g_M = \{ h\colon \states \to \{-3, -2, \ldots, 2n-1\} \times \modelsof \acccond \mid M = \img(h)_0 \cap \{-3, -2, -1\}\}$.
For $h(q) = (m, \mu)$, we denote $m$ by $h(q)(0)$ and $\mu$ by $h(q)(1)$. 

  \begin{enumerate}
    \item[$g_\emptyset$:] from the definition, $f$~is $\mu$-tight. The level model $\mu$ is of the form $\mu \colon \states \rightarrow \modelsof \acccond$, so there are $l$ possible assignments for every state from $\states$. The number of level models is therefore $l^n$ and $|g_\emptyset| = \bigO(l^n \cdot \tight(n))$. 
    
    \item[$g_{\{-1\}}$:] since there is at least one state $q$ with $h(q)(0) = -1$,
      this means that $q \in O$ so~$q$ has an even rank.
      As a~consequence, at least one of the positive odd ranks of~$h$ will not
      be taken, so we can infer that $h\colon \states \to \{-1, \ldots, 2n-3\} \times \modelsof \acccond$.
      We can therefore uniquely map~$h$ to a~mapping~$h'$ by incrementing all
      ranks of~$h$ by two, so $h'\colon \states \to \{0, \ldots, 2n-1\} \times \modelsof \acccond$.
      But then $h' \in \cT(n)$ and the number of all level models is $l^n$, so $|g_{\{-1\}}| \in \bigO(l^n \cdot \tight(n))$.

    \item[$g_{\{-2,-1\}}$:] via the same reasoning as for $g_{\{-1\}}$ we get
      that $|g_{\{-2,-1\}}| \in \bigO(l^n \cdot \tight(n))$.

    \item[$g_{\{-2\}}$:] the reasoning is similar to the one for  $g_{\{-1\}}$,
      with the exception that now, we know that there is a~state~$q \in Q
      \setminus S$, which is, according to the definition of a~ranking, assigned
      the rank~$0$.
      This means that one positive odd rank of~$h$ is, again, not taken, so we
      increment all non-negative ranks of~$h$ by two and map all states in $Q
      \setminus S$ to~$1$, obtaining a~tight ranking $h' \in \cT(n)$.
      The number of level models is $l^n$, 
      therefore, $|g_{\{-2\}}| \in \bigO(l^n \cdot \tight(n))$.

      \item[$g_{\{-3\}}$:] the reasoning is similar to the one for $g_{\{-1\}}$, with the exception that now we know that there is some state $q \in S \setminus P$ such that its rank is, according to the definition, $0$. Therefore, we increment all non-negative ranks of $h$ by two and map the states in $P \setminus \states$ to 1, obtaining a~tight ranking $h' \in \cT(n)$. For $l^n$ possible level models, it holds that $|g_{\{-3\}}| \in \bigOof{l^n \cdot \tight(n)}$. 
      
      \item[$g_{\{-3,-2\}}$, $g_{\{-3, -1\}}$:]
      similarly as for $g_{\{-2\}}$, we increment all non-negative ranks of~$h$
      by two and set $h'(q)(0) = 0$ if $h(q)(0) = -3$ and $h'(q)(0) = 1$ if $h(q)(0) = -2$
      (resp.\ if $h(q)(0) = -1$).
      Then $h' \in \cT(n)$ and so for $l^n$ level models it holds that $|g_{\{-3,-2\}}| \in \bigO(l^n \cdot \tight(n))$ and
      $|g_{\{-3, -1\}}| \in \bigO(l^n \cdot \tight(n))$.

    \item[$g_{\{-3, -2, -1\}}$:] in this case, we know that there is at least
      one state $q_1 \in O$ and at least one state $q_2 \in \states \setminus S$.
      Therefore, there will be at least two odd positions not taken in~$h$, so
      we can infer that $h\colon \{-3, \ldots 2n-5\}$.
      We create $h'$ by incrementing all ranks in~$h$ by \emph{four}; in this
      way, we obtain a~tight ranking $h'\colon \states \to \{0, \ldots, 2n-1\}$, so
      for $l^n$ level models it holds that $|g_{\{-3, -2, -1\}}| \in \bigO(l^n \cdot \tight(n))$.
  \end{enumerate}

  Since the size of all groups is bounded by $\bigO(l^n \cdot \tight(n))$, for
  a~fixed~$i$, the total number of these encodings is still
  $\bigO(k^n \cdot \tight(n))$.
  When we sum the encodings for all possible~$i$'s, we obtain that the number is
  bounded by~$\bigO(l^n \cdot \tight(n+1))$, since $\bigOof{n \cdot \tight(n)} = \bigOof{\tight(n+1)}$.
  \qed
\end{proof}

\begin{theorem}
  Let $\aut$ be a generalized Rabin automaton with $k$ generalized Rabin pairs and each 
  pair has at most $\ell$ $\accinf$s. Then, the 
  complemented GBA has $\bigO(\ell^{kn}\tight(n+1)^k)$ states. 
\end{theorem}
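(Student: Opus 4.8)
The plan is to reduce the $k$-pair case to the single-pair construction of \cref{lem:generalized_rabin} by a product (intersection) construction, exactly mirroring the treatment of ordinary Rabin automata in \cref{theorem:rabin}. First I would observe that a generalized Rabin condition with $k$ pairs is a disjunction $\acccond = \bigvee_{j=1}^k \psi_j$, where each disjunct $\psi_j = \accfinof{\taccBj} \land \bigwedge_{0 \leq \ell < m_j} \accinfof{\taccGjl}$ is a single generalized Rabin pair with at most~$\ell$ $\accinf$ atoms. Writing $\autof j$ for the automaton obtained from~$\aut$ by keeping its structure but replacing the acceptance condition with~$\psi_j$, the semantics of disjunction on the set of infinitely-occurring colours of a run yields $\langof \aut = \bigcup_{j=1}^k \langof{\autof j}$. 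By the de Morgan laws for languages, the complement therefore factors as
\begin{equation*}
  \Sigma^\omega \setminus \langof \aut = \bigcap_{j=1}^k \bigl(\Sigma^\omega \setminus \langof{\autof j}\bigr).
\end{equation*}

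Next, for each~$j$ I would invoke \cref{lem:generalized_rabin} to complement the single-pair automaton~$\autof j$ into a~BA~$\complof{\autof j}$ with $\bigOof{\ell^n \tight(n+1)}$ states satisfying $\langof{\complof{\autof j}} = \Sigma^\omega \setminus \langof{\autof j}$ (the bound uses that pair~$j$ has at most~$\ell$ $\accinf$s). It then remains to realise the $k$-fold intersection. Since each $\complof{\autof j}$ is an ordinary BA, I would apply the standard product construction for the intersection of Büchi automata: the product of the~$k$ automata is a~GBA whose state set is the Cartesian product of the component state sets and whose acceptance condition is the conjunction of the~$k$ individual Büchi conditions, contributing one colour per component. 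This GBA has exactly~$k$ colours and accepts $\bigcap_{j} \langof{\complof{\autof j}} = \Sigma^\omega \setminus \langof \aut$, as required. Because the state count multiplies, the number of states is
\begin{equation*}
  \prod_{j=1}^k \bigOof{\ell^n \tight(n+1)} = \bigOof{\bigl(\ell^n \tight(n+1)\bigr)^k} = \bigOof{\ell^{kn}\tight(n+1)^k}.
\end{equation*}

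I do not expect a genuine obstacle, since the argument is structurally identical to the Rabin case. The only point that deserves care is the justification that an intersection of~$k$ Büchi automata yields a GBA with~$k$ colours at a merely multiplicative state cost: this relies on the fact that GBA acceptance can independently track acceptance in each factor, so no breakpoint counter or further determinisation is needed and the product introduces no overhead beyond the product of the state counts. With that in hand, the final equality is routine arithmetic, matching the computation in \cref{theorem:rabin}.
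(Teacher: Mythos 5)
Your proposal is correct and follows essentially the same route as the paper: complement each generalized Rabin pair separately via the single-pair construction and then take the product of the resulting automata to obtain a $k$-colour GBA, with the state count multiplying to $\bigOof{\ell^{kn}\tight(n+1)^k}$. The extra detail you supply (the disjunctive decomposition of the acceptance condition and the justification that the B\"{u}chi intersection product adds only a multiplicative cost) merely makes explicit what the paper leaves implicit.
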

\begin{proof}
  Proof follows directly from~\cref{lem:generalized_rabin}. In order to complement a~generalized Rabin automaton with $k$ generalized Rabin pairs with at most $\ell$ $\accinf$s, we construct a~complementary automaton for each generalized Rabin pair and then we make a~product of these automata and obtain a~GBA accepting the complement of the original generalized Rabin automaton. 
  \qed
\end{proof}

\end{document}